\documentclass[aps,prl,twocolumn,superscriptaddress,longbibliography,10pt,footinbib]{revtex4-2}
\usepackage[T1]{fontenc}
\usepackage{amsmath,amssymb,amsfonts,amsthm}
\usepackage{graphicx}
\usepackage{placeins}
\usepackage{booktabs}
\usepackage{microtype}
\usepackage{enumitem}
\usepackage{bm}
\usepackage{rotating}
\usepackage{pbox}
\usepackage{array}
\usepackage{mathtools}
\usepackage{leftidx}
\usepackage{lmodern}
\usepackage[normalem]{ulem}
\usepackage{braket}
\usepackage{tensor}
\usepackage[dvipsnames]{xcolor}
\usepackage{footnote}
\usepackage{setspace}
\usepackage{CJKutf8}
\usepackage{dsfont}
\usepackage{mathrsfs}
\usepackage[many]{tcolorbox}
\usepackage{palatino}
\usepackage{tikz}
\usetikzlibrary{arrows.meta, positioning, calc, fit, backgrounds}
\definecolor{agentaccent}{HTML}{2F6FAF}
\tikzset{
  agentbox/.style    = {draw=agentaccent, line width=0.6pt, fill=agentaccent!12},
  humanbox/.style    = {draw=black, fill=black!22},
  resourcebox/.style = {draw=black, dashed, fill=black!2},
  layerbox/.style    = {draw=black, fill=black!4},
}
\definecolor{bpVerified}{HTML}{008000}
\definecolor{bpProofDone}{HTML}{9CEC8B}
\definecolor{bpChainDone}{HTML}{1CAC78}
\usepackage[colorlinks,citecolor=blue,linkcolor=blue,urlcolor=blue]{hyperref}

\DeclareUrlCommand\leanid{\urlstyle{tt}}

\usepackage{listings}
\usepackage{textcomp}
\usepackage{upquote}
\lstdefinestyle{agentprompt}{%
  basicstyle=\ttfamily\footnotesize,
  breaklines=true,
  breakindent=0pt,
  breakautoindent=false,
  columns=fullflexible,
  keepspaces=true,
  upquote=true,
  showstringspaces=false,
  xleftmargin=0pt,
  aboveskip=0.4\baselineskip,
  belowskip=0.8\baselineskip,
  literate=%
    {—}{{\textrm{\textemdash}}}1
    {–}{{\textrm{\textendash}}}1
    {→}{{$\rightarrow$}}1
    {↔}{{$\leftrightarrow$}}1
    {∀}{{$\forall$}}1
    {α}{{$\alpha$}}1
    {β}{{$\beta$}}1
    {λ}{{$\lambda$}}1
    {·}{{$\cdot$}}1
    {≤}{{$\leq$}}1
    {ℕ}{{$\mathbb{N}$}}1
    {ℝ}{{$\mathbb{R}$}}1
    {ℤ}{{$\mathbb{Z}$}}1
    {⟨}{{$\langle$}}1
    {⟩}{{$\rangle$}}1
    {ₗ}{{$_{\mathrm{l}}$}}1
    {…}{{$\ldots$}}1
    {Č}{{\v{C}}}1
}

\usepackage[capitalize]{cleveref}
\crefname{figure}{Fig.}{Figs.}
\crefname{equation}{Eq.}{Eqs.}
\crefname{section}{Sec.}{Secs.}
\crefname{subsection}{Sec.}{Secs.}
\crefname{appendix}{App.}{Apps.}
\crefname{table}{Table}{Tables}
\Crefname{figure}{Fig.}{Figs.}
\Crefname{equation}{Eq.}{Eqs.}
\Crefname{section}{Sec.}{Secs.}
\Crefname{subsection}{Sec.}{Secs.}
\Crefname{appendix}{App.}{Apps.}
\Crefname{table}{Table}{Tables}

\newcommand{\nLOC}{62{,}000}
\newcommand{\nFiles}{233}
\newcommand{\nDecls}{2{,}300}
\newcommand{\nLOCfull}{227{,}000}
\newcommand{\nFilesfull}{705}
\newcommand{\nChapters}{12}

\newcommand{\nReviewRounds}{six}
\newcommand{\nPagesBluePrint}{150}

\newcommand{\nCost}{20{,}206}

\newcommand{\nCostPerKLOC}{89}
\newcommand{\nCostFT}{5{,}548}
\newcommand{\nCodexEq}{1{,}594}
\newcommand{\costMdlClaudeOpusFourSix}{6{,}393}
\newcommand{\costMdlGPTFiveFive}{4{,}567}
\newcommand{\costMdlGPTFiveFour}{3{,}484}
\newcommand{\costMdlClaudeOpusFourSeven}{2{,}760}
\newcommand{\costMdlGPTFiveTwo}{1{,}318}
\newcommand{\costMdlClaudeSonnetFourSix}{768}
\newcommand{\costMdlClaudeOpusFourEight}{589}
\newcommand{\costMdlDeepSeekpro}{235}
\newcommand{\costMdlGeminiThreeOnePro}{45}
\newcommand{\costMdlGPTFiveFourmini}{28}
\newcommand{\costMdlOther}{20}
\newcommand{\costRoleOrchestrator}{8{,}702}
\newcommand{\costRoleOrchestratorPerCall}{40}
\newcommand{\costRoleProofwriter}{7{,}545}
\newcommand{\costRoleProofwriterPerCall}{8}
\newcommand{\costRoleSimplifier}{1{,}136}
\newcommand{\costRoleSimplifierPerCall}{5}
\newcommand{\costRoleLibraryscout}{1{,}043}
\newcommand{\costRoleLibraryscoutPerCall}{4}
\newcommand{\costRoleBlueprintsync}{938}
\newcommand{\costRoleBlueprintsyncPerCall}{10}
\newcommand{\costRoleWorkflow}{160}
\newcommand{\costRoleOtherTU}{681}
\newcommand{\costRoleWorkflowPerCall}{2}
\newcommand{\costRoleOtherTUPerCall}{1}

\newcommand{\C}{\mathbb{C}}

\DeclareMathOperator{\Tr}{Tr}
\DeclareMathOperator{\tr}{tr}

\renewcommand{\tilde}{\widetilde}
\renewcommand{\bar}{\overline}
\DeclareMathOperator{\Span}{span}

\theoremstyle{plain}
\newtheorem{appthm}{Theorem}
\newtheorem{applem}{Lemma}
\newtheorem{theorem}{Theorem}

\newcommand{\prlsection}[1]{\emph{#1}.---}

\newif\ifshowcomments{}
    \showcommentstrue{} % Show comments
\pdfoutput=1

\begin{document}

\title{Multi-agent Autoformalization of Tensor Network Theory}

\author{Sirui Lu}
\affiliation{Max-Planck-Institut f\"ur Quantenoptik, Hans-Kopfermann-Stra\ss{}e 1, D-85748 Garching, Germany}
\affiliation{Munich Center for Quantum Science and Technology (MCQST), Schellingstraße 4, D-80799 Munich, Germany}
\email{sirui.lu@mpq.mpg.de}

\author{Erickson Tjoa}
\affiliation{Max-Planck-Institut f\"ur Quantenoptik, Hans-Kopfermann-Stra\ss{}e 1, D-85748 Garching, Germany}
\affiliation{Munich Center for Quantum Science and Technology (MCQST), Schellingstraße 4, D-80799 Munich, Germany}
\email{erickson.tjoa@mpq.mpg.de}

\author{J. Ignacio Cirac}
\affiliation{Max-Planck-Institut f\"ur Quantenoptik, Hans-Kopfermann-Stra\ss{}e 1, D-85748 Garching, Germany}
\affiliation{Munich Center for Quantum Science and Technology (MCQST), Schellingstraße 4, D-80799 Munich, Germany}
\email{ignacio.cirac@mpq.mpg.de}

\begin{abstract}
    We build a team of specialized large language-model agents and present an agent-driven workflow for research-level formalization in theoretical physics, with the autoformalization of the fundamental theorem of matrix-product states as a demonstration. The agents, coordinated through a structured mathematical blueprint and periodic human review,  orchestrated and executed the full formalization autonomously. For some statements,
    the agents were able to explore new proof routes that are not part of the standard literature. Along the way the agents produced extensive tensor-network and quantum-information libraries not previously available in Mathlib, Lean's mathematical library. As a physical application, the formalization also extends towards symmetry-protected topological phases in one dimension. We find that the main bottleneck in large-scale autoformalization is enforcing mathematical intent and we provide a detailed study of the full process and various subtleties involved. We release the codebase as the library \href{https://github.com/LionSR/TNLean}{TNLean}, together with a \nChapters{}-chapter \href{https://lionsr.github.io/TNLean/blueprint/}{blueprint} of the formalization effort.
\end{abstract}

\maketitle

\let\SMrealaddcontentsline\addcontentsline{}
\renewcommand{\addcontentsline}[3]{}

%======================================================================
\prlsection{Introduction}%
%======================================================================
Proof assistants are software systems that check every logical step of a mathematical argument. They have certified results from the four-color theorem~\cite{Gonthier2008Formal} to the Kepler conjecture~\cite{Hales2017Formal}, and AI systems can now find Lean~4 proofs automatically for many mathematical targets~\cite{Moura2021Lean,Hubert2026Olympiadlevel,Ren2025DeepSeekproverV2,Wang2025Kiminaprover,Lin2025GoedelproverV2}.
Most automated targets, however, remain competition problems or isolated lemmas, with research-level and textbook formalizations starting to grow in mathematics~\cite{MathInc.2025Strong,Rammal2026Formalizing}. Research-level theoretical physics poses a different challenge: since physics results outside mathematical physics are often not subject to the same mathematical rigor as pure mathematics~\cite{Lu2026Position}, formalization would need to fill in the informal parts as rigorous, Lean-checkable statements.

Large-scale formalizations have so far mostly been human-led, with projects such as the Liquid Tensor Experiment and Fermat's Last Theorem requiring teams of experts over months or years~\cite{Commelin2023Liquid,Gowers2025Conjecture,Best2025Complete,Buzzard2024Lean}. AI systems have begun to reduce this burden, from competition-style proving~\cite{Hubert2026Olympiadlevel,Axiom2025AxiomProver,Harmonic2025Aristotle} to autoformalization, local proof assistance, and short reasoning chains~\cite{Wu2022Autoformalization,Wang2025MALoT,Yang2023LeanDojo}; the AI system Gauss~\cite{2025Introducing,MathInc.2025Strong} recently produced a blueprint-guided formalization of the strong Prime Number Theorem.
In physics, PhysLib provides early Lean infrastructure~\cite{PhyslibContributors2024PhysLib}, while in quantum information the generalized quantum Stein's lemma~\cite{Brandao2010Generalization,Berta2023Gap,Hayashi2025Generalized,Lami2025Solution} has been partially formalized in a human-led effort~\cite{Meiburg2024LeanQuantumInfo,Meiburg2025Formalization}. These developments motivate the question we address here: how far can an agent-driven system carry a research-level formalization in theoretical physics?

In this work we build and apply an agent-driven workflow in which a team of specialized AI agents (language-model programs with distinct roles and tools), coordinated through a shared mathematical `blueprint' with sporadic human supervision, autonomously formalizes the fundamental theorem of matrix-product states (FT-MPS)~\cite{Perez-Garcia2007Matrix,Cirac2017Matrixa,Cirac2021Matrix} via the Lean~4 proof assistant. FT-MPS is chosen as it is one of the core results of tensor-network theory~\cite{Cirac2021Matrix} that lies at the intersection of quantum many-body physics, quantum information, and condensed-matter physics. Consequently, this task required us to develop a library well beyond FT-MPS itself. Our work is, to our knowledge, the first autonomous formalization that combines multi-agent coordination, persistent memory, and blueprint-guided planning for a research-level theorem in mathematical physics. As a byproduct, we have produced a \nPagesBluePrint{}-page formalization \textit{blueprint} of these \nChapters{} chapters~\cite{Massot2021Leanblueprint,TNLeanBlueprint} (a human-readable translation of the proof linked to the Lean code), developed several libraries adjacent to quantum information theory, and applied the fundamental theorem to physically relevant questions, such as the classification of one-dimensional symmetry-protected topological phases~\cite{Chen2011Classification,Schuch2011Classifying,Pollmann2010Entanglement,Pollmann2012Symmetry}. The codebase is released as a Lean~4 library, TNLean~\cite{TNLean}.

Our work reveals that, for research-level theoretical physics, the core challenge in autonomous formalization is no longer only the proof of individual lemmas, but also the faithful management of a large argument: breaking a published proof into hundreds of formal steps, deciding where to deviate from the literature's route, and tracking dependencies that span quantum mechanics, operator algebras, and spectral theory. In particular, the bounded context window of large language models (LLMs; the fixed amount of text a model can process at once) and the reliability with which agents follow instructions constrain how far LLM-based formalization can scale at present. We address these constraints through orchestration, the blueprint, automated review, and regular human review, as illustrated in \cref{fig:alignment}.

% fig_alignment.tex  --  literature <-> blueprint <-> Lean alignment loop.
%
%   Forward (solid): several sources, in different notations, are transcribed
%   into one blueprint statement (with \lean / \leanok links) and then proved
%   in Lean.  leanSearch scouts Mathlib; lean closes the proof.
%   Audit (dashed): leanBlueprint checks the mechanical blueprint<->Lean links
%   (checkdecls); the reviewer checks the two semantic correspondences,
%   blueprint-vs-Lean and literature-vs-Lean.  A mismatch sends the statement
%   back for restatement and re-proof.
%
%   Layout: compressed vertically (tier gaps tightened, legend flattened to a
%   single row in the empty bottom-right) and spread wider horizontally,
%   trading unused page width for column height.
%
% Requires:  \usepackage{tikz}
%            \usetikzlibrary{arrows.meta, positioning, calc, fit, backgrounds}

\begin{figure*}[t]
    \centering
    \begin{tikzpicture}[
            >={Stealth[length=3.2pt]},
            every node/.style={font=\footnotesize},
            layer/.style ={layerbox, rounded corners=1.5pt, align=center, inner sep=3pt,
                    font=\scriptsize, minimum height=1.3cm,
                    minimum width=3.3cm},
            human/.style ={humanbox, rounded corners=1.5pt, align=center,
                    font=\scriptsize\itshape, inner sep=3pt},
            agent/.style ={agentbox, rounded corners=1pt, align=center,
                    font=\scriptsize\ttfamily, inner sep=2.5pt, minimum height=0.42cm},
            fwd/.style   ={->, semithick, draw=black!80},
            aud/.style   ={->, semithick, draw=black!70, densely dashed},
            tick/.style  ={-, thin, draw=black!45, densely dotted},
            legsw/.style ={draw, minimum size=0.26cm, inner sep=0pt,
                    rounded corners=0.5pt},
            elbl/.style  ={font=\scriptsize, fill=white, inner sep=1.5pt},
            sub/.style   ={font=\scriptsize\itshape, text=black!60,
                    fill=white, inner sep=1.5pt},
            x=1cm, y=1cm
        ]

        %% --- Layer 1: primary literature (several sources) ----------------
        \node[layer] (litB) at (0.22,0.16) {};
        \node[layer] (litA) at (0.11,0.08) {};
        \node[layer] (lit)  at (0,0) {%
            \textbf{Primary literature}\\[1.5pt]
            several sources,\\ one theorem\\[2pt]
            \scriptsize\itshape differing notation:\\
            $A_i$ vs.\ $A^{i}$ \quad $\Gamma_L$ vs.\ span};
        \node[font=\scriptsize\ttfamily, text=black!55, align=center,
                above=1.5pt of litB.north]
        {arXiv:~quant-ph/0608197\\ 1606.00608\quad 2011.12127};

        %% --- Layer 2: blueprint -------------------------------------------
        \node[layer] (bp) at (6.0,0) {%
            \textbf{Blueprint} (\LaTeX)\\[1.5pt]
            one statement,\\ unified notation\\[2pt]
            {\ttfamily\textbackslash{}lean} \ {\ttfamily\textbackslash{}leanok} \ {\ttfamily\textbackslash{}uses}};

        %% --- Layer 3: Lean -------------------------------------------------
        \node[layer] (lean) at (12.0,0) {%
            \textbf{Lean\,4} (Mathlib)\\[1.5pt]
            machine-checked\\ {\ttfamily theorem}\\[2pt]
            \scriptsize no {\ttfamily sorry} placeholders};

        %% --- Forward construction (solid) ---------------------------------
        \draw[fwd] (lit.east) -- node[elbl, above]{transcribe,}
        node[elbl, below]{unify notation} (bp.west);
        \draw[fwd] (bp.east) -- node[elbl, above]{formalize}
        node[elbl, below]{\& prove} (lean.west);

        \node[agent] (scout) at (3.0,1.02) {leanSearch};
        \node[agent] (prove) at (9.0,1.02) {lean};
        % ticks on the background layer, so the white-filled edge labels
        % ("transcribe,"/"formalize") mask them instead of being overprinted
        \begin{pgfonlayer}{background}
            \draw[tick] (scout.south) -- (3.0,0.04);
            \draw[tick] (prove.south) -- (9.0,0.04);
        \end{pgfonlayer}

        \node[human] (human) at (6.0,1.42)
        {Human supervisor\\\scriptsize sets \& audits intent};
        \draw[fwd] (human) -- (bp.north);

        %% --- Audit loop (dashed) ------------------------------------------
        % leanBlueprint: mechanical blueprint <-> Lean link check
        \node[agent] (sync) at (9.4,-1.05) {leanBlueprint};
        \draw[aud] ([xshift=-5mm]lean.south) to[bend left=14] (sync.east);
        \draw[aud] (sync.west)  to[bend left=14] ([xshift=8mm]bp.south);
        \node[sub] at (9.3,-1.45) {link sync (\texttt{checkdecls})};

        % reviewer: two semantic checks (blueprint-vs-Lean, literature-vs-Lean)
        \node[agent] (rev) at (5.2,-1.75) {reviewer};
        \draw[aud] ([xshift=-3mm]bp.south) -- (rev.north);
        \draw[aud] ([xshift=2mm]lean.south) to[bend left=20] (rev.east);
        \draw[aud] (lit.south)  to[bend right=16] (rev.west);
        \node[sub] at (4.6,-1.18) {blueprint vs Lean};
        \node[sub] at (1.5,-1.50) {literature vs Lean};
        % return path: a mismatch sends the statement back to the blueprint,
        % drawn as an explicit reviewer -> blueprint arrow; the note sits
        % under the reviewer box, in open space
        \draw[aud] ([xshift=4mm]rev.north) -- ([xshift=4mm]bp.south);
        \node[sub, fill=none] at (5.2,-2.13)
        {mismatch $\Rightarrow$ restate \& re-prove};

        %% --- Legend: one row in the empty bottom-right (box fill encodes
        %%     role); flattened from a framed 3-row block to save height ----
        \node[legsw, agentbox] (lg1) at (7.9,-2.13) {};
        \node[anchor=west, font=\scriptsize, inner sep=2pt] (lg1t) at (lg1.east) {AI agent};
        \node[legsw, humanbox, right=8pt of lg1t] (lg2) {};
        \node[anchor=west, font=\scriptsize, inner sep=2pt] (lg2t) at (lg2.east) {human supervisor};
        \node[legsw, layerbox, right=8pt of lg2t] (lg3) {};
        \node[anchor=west, font=\scriptsize, inner sep=2pt] (lg3t) at (lg3.east) {artifact layer};

    \end{tikzpicture}
    \caption{%
        Maintaining consistency across the literature, Lean, and the blueprint.
        A theorem stated in several sources under
        different notations (the matrices written $A_i$~\cite{Perez-Garcia2007Matrix}
        or $A^{i}$~\cite{Cirac2017Matrixa,Cirac2021Matrix}, injectivity phrased
        through the map $\Gamma_L$~\cite{Perez-Garcia2007Matrix} or as a spanning
        condition~\cite{Cirac2017Matrixa}) is transcribed once into
        the blueprint, in a single notation and with machine-readable links to
        the Lean declaration, and then proved in Lean (solid arrows;
        \texttt{leanSearch} scouts Mathlib, \texttt{lean} closes the proof;
        dotted lines attach each agent to the step it carries out).
        Two agents keep the layers consistent (dashed). \texttt{leanBlueprint}
        checks the mechanical links (every \texttt{\textbackslash{}lean}
        reference resolves to a declaration and every
        \texttt{\textbackslash{}leanok} to a finished proof) via
        \texttt{checkdecls}. An automated reviewer checks the two semantic
        correspondences: blueprint versus Lean (the Lean theorem proves the
        stated result) and literature versus Lean (its hypotheses are no
        stronger than the cited source's). A mismatch returns the statement
        for restatement and re-proof.
    }\label{fig:alignment}
\end{figure*}

%======================================================================
\prlsection{The fundamental theorem and its formalization}%
%======================================================================
A translationally invariant matrix-product state (hereafter simply an MPS) is a family of quantum states on a one-dimensional chain~\cite{Verstraete2006Matrix}, parameterized by a single rank-3 tensor $A$ with $A = \sum_{i=0}^{d-1}A^{i}\ket{i}$ and $A^i\in M_D(\C)$ for each $i=0,1,\ldots,d-1$. Here $d$ is the local Hilbert-space dimension and $D$ is the bond dimension. An MPS on $N$ sites with periodic boundary conditions reads

\begin{equation}\label{eq:mps}
    \ket{\psi_N(A)} = \sum_{i_1,\ldots,i_N=0}^{d-1} \tr\!\left(A^{i_1} A^{i_2} \cdots A^{i_N}\right) \ket{i_1 \cdots i_N}.
\end{equation}
Eq.~\eqref{eq:mps} is not normalized as they are also relevant for studying matrix-product operators (see, e.g.,~\cite{Cirac2017Matrix,Liu2026Parent}), thus they are also called \textit{matrix-product vectors} (MPVs)~\cite{Cirac2017Matrixa}. To each tensor $A$ we associate a completely-positive (CP) map called the \textit{transfer operator} $E_A \colon M_D(\C) \to M_D(\C)$, defined by $E_A(X) = \sum_{i=0}^{d-1} A^{i} X {(A^{i})}^\dagger$, whose spectral information govern the properties of the MPV family.

Since $\ket{\psi_N(A)}$ depends only on traces of products of the matrices $A^i$, the map $A \mapsto \ket{\psi_N(A)}$ is many-to-one. A natural question is when two tensors generate MPV families that are proportional or equal to one another. The fundamental theorem of MPS states that a full characterization is possible after putting the tensors in canonical form~\cite{Perez-Garcia2007Matrix,Cirac2017Matrixa}.
We say that a tensor $A$ is \textit{normal} if its transfer operator $E_A$ has spectral radius $1$ and a unique eigenvalue $\lambda$ with $|\lambda|=1$ (all other eigenvalues have $|\lambda|<1$). Then $A$ is said to be in a canonical form~(CF) if $A^i = \bigoplus_{k=1}^r \mu_k A^i_k$, where $\mu_{k}\in\C$ and $A_k$ are normal tensors.
One of the central theorems we formalized is the following~\cite{Perez-Garcia2007Matrix,Cirac2017Matrixa,Cirac2021Matrix}:
\begin{theorem}[Fundamental theorem of MPS, equal case]\label{thm:bnt-equivalence}
    Let $A$ and $B$ be two tensors in CF\@. Then $A,B$ generate the same MPV family, $\ket{\psi_N(A)} = \ket{\psi_N(B)}$ for every $N\geq 1$, if and only if there exists an invertible matrix $X$ such that for all $i=0,1,\ldots,d-1$:
    \begin{equation}
        B^i=X A^i X^{-1}.\label{eq:equal-MPV-tensor}
    \end{equation}
\end{theorem}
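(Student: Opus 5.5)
The direction ``$\Leftarrow$'' is immediate. If $B^i = XA^iX^{-1}$ for all $i$, then $B^{i_1}\cdots B^{i_N} = X A^{i_1}\cdots A^{i_N} X^{-1}$, and cyclicity of the trace gives $\tr(B^{i_1}\cdots B^{i_N}) = \tr(A^{i_1}\cdots A^{i_N})$ for every $N$ and every string $i_1,\ldots,i_N$. By \cref{eq:mps} this means $\ket{\psi_N(B)}=\ket{\psi_N(A)}$.

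For ``$\Rightarrow$'' the plan is to pass from MPVs to representations of the free algebra and use semisimplicity. Equality of all $\ket{\psi_N}$ means that for every word $w$ in the free algebra $\C\langle x_0,\ldots,x_{d-1}\rangle$ we have $\tr\rho_A(w)=\tr\rho_B(w)$, where $\rho_A(x_i)=A^i$ and $\rho_B(x_i)=B^i$. The empty word is the exception: it gives $\tr\mathbb{1}$, i.e.\ the bond dimension. I will first recover the dimensions from $N\geq1$ data: combining the normal blocks' spectral data shows the CF structure forces $D_A=D_B$. If needed, I will instead assume CF includes the convention that no zero blocks appear, or restrict to equal $D$.

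The key structural step is that a CF tensor defines a \emph{semisimple} representation. Each normal block $A_k$ is, after a similarity putting it in the standard gauge ($E_{A_k}$ has a positive definite fixed point), irreducible: it has no nontrivial common invariant subspace. Indeed, an invariant subspace would produce a second fixed point or a peripheral eigenvalue of $E_{A_k}$, which contradicts normality. Burnside's theorem then gives that the algebra generated by the $A^i_k$ is all of $M_{D_k}(\C)$. So $\rho_A\cong\bigoplus_k \sigma_k$ with each $\sigma_k(x_i)=\mu_kA_k^i$ irreducible, and similarly for $\rho_B$.

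I then invoke the classical fact, via Frobenius–Schur/Wedderburn, that two semisimple finite-dimensional representations of an algebra over $\C$ with equal characters on all elements are isomorphic. Equal characters on words of length $\ge1$ plus equal dimensions give equal characters on all of the free algebra. Concretely, the distinct irreducible summands occurring in $\rho_A$ and $\rho_B$ are pairwise nonisomorphic irreps $\tau_1,\ldots,\tau_m$. Jacobson density gives elements $e_j$ acting as identity on $\tau_j$ and zero on the others, so $\tr\rho(e_j)=(\text{multiplicity})\cdot\dim\tau_j$. Comparing these traces shows the multiplicities agree. Hence the block decompositions match up to permutation and similarity, which assembles into an invertible $X$ with $B^i=XA^iX^{-1}$.

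The main obstacle is the irreducibility step: showing that normality, which is a spectral condition on $E_A$, implies that $\{A^i\}$ acts irreducibly. This needs Perron–Frobenius theory for CP maps, namely a positive-definite fixed point and the absence of peripheral spectrum. I will try the following first. Suppose $P$ is an invariant projection, so $A^iP=PA^iP$. The compressed CP maps on the corners then have spectral radius $\le1$, and these yield either an extra fixed point or a peripheral eigenvalue of $E_A$, contradicting uniqueness. A secondary subtlety is the empty-word/dimension issue and degenerate zero blocks. I expect to handle these by the CF convention, or by noting that both sides must fix $D$ for $X$ to be square.
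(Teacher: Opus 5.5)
You correctly identify the irreducibility step as the crux, but it fails as you propose it. The compression of $E_A$ to an invariant corner can have spectral radius strictly below $1$. It then produces neither a second fixed point nor a peripheral eigenvalue, and having a positive-definite fixed point does not change this.

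Take $d=D=2$, $A^0=E_{22}$, $A^1=E_{12}$. Then $E_A(X)=X_{22}\,\openone$, whose spectrum is $\{1,0,0,0\}$ and whose fixed point $\openone$ is positive definite. Yet both matrices annihilate $e_1$, and the corner map on $E_{11}M_2(\C)E_{11}$ is zero.

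Moreover, $B^0=E_{22}$, $B^1=0$ is also normal in the spectral sense, since $E_B(X)=X_{22}E_{22}$. It gives the same MPV as $A$ for every $N\ge 1$, because only the all-zero string has a nonzero coefficient. But no invertible $X$ can conjugate $A^1\neq 0$ to $B^1=0$. So under the purely spectral definition of normality in the main text, no Perron--Frobenius argument can yield irreducibility, and the statement as literally worded would be false.

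The fix is to use the notion of normality the formalization actually relies on: a normal block becomes injective after finitely many blockings. The literature likewise makes irreducibility part of the definition. Then irreducibility takes one line. If the length-$L$ products of the $A_k^i$ span $M_{D_k}(\C)$, any common invariant subspace is invariant under all of $M_{D_k}(\C)$, hence trivial, and you do not even need Burnside.

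For the dimension issue, suppose all $\mu_k\neq0$. Then every irreducible summand has some generator acting nonzero. The span of nonempty words is a two-sided ideal of the free algebra, so its image is a two-sided ideal of $\prod_j M_{n_j}(\C)$. That image has nonzero projection onto every factor, so it is the whole product. Hence the idempotents $e_j$ can be taken as combinations of nonempty words, and $D_A=D_B$ follows from the $N\ge1$ data. If zero weights are allowed, equal bond dimension must be assumed, which the conclusion requires anyway.

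With these repairs your argument is complete, and it is genuinely different from the paper's. The paper proves the injective case via Skolem--Noether. It then handles several blocks by grouping them into a basis of normal tensors and using linear independence of the MPVs of distinct normal tensors at large finite $N$. The transfer-operator gap matches blocks up to gauge and phase, and geometric extrapolation plus Newton--Girard recover the weight multisets from $\sum_q\mu_{j,q}^N$.

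Your route instead absorbs each weight into the irreducible summand $\mu_kA_k$. Repeated blocks and oscillating weights such as $\mu=\pm1$ then need no separate treatment, and no spectral theory or Wielandt bound is used; it is essentially the multi-block analogue of the paper's Skolem--Noether step. In exchange, the paper's longer route yields the finite-$N$ independence and gap results as standalone statements and stays close to the source literature.
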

\noindent Note that given any tensor $A$ generating an MPV $\ket{\psi_N(A)}$, it is always possible to bring it into a canonical form after blocking a sufficient number of sites to remove the so-called $p$-periodic subspaces~\cite{Cirac2017Matrixa,Fannes1992Finitely}. Thus our starting point is that the tensors are in CF\@.

We began the formalization effort by querying the orchestrator agent to prove FT-MPS without specifying any literature. Interestingly, in the early phase the system proved the special case of \cref{thm:bnt-equivalence} (when $A$ is \textit{injective}) through the \emph{Skolem-Noether theorem} (Sec.~\ref{app:ftsb_walkthrough} of Supplementary Material), a classical result about automorphisms of central simple algebras. This approach is not a standard in the literature and was chosen autonomously as the agents can directly make use of the existing Mathlib's ring-theory library~\cite{Supp}. This suggests that autoformalization is able to accommodate novel approaches based on the system's ability to make wide connections between different fields.

The injective case is the simplest instance of FT-MPS, as normal tensors only become injective after blocking sufficiently many sites%.
\footnote{The formalization follows the $(D^2-d+1)D^2 = O(D^4)$ blocking bound of the quantum Wielandt inequality~\cite{Sanz2010Quantum}; whether this can be improved to the optimal scaling remains an interesting open question~\cite{Michalek2019Quantum,Shitov2023Growth}.}.
To prove \cref{thm:bnt-equivalence}, we need to first put the tensors into CF after sufficient \emph{blocking} (\cref{fig:blocking}) and then find a `minimal' collection of normal tensors $A_k$, called \textit{basis of normal tensors}, or BNT, ${\{A_k^i\}}_{k=1}^g$, that always exists for any tensor and accounts for repeated blocks in $A$. The global gauge transformation $X$ can then be constructed by proving the following result: two tensors $A,B$ generating the same MPV must have the same number of BNT elements and, up to permutations, the BNTs are related by gauge transformation $B_k=X_k A_{\pi(k)}X_k^{-1}$ where $\pi$ is some permutation.

Our earliest attempt used only the review paper~\cite{Cirac2021Matrix}, which proved insufficient to provide the full proof for \cref{thm:bnt-equivalence} fully autonomously. Thus we supplied the agents with the arXiv preprint \LaTeX{} source, rather than the typeset PDF, of the primary references~\cite{Perez-Garcia2007Matrix,Cirac2017Matrixa}, together with the review~\cite{Cirac2021Matrix} and the most relevant quantum-information sources~\cite{Wolf2012Quantum,Sanz2010Quantum}; the system regenerated its proof strategy.
This regenerated strategy made clear that the task was not just a short linear-algebra formalization. FT-MPS depends on a substantial quantum-information infrastructure around transfer operators, such as CP maps, quantum Perron-Frobenius theory, and the quantum Wielandt bound~\footnote{After the formalization for the fundamental theorem has been completed, the positive and completely positive maps between $C^*$-algebras were added to Mathlib 4.31; TNLean has since been updated to build on them, where it had previously constructed these notions itself.}, itself the subject of a separate paper~\cite{Sanz2010Quantum}.
As these dependencies grew, the development had to be split into smaller proof files and bounded agent tasks, with the blueprint and persistent memory tracking the global structure; the detailed dependency graph and workflow are given in the Supplementary Material~\cite{Supp}.

% fig_blocking.tex  --  Blocking transformation A -> A^{(L)} for FT-MPS
% Requires:  \usepackage{tikz}
%            \usetikzlibrary{calc}
%
% Visualises the coarse-graining step that takes a Matrix Product State
% tensor A = {A^i}_{i=1}^d on Mat_D(C) to the L-blocked tensor A^{(L)}.
% The diagram style follows the tensor-network conventions of the project
% blueprint (small filled circles for tensor dots, thin lines for bonds
% and physical legs, a thin rectangle marking the blocked region).

\begin{figure}[t]
    \centering
    \resizebox{\columnwidth}{!}{%
    \begin{tikzpicture}[
            baseline=-0.1cm,
            transform shape,
            font=\scriptsize,
            tn dot/.style   ={draw, fill, shape=circle, inner sep=0.055cm},
            tn bond/.style  ={line width=0.5pt},
            tn phys/.style  ={line width=0.7pt},
            tn block/.style ={line width=0.5pt, rounded corners=1pt},
            x=0.62cm
        ]
        %% --- Panel (a): the original chain of N copies of A -------------------
        \begin{scope}[shift={(0,0)}]
            \foreach \i in {0,...,5} {
                    \node[tn dot] (A\i) at (\i,0) {};
                }
            \foreach \i [evaluate=\i as \j using int(\i+1)] in {0,...,4} {
                    \draw[tn bond] (A\i.east) -- (A\j.west);
                }
            \foreach \i in {0,...,5} {
                    \draw[tn phys] (A\i.north) -- ++(0,0.30);
                }
            \draw[tn bond] (A5.east) -- ++(0.55,0);
            \node           at (6.40,0)    {$\cdots$};
            \draw[tn bond] (A0.west) -- ++(-0.55,0);
            \node           at (-1.40,0)   {$\cdots$};
            \node           at (0,0.50)    {$i_1$};
            \node           at (5,0.50)    {$i_6$};
            \node[anchor=east] at (-1.95,0) {(a)\ chain of $N$ sites:};
        \end{scope}
        %% --- Panel (b): the same chain, grouped into L-site blocks ------------
        \begin{scope}[shift={(0,-0.85)}]
            \foreach \i in {0,...,5} {
                    \node[tn dot] (B\i) at (\i,0) {};
                }
            \foreach \i [evaluate=\i as \j using int(\i+1)] in {0,...,4} {
                    \draw[tn bond] (B\i.east) -- (B\j.west);
                }
            \foreach \i in {0,...,5} {
                    \draw[tn phys] (B\i.north) -- ++(0,0.30);
                }
            \draw[tn bond] (B5.east) -- ++(0.55,0);
            \node           at (6.40,0)    {$\cdots$};
            \draw[tn bond] (B0.west) -- ++(-0.55,0);
            \node           at (-1.40,0)   {$\cdots$};
            \foreach \g in {0,1} {
                    \pgfmathsetmacro{\xL}{\g*3 - 0.32}
                    \pgfmathsetmacro{\xR}{\g*3 + 2.32}
                    \pgfmathsetmacro{\xC}{\g*3 + 1}
                    \draw[tn block] (\xL,-0.26) rectangle (\xR,0.26);
                    \node          at (\xC,-0.46) {$A^{(L)}$};
                }
            \node[anchor=east] at (-1.95,0) {(b)\ blocked by $L=3$:};
        \end{scope}
    \end{tikzpicture}%
    }
    \\[0.15em]
    {\footnotesize
        $\displaystyle
            A^{(L)\,(i_1,\ldots,i_L)} \;=\; A^{i_1} A^{i_2} \cdots A^{i_L},
            \qquad
            E_{A^{(L)}} \;=\; {(E_A)}^L$.}
    \caption{Blocking in tensor networks.
        Given an MPS tensor $A$, grouping $L$ consecutive sites yields a single coarse-grained tensor $A^{(L)}$ whose physical index is the composite index $(i_1,\ldots,i_L)$ and whose transfer operator is the $L$-th power ${(E_A)}^L$ of the transfer operator $E_A$ of $A$.
        The spectrum of ${(E_A)}^L$ concentrates onto its dominant eigenvalues as $L$ grows: each block of the resulting canonical form becomes
        \emph{normal} after a finite blocking length and \emph{injective}
        (the matrices of block $k$ span the full algebra
        $M_{D_k}(\C)$)
        after $L = O(D^4)$ further sites, by the quantum Wielandt bound.
        The agent pool reached the single-block injective conjugation autonomously (\cref{eq:equal-MPV-tensor}); the extension to the multi-block regime indicated here was recovered only after the primary references~\cite{Perez-Garcia2007Matrix,Cirac2017Matrixa} were supplied.}\label{fig:blocking}
\end{figure}

This splitting was carried out by a team of specialized agents, coordinated by an orchestrator that assigns tasks to sub-agents and combines their results; the agent roles, tools, and coordination patterns are described in the End Matter.
As the project grew, it also became important to periodically refactor the code, extracting reusable proof strategies and retiring unsuccessful attempts, to keep the agents from getting stuck; this is the responsibility of the simplifier agent.

The project produced a formalization codebase, accompanied by a \nChapters{}-chapter blueprint spanning \nPagesBluePrint pages (see Supplementary Material~\cite{Supp}).
Every step of the FT-MPS proof chain was verified, with no placeholders (Lean's \texttt{sorry} markers for unfinished steps) remaining in the core argument.
Statements occupying a few lines in the literature can require several hundred lines of Lean~\cite{Supp}.
In total, the recorded model-API cost of the interactive sessions was \$\nCost{} for the whole library (an estimated \$\nCostFT{} for the FT-MPS alone, scaled by code size), chiefly
on proof writing and orchestration (see End Matter).

%======================================================================
\prlsection{Catching unintended formalizations}%
%======================================================================
There are at least two aspects of the autonomous Lean formalization process that require some care. First, the agents may only formalize a statement whose proof can pass the Lean kernel but deviates from the intended one, e.g., by weakening the theorem, as we discuss below, or by proving it vacuously from hypotheses that are never satisfied.
Second, physics arguments rely on many conventions and implicit assumptions, which must be made explicit under rigorous formalization.
While the second problem can many times be dealt with autonomously, at present the first problem requires more human intervention in the form of a review.

The blueprint is a human-readable mathematical specification that links definitions, theorem statements, and dependency structure to the underlying Lean code, since auditing all of the Lean proofs directly is impractical for a human reviewer.
We performed \nReviewRounds{} rounds of review of the blueprint for the FT-MPS formalization~\cite{TNLeanBlueprint}, from which reports were generated. These reports were fed back to the orchestrator agent, which dispatched suitable sub-agents to correct the issues they identified.
We emphasize that the human intervention is \textit{strategic} rather than \textit{tactical}: the human supervisor pointed out the unintended version of the statement being proved, but not \textit{how} it should be formalized by the agents. The agents then re-derived the intended statement and removed the incorrect or outdated versions (somewhat reluctantly), without further guidance. Below we mention some instances where this intervention was required.

\paragraph*{Weakening of the theorem.}%
The most expensive corrections were those in which an unintended hypothesis entered early and many later lemmas were built on top of it.
In the first formalization round, the agents attempted to prove FT-MPS under what they called a \textit{doubly stochastic (DS) ``gauge''}. In MPS theory, one may normalize the transfer operator $E_A$ to have spectral radius $1$. One can then choose a right-canonical form, in which $E_A$ is unital, or a left-canonical form, in which $E_A$ is trace-preserving. For a generic MPS tensor, however, one cannot impose both conditions simultaneously. The DS assumption therefore gave a formally correct theorem which is easier to prove and formalize, but only for a much smaller class of tensors. Removing this extra assumption required reorganizing many earlier arguments around one-sided canonical forms.
This example shows that Lean checks the proof but not the choice of statement: the theorem proved was correct, only narrower than the one intended and do not support the later proof of the full theorem.
After the corrected statement was specified, the agents carried out this restructuring autonomously.

\paragraph*{Finite vs asymptotic limit.}%
A second class of issues arose when the autoformalization replaced the intended definitions with a superficially related statement that is not merely a weakening, but may not hold in the setting of interest. In our case, the finite statement of FT-MPS was replaced by an asymptotic one. The vectors in Eq.~\eqref{eq:mps} are unnormalized MPVs: their coefficients are traces of products of tensor matrices at a fixed length $N$. The theorem compares these vectors for every finite $N$, either by equality or by nonzero proportionality. It does not compare normalized states, nor does it assume that the norms of the MPVs have any limiting behavior as $N\to\infty$. Early formalization attempts based on asymptotic norm comparisons therefore proved statements with the wrong hypotheses and failed to recover the needed finite information. The same finite issue appears when comparing multiplicities of canonical blocks, where coefficients at finite length can oscillate rather than converge.
Here the team could not patch the proof directly and only found the correct route after being reminded by us after reading the blueprint that the underlying asymptotic definition, rather than the proof strategy, was at fault.
The final proof keeps the comparison at fixed finite $N$, matching the statement of FT-MPS\@.

\paragraph*{Edge cases.}%
Some corrections were mathematically minor but still costly because proof assistants require every edge case to be stated explicitly. The physically relevant MPS statements always have positive bond dimension and nonempty chains.
If this is not included in the theorem statement, Lean may have to consider artificial cases such as $D=0$ or $N=0$. The case $D=0$ carries no physical state, while for $N=0$ the empty-word convention produces the coefficient $\tr(\openone_D)=D$, which is not part of the usual FT-MPS comparison. Adding the intended assumptions $D\geq 1$ and $N\geq 1$ removed large irrelevant branches of the proof.

\paragraph*{Lean code-Blueprint alignment.}%
Several reviews found discrepancies between the blueprint and the Lean declarations: in some cases the Lean statement was correct, while the blueprint prose had drifted across successive edits. We then introduced explicit synchronization and review steps to keep the blueprint aligned with the formal statements (see~\cref{fig:alignment}). We also adjusted the prompts and review criteria so that the blueprint followed the source papers more closely and uses reader-friendly prose to facilitate auditing \cite{Supp}.

% fig_spt.tex -- From the physical symmetry assumption to the SPT cohomology invariant.
% Requires: \usepackage{tikz}; \usetikzlibrary{arrows.meta, positioning, calc}.
% Top tier: the physical-layer assumption -- applying the on-site symmetry
%   U(g) to the physical leg of A (defining \tilde A_g) is assumed to generate
%   the same MPV family as A itself ("Sym").
% Middle tier: FT-MPS turns that physical-layer assumption into the
%   virtual-layer gauge identity \tilde A_g^i = X(g) A^i X(g)^{-1}.
%   Tensor-network conventions follow fig_blocking (small filled circles for
%   tensors, thin lines for bonds and physical legs).
% Bottom tier: the chain turning that gauge into a topological invariant.
\begin{figure}[t]
    \centering
    \begin{tikzpicture}[
            font=\scriptsize,
            x=1cm, y=1cm,
            >={Stealth[length=3.2pt]},
            tn dot/.style  ={draw, fill, shape=circle, inner sep=0.055cm},
            tn bond/.style ={line width=0.5pt},
            tn phys/.style ={line width=0.7pt},
            gbox/.style    ={draw, fill=white, rounded corners=1pt, inner sep=1.6pt},
            arr/.style     ={->, semithick, draw=black!80},
            albl/.style    ={font=\scriptsize\itshape, inner sep=1.5pt}
        ]
        %% ---- top tier: physical-layer assumption -----------------------------
        % LHS: tensor A with U(g) on the physical leg, defining \tilde A_g
        \node[tn dot] (P1) at (0,2.62) {};
        \node at (0,2.32) {$A$};
        \draw[tn bond] (P1) -- ++(-0.65,0);
        \draw[tn bond] (P1) -- ++(0.65,0);
        \draw[tn phys] (0,2.68) -- (0,2.88);
        \node[gbox] (Pg) at (0,3.05) {$U(g)$};
        \draw[tn phys] (Pg.north) -- ++(0,0.14);

        \node[align=center] at (1.55,2.62) {\normalsize$\overset{\text{Sym}}{=}$\\[1pt]\scriptsize same MPV};

        % RHS: plain tensor A, for comparison
        \node[tn dot] (P2) at (3.1,2.62) {};
        \node at (3.1,2.32) {$A$};
        \draw[tn bond] (P2) -- ++(-0.65,0);
        \draw[tn bond] (P2) -- ++(0.65,0);
        \draw[tn phys] (3.1,2.68) -- (3.1,3.34);

        % labels sit beside the arrows (not on them, so no white box occludes
        % the shaft or arrowhead)
        \draw[arr] (1.85,2.04) -- node[albl, right=1.5pt]{FT-MPS} (1.85,1.62);

        %% ---- middle tier: virtual gauge identity  \tilde A_g^i = X(g) A^i X(g)^{-1} ----
        \node[tn dot] (L) at (0,1.08) {};
        \node at (0,0.78) {$\tilde A_g$};
        \draw[tn bond] (L) -- ++(-0.65,0);
        \draw[tn bond] (L) -- ++(0.65,0);
        \draw[tn phys] (0,1.14) -- (0,1.42);

        \node at (1.2,1.08) {$=$};

        \node[tn dot] (R) at (3,1.08) {};
        \node at (3,0.78) {$A$};
        \draw[tn phys] (3,1.14) -- (3,1.42);
        \draw[tn bond] (R) -- ++(-0.38,0);
        \node[gbox] (Xg) at (2.30,1.08) {$X(g)$};
        \draw[tn bond] (Xg.west) -- ++(-0.30,0);
        \draw[tn bond] (R) -- ++(0.38,0);
        \node[gbox] (Xgi) at (3.78,1.08) {$X(g)^{-1}$};
        \draw[tn bond] (Xgi.east) -- ++(0.30,0);

        %% ---- bottom tier: gauge -> projective rep -> cohomology class ---------
        % single text lines (no boxes), arrows short with side labels: the same
        % content at less than half the former height
        \draw[arr] (1.85,0.55) -- node[albl, right=1.5pt]{gauge unique up to a scalar} (1.85,0.19);
        \node (proj) at (1.85,-0.05)
            {$\rho(g)\,\rho(h) = \omega(g,h)\,\rho(gh)$: projective rep.\ on the bond space};
        \draw[arr] (1.85,-0.30) -- node[albl, right=1.5pt]{associativity, gauge independence} (1.85,-0.64);
        \node (coh) at (1.85,-0.91)
            {$[\omega]\in H^2(G,\C^\times)$: the topological invariant, the SPT phase label};
    \end{tikzpicture}
    \caption{From the fundamental theorem to the classification of symmetry protected topological phases.
        \emph{Top}: the on-site symmetry $U(g)$ acting on the physical leg of
        an injective MPS tensor $A$ defines $\tilde A_g$; the physical-layer
        assumption is that $\tilde A_g$ generates the same MPV family as $A$
        itself.
        \emph{Middle}: FT-MPS turns this assumption into the virtual-layer
        gauge identity $\tilde A_g^i = X(g)\,A^i\,{X(g)}^{-1}$
        (\cref{thm:spt-invariant}; physical legs drawn upward, bond legs
        horizontal, as in \cref{fig:blocking}).
        \emph{Bottom}: $X(g)$ is unique up to a scalar, so the gauges form a
        projective representation $\rho$ with 2-cocycle $\omega$; its class
        $[\omega]\in H^2(G,\C^\times)$ is the gauge invariant labeling the
        symmetry-protected topological (SPT) phase, $U(1)$-valued after the
        unitary normalization.
        }\label{fig:spt}
\end{figure}%

%======================================================================
\prlsection{Physical application}%
%======================================================================
FT-MPS provides the mechanism behind virtual symmetry actions in 1D tensor networks and the cohomology invariant used in the classification of 1D bosonic SPT phases~\cite{Chen2011Classification,Schuch2011Classifying,Pollmann2010Entanglement,Pollmann2012Symmetry,Perez-Garcia2008String} (\cref{fig:spt}). After the formalization of FT-MPS, we fed the agents with the following article \cite{Perez-Garcia2008String} about string order and symmetries in spin chains, and they autonomously pursued the formalization that led to the following result.

\begin{theorem}[A cohomological invariant from injective symmetric MPS]\label{thm:spt-invariant}
    Let $A$ be an injective MPS tensor of bond dimension $D\geq 1$, and let
    $U:G\to GL_d(\C)$ be an on-site linear representation of a finite group
    $G$. Define $\tilde A_g^i \coloneqq \sum_j {U(g)}_{ij}A^j$.
    Suppose that $A$ is invariant under $U$, in the sense that
    $\ket{\psi_N(A)}=\ket{\psi_N(\tilde A_g)}$ for every $g\in G$ and every
    $N\geq 1$. Then for each $g\in G$ there is an invertible matrix
    $X(g)\in \mathrm{GL}_D(\mathbb C)$, unique up to a nonzero scalar, such that
    \begin{equation}
        \tilde A_g^i = X(g)\,A^i\,{X(g)}^{-1}
    \end{equation}
    for all physical indices $i$. With the convention $\rho(g)\coloneqq X(g^{-1})$,
    the maps $\rho(g)$ form a projective representation of $G$: $\rho(g)\rho(h)=\omega(g,h)\rho(gh)$, $\omega(g,h)\in\mathbb C^\times$.
    The cohomology class $[\omega]\in H^2(G,\mathbb C^\times)$ is unchanged
    by rescaling the gauges $X(g)$, and therefore depends only on the
    symmetric tensor $A$ together with the on-site action $U$.
\end{theorem}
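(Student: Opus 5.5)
The plan is to obtain the existence of $X(g)$ from \cref{thm:bnt-equivalence}, prove uniqueness up to a scalar directly from injectivity, and then read off the projective and cohomological statements by composing gauges.

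\emph{Step 1: existence.} An injective tensor is normal, and in particular a single-block canonical form ($r=1$, $\mu_1=1$). Next I check that $\tilde A_g$ is in canonical form as well. Its transfer operator is
\[
E_{\tilde A_g}(Y)=\sum_{i,j,k}U(g)_{ij}\overline{U(g)_{ik}}\,A^jY(A^k)^\dagger .
\]
This is not obviously equal to $E_A$ unless $U$ is unitary, so I avoid computing it. Instead I use that $U(g)$ is invertible, so $\operatorname{span}\{\tilde A_g^{i}\}=\operatorname{span}\{A^i\}$. The same holds after blocking, since $\tilde A_g^{(L)}$ is obtained from $A^{(L)}$ by the invertible map $U(g)^{\otimes L}$. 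Hence $\tilde A_g$ is injective, thus normal, and \cref{thm:bnt-equivalence} applies to the pair $(A,\tilde A_g)$. It gives an invertible $X(g)$ with $\tilde A_g^i=X(g)A^iX(g)^{-1}$.

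\emph{Step 2: uniqueness.} Suppose $X$ and $X'$ both work. Then $Z=X^{-1}X'$ commutes with every $A^i$. By injectivity, products of the $A^i$ of some fixed length $L$ span $M_D(\C)$, and $Z$ commutes with all of these products. Hence $Z$ is central in $M_D(\C)$, so $Z=c\openone$ with $c\neq0$ because $Z$ is invertible. This is the Skolem–Noether-flavoured step the agents already used for the injective case.

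\emph{Step 3: projective representation.} Compute $\tilde A_{gh}$. Since $U(gh)=U(g)U(h)$,
\[
\tilde A_{gh}^i=\sum_j U(g)_{ij}\tilde A_h^j=\sum_j U(g)_{ij}X(h)A^jX(h)^{-1}=X(h)\tilde A_g^iX(h)^{-1}=X(h)X(g)A^i\bigl(X(h)X(g)\bigr)^{-1}.
\]
By Step 2, $X(h)X(g)=\omega'(g,h)X(gh)$ for some $\omega'\in\C^\times$. The order reversal in this identity is exactly why the statement uses $\rho(g)=X(g^{-1})$. Substituting $g\mapsto h^{-1}$ and $h\mapsto g^{-1}$ gives $\rho(g)\rho(h)=X(g^{-1})X(h^{-1})=\omega(g,h)\rho(gh)$, with $\omega(g,h)=\omega'(h^{-1},g^{-1})$.

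The 2-cocycle identity follows from associativity: expand $\rho(g)\rho(h)\rho(k)$ in the two possible ways and compare coefficients. This is legitimate because $\rho(ghk)\neq0$, so its coefficient is determined.

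\emph{Step 4: gauge independence of $[\omega]$.} Rescale $X(g)\mapsto c(g)X(g)$. Then $\omega(g,h)$ is multiplied by $\beta(g)\beta(h)/\beta(gh)$ with $\beta(g)=c(g^{-1})$, which is a coboundary. Step 2 shows that every admissible choice of gauges differs from a given one by such a rescaling. Therefore $[\omega]\in H^2(G,\C^\times)$ depends only on $A$ and $U$.

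\emph{Main obstacle.} I expect the main difficulty to be Step 1: showing that $\tilde A_g$ satisfies the canonical-form hypotheses of \cref{thm:bnt-equivalence}, specifically that it is normal with spectral radius $1$. Injectivity transfers immediately through the span argument. But the spectral-radius-$1$ normalization of $E_{\tilde A_g}$ is not automatic for non-unitary $U$.

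\begin{itemize}
\item My first attempt would be to derive it from the hypothesis itself. Equality of the MPVs gives
\[
\tr\bigl(E_{\tilde A_g}^N\bigr)=\langle\psi_N(\tilde A_g)|\psi_N(\tilde A_g)\rangle=\langle\psi_N(A)|\psi_N(A)\rangle=\tr\bigl(E_A^N\bigr)
\]
for all $N$. Equal power traces force equal spectra, hence equal spectral radius and equal peripheral structure.
\item Failing that, I would invoke the injective special case of FT-MPS directly. That case needs only injectivity and equality of the MPVs, and so bypasses the normalization issue altogether.
\end{itemize}
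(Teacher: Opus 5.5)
Your proof is correct. Steps~2--4 follow the same chain as the paper:
\begin{itemize}
\item the trivial commutant of $M_D(\C)$ gives uniqueness of the gauge up to a scalar;
\item the reversed composition $X(h)X(g)=\omega'(g,h)X(gh)$ is undone by the convention $\rho(g)=X(g^{-1})$;
\item the cocycle identity follows from associativity;
\item rescaling changes $\omega$ by the coboundary $\beta(g)\beta(h)/\beta(gh)$.
\end{itemize}

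The real difference is Step~1. You extract $X(g)$ from the multi-block equal case, Theorem~\ref{thm:bnt-equivalence}, which forces you to certify that both $A$ and $\tilde A_g$ are in canonical form. The paper instead takes the gauge from the single-block theorem proved via Skolem--Noether (Sec.~\ref{app:ftsb_walkthrough}); the dependency graph links that theorem directly to the symmetry result. Its only hypotheses are that the $A^i$ span $M_D(\C)$ and that the cyclic trace identities hold at every length. Those identities are exactly $\ket{\psi_N(A)}=\ket{\psi_N(\tilde A_g)}$ for all $N\geq 1$. Nothing is assumed about the second tensor: the fact that the $\tilde A_g^i$ span $M_D(\C)$ is derived there from the length-$2$ traces, and no spectral radius ever appears. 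So your second fallback is precisely the paper's argument, and the ``main obstacle'' you flag does not arise on that route.

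It is not a genuine obstacle on your route either. The canonical form carries free scalar weights $\mu_k$, so an injective $A$ is in CF as $A=\mu\,(\mu^{-1}A)$ with $|\mu|^2=r(E_A)>0$. Here spanning makes $E_A$ strictly positive, hence primitive. Your claim ``injective implies normal with $\mu_1=1$'' must be read in this normalized sense; as written it is false for unnormalized $A$. Your power-trace argument, $\tr E_{\tilde A_g}^N=\tr E_A^N$ and hence equal spectra, is also valid but unnecessary.

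What the paper's choice buys is economy. The cohomological invariant then rests only on the short, purely algebraic single-block argument. Your route instead draws in the Perron--Frobenius, Wielandt and canonical-form machinery behind Theorem~\ref{thm:bnt-equivalence}, which gains nothing here because $A$ is injective by hypothesis.
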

\noindent We note that the autoformalized theorem is proven for a linear representation $U:G\to {GL}_d(\C)$, since the formal argument only uses the equality of MPV families and does not require unitarity. In practice, the physical on-site symmetry is taken to be a unitary representation $U:G\to\mathrm{U}(d)$, in which case the virtual gauges can be chosen unitary up to phase. Therefore,  $\omega$ can then be taken to be $U(1)$-valued and the standard invariant $[\omega]\in H^2(G,U(1))$ is recovered. The full SPT classification also involves symmetric parent Hamiltonians and gapped paths; while we do not formalize the full classification here, the TNLean library and existing Mathlib provide very natural starting points to complete the autoformalization in this setting.

%======================================================================
\prlsection{Discussion and outlook}%
%======================================================================
In this work we have built and presented an agent-driven workflow for research-level formalization in theoretical physics, with the autoformalization of the FT-MPS to demonstrate the system's capabilities. The agents are coordinated through a shared mathematical `blueprint' intermittent human review. The core development consists of approximately \nLOC{} lines of code, \nDecls{} declarations, and \nFiles{} files, within a broader Lean~4 tensor-network library of about \nLOCfull{} lines. Building on Mathlib's foundation~\cite{2026Leanprovercommunity,TheMathlibCommunity2020Lean}, TNLean develops verified statements within tensor-network theory as well as quantum information theory required for the FT-MPS, e.g., completely positive maps and quantum Perron-Frobenius theory. We observed that enforcement of the correct mathematical intent is the main bottleneck in large-scale autoformalization that requires us to navigate several technical complications.

These observations clarify why the blueprint and review steps were not auxiliary to the proof search. In a project of this size, correctness depends not only on proving lemmas, but on controlling how the target statement is formulated. A single agent session cannot reliably carry the source papers, the current Lean library, the proof state, and the intended theorem statement at once (see End Matter). It is therefore necessary for the project to be split into separate proof tasks, and the blueprint is used to record the global mathematical specification (e.g., the canonical-form assumptions, the normalization conventions, and consistency with the literature). Between sessions, the persistent memory carried the corrected conventions and the rejected proof routes. The review stage checks whether the formal declarations still expressed the intended theorem faithfully before the Lean kernel checks the proof. This is the setting in which Buzzard's warning that ``definitions are more dangerous than proofs'' is most concrete~\cite{Buzzard2025Formal}.

Along with TNLean library, we release the blueprint, review reports, and perhaps most importantly, ``technique notes'' produced during the project~\cite{TNLean,TNLeanBlueprint}. These materials document the choices that are not visible from the final Lean files alone and facilitate audits: which hypotheses were required, which plausible formulations had to be rejected, and where the formal statements differ from the informal presentation in the literature. We also release a curated selection of the distilled memory files used during the project. We note that the distilled memory files and the agent role specifications are not specific to tensor network formalizations: they record project-independent formalization techniques and conventions, and can seed other autoformalization targets.

We expect that a reusable formal library (like ours), together with the multi-agent workflow that we developed, will help in both the formalization of existing modern results, such as the low individual degree test and the quantum-complexity theorem $\mathrm{MIP}^*=\mathrm{RE}$~\cite{Ji2020LowIndividualDegree,LDTLeanPaper,Ji2021Mip}, as well as attacking conjectures in quantum many-body physics and quantum information and computation~\cite{Meiburg2024LeanQuantumInfo,Meiburg2025Formalization,Ren2026MerLean,Li2026MerLeanprover,Ehatamm2026Endtoend}. Within tensor-network theory itself, the current TNLean also supports broader formalization targets: FT-MPS with boundaries~\cite{Florido-Llinas2025Uniform}, FT for a subclass of higher-dimensional tensor-network states~\cite{Molnar2018Normal}, parent Hamiltonians~\cite{Garre-Rubio2025MPS,Schuch2025Simple}, matrix-product operators, renormalization fixed points~\cite{Cirac2017Matrixa}, and classification of phases of matter~\cite{Chen2011Classification,Schuch2011Classifying,Pollmann2012Symmetry,Chen2012Symmetryprotected}. We leave these further developments of TNLean for future work.\\

\begin{acknowledgments}
    \paragraph*{Data availability}%
    The Lean~4 source code, the formalization blueprint, and the review reports and technique notes that support this study are openly available in the TNLean repository~\cite{TNLean,TNLeanBlueprint}. The multi-agent system was built within the TeXRA software~\cite{TeXRA}, through which it can be used; the system prompts of the five interactive agent (see End Matter) are reproduced in the Supplementary Material~\cite{Supp}, and the reviewer's prompts are released with TNLean.

    \paragraph*{AI Disclosure}
    The Lean~4 code and the blueprint have been fully generated by the multi-agent AI system we developed under our supervision.
    We used LLM-based tools for editing parts of the manuscript text and figures;
    all scientific content was checked by the authors, who take responsibility for it.

    \paragraph*{Acknowledgments}%
    We thank X.-L. Qi for helpful discussion.
    E.T.\ acknowledges support from the Alexander von Humboldt Foundation.
    The work is partially supported by the Deutsche Forschungsgemeinschaft (DFG, German Research Foundation) under Germany's Excellence Strategy -- EXC-2111 -- 390814868.
    This research is part of the Munich Quantum Valley, which is supported by the Bavarian state government with funds from the Hightech Agenda Bayern Plus.
\end{acknowledgments}

\FloatBarrier
\bibliography{additional_references,library}

%apsrev4-2.bst 2019-01-14 (MD) hand-edited version of apsrev4-1.bst
%Control: key (0)
%Control: author (8) initials jnrlst
%Control: editor formatted (1) identically to author
%Control: production of article title (0) allowed
%Control: page (0) single
%Control: year (1) truncated
%Control: production of eprint (0) enabled
\begin{thebibliography}{67}%
\makeatletter
\providecommand \@ifxundefined [1]{%
 \@ifx{#1\undefined}
}%
\providecommand \@ifnum [1]{%
 \ifnum #1\expandafter \@firstoftwo
 \else \expandafter \@secondoftwo
 \fi
}%
\providecommand \@ifx [1]{%
 \ifx #1\expandafter \@firstoftwo
 \else \expandafter \@secondoftwo
 \fi
}%
\providecommand \natexlab [1]{#1}%
\providecommand \enquote  [1]{``#1''}%
\providecommand \bibnamefont  [1]{#1}%
\providecommand \bibfnamefont [1]{#1}%
\providecommand \citenamefont [1]{#1}%
\providecommand \href@noop [0]{\@secondoftwo}%
\providecommand \href [0]{\begingroup \@sanitize@url \@href}%
\providecommand \@href[1]{\@@startlink{#1}\@@href}%
\providecommand \@@href[1]{\endgroup#1\@@endlink}%
\providecommand \@sanitize@url [0]{\catcode `\\12\catcode `\$12\catcode
  `\&12\catcode `\#12\catcode `\^12\catcode `\_12\catcode `\%12\relax}%
\providecommand \@@startlink[1]{}%
\providecommand \@@endlink[0]{}%
\providecommand \url  [0]{\begingroup\@sanitize@url \@url }%
\providecommand \@url [1]{\endgroup\@href {#1}{\urlprefix }}%
\providecommand \urlprefix  [0]{URL }%
\providecommand \Eprint [0]{\href }%
\providecommand \doibase [0]{https://doi.org/}%
\providecommand \selectlanguage [0]{\@gobble}%
\providecommand \bibinfo  [0]{\@secondoftwo}%
\providecommand \bibfield  [0]{\@secondoftwo}%
\providecommand \translation [1]{[#1]}%
\providecommand \BibitemOpen [0]{}%
\providecommand \bibitemStop [0]{}%
\providecommand \bibitemNoStop [0]{.\EOS\space}%
\providecommand \EOS [0]{\spacefactor3000\relax}%
\providecommand \BibitemShut  [1]{\csname bibitem#1\endcsname}%
\let\auto@bib@innerbib\@empty
%</preamble>
\bibitem [{\citenamefont {Gonthier}(2008)}]{Gonthier2008Formal}%
  \BibitemOpen
  \bibfield  {author} {\bibinfo {author} {\bibfnamefont {G.}~\bibnamefont
  {Gonthier}},\ }\bibfield  {title} {\bibinfo {title} {Formal proof---the
  four-color theorem},\ }\href@noop {} {\bibfield  {journal} {\bibinfo
  {journal} {Notices Amer. Math. Soc.}\ }\textbf {\bibinfo {volume} {55}},\
  \bibinfo {pages} {1382} (\bibinfo {year} {2008})}\BibitemShut {NoStop}%
\bibitem [{\citenamefont {Hales}\ \emph {et~al.}(2017)\citenamefont {Hales},
  \citenamefont {Adams}, \citenamefont {Bauer}, \citenamefont {Dang},
  \citenamefont {Harrison}, \citenamefont {Hoang}, \citenamefont {Kaliszyk},
  \citenamefont {Magron}, \citenamefont {Mclaughlin}, \citenamefont {Nguyen},
  \citenamefont {Nguyen}, \citenamefont {Nipkow}, \citenamefont {Obua},
  \citenamefont {Pleso}, \citenamefont {Rute}, \citenamefont {Solovyev},
  \citenamefont {Ta}, \citenamefont {Tran}, \citenamefont {Trieu},
  \citenamefont {Urban}, \citenamefont {Vu},\ and\ \citenamefont
  {Zumkeller}}]{Hales2017Formal}%
  \BibitemOpen
  \bibfield  {author} {\bibinfo {author} {\bibfnamefont {T.}~\bibnamefont
  {Hales}}, \bibinfo {author} {\bibfnamefont {M.}~\bibnamefont {Adams}},
  \bibinfo {author} {\bibfnamefont {G.}~\bibnamefont {Bauer}}, \bibinfo
  {author} {\bibfnamefont {T.~D.}\ \bibnamefont {Dang}}, \bibinfo {author}
  {\bibfnamefont {J.}~\bibnamefont {Harrison}}, \bibinfo {author}
  {\bibfnamefont {L.~T.}\ \bibnamefont {Hoang}}, \bibinfo {author}
  {\bibfnamefont {C.}~\bibnamefont {Kaliszyk}}, \bibinfo {author}
  {\bibfnamefont {V.}~\bibnamefont {Magron}}, \bibinfo {author} {\bibfnamefont
  {S.}~\bibnamefont {Mclaughlin}}, \bibinfo {author} {\bibfnamefont {T.~T.}\
  \bibnamefont {Nguyen}}, \bibinfo {author} {\bibfnamefont {Q.~T.}\
  \bibnamefont {Nguyen}}, \bibinfo {author} {\bibfnamefont {T.}~\bibnamefont
  {Nipkow}}, \bibinfo {author} {\bibfnamefont {S.}~\bibnamefont {Obua}},
  \bibinfo {author} {\bibfnamefont {J.}~\bibnamefont {Pleso}}, \bibinfo
  {author} {\bibfnamefont {J.}~\bibnamefont {Rute}}, \bibinfo {author}
  {\bibfnamefont {A.}~\bibnamefont {Solovyev}}, \bibinfo {author}
  {\bibfnamefont {T.~H.~A.}\ \bibnamefont {Ta}}, \bibinfo {author}
  {\bibfnamefont {N.~T.}\ \bibnamefont {Tran}}, \bibinfo {author}
  {\bibfnamefont {T.~D.}\ \bibnamefont {Trieu}}, \bibinfo {author}
  {\bibfnamefont {J.}~\bibnamefont {Urban}}, \bibinfo {author} {\bibfnamefont
  {K.}~\bibnamefont {Vu}},\ and\ \bibinfo {author} {\bibfnamefont
  {R.}~\bibnamefont {Zumkeller}},\ }\bibfield  {title} {\bibinfo {title} {A
  formal proof of the {{Kepler}} conjecture},\ }\href
  {https://doi.org/10.1017/fmp.2017.1} {\bibfield  {journal} {\bibinfo
  {journal} {Forum Math. Pi}\ }\textbf {\bibinfo {volume} {5}},\ \bibinfo
  {pages} {e2} (\bibinfo {year} {2017})}\BibitemShut {NoStop}%
\bibitem [{\citenamefont {Moura}\ and\ \citenamefont
  {Ullrich}(2021)}]{Moura2021Lean}%
  \BibitemOpen
  \bibfield  {author} {\bibinfo {author} {\bibfnamefont {L.~D.}\ \bibnamefont
  {Moura}}\ and\ \bibinfo {author} {\bibfnamefont {S.}~\bibnamefont
  {Ullrich}},\ }\bibfield  {title} {\bibinfo {title} {The lean 4 theorem prover
  and programming language},\ }in\ \href
  {https://doi.org/10.1007/978-3-030-79876-5_37} {\emph {\bibinfo {booktitle}
  {Automated Deduction -- {{CADE}} 28}}},\ Vol.\ \bibinfo {volume} {12699},\
  \bibinfo {editor} {edited by\ \bibinfo {editor} {\bibfnamefont
  {A.}~\bibnamefont {Platzer}}\ and\ \bibinfo {editor} {\bibfnamefont
  {G.}~\bibnamefont {Sutcliffe}}}\ (\bibinfo  {publisher} {Springer
  International Publishing},\ \bibinfo {address} {Cham},\ \bibinfo {year}
  {2021})\ pp.\ \bibinfo {pages} {625--635}\BibitemShut {NoStop}%
\bibitem [{\citenamefont {Hubert}\ \emph {et~al.}(2026)\citenamefont {Hubert},
  \citenamefont {Mehta}, \citenamefont {Sartran}, \citenamefont {Horv{\'a}th},
  \citenamefont {{\v Z}u{\v z}i{\'c}}, \citenamefont {Wieser}, \citenamefont
  {Huang}, \citenamefont {Schrittwieser}, \citenamefont {Schroecker},
  \citenamefont {Masoom}, \citenamefont {Bertolli}, \citenamefont {Zahavy},
  \citenamefont {Mandhane}, \citenamefont {Yung}, \citenamefont {Beloshapka},
  \citenamefont {Ibarz}, \citenamefont {Veeriah}, \citenamefont {Yu},
  \citenamefont {Nash}, \citenamefont {Lezeau}, \citenamefont {Mercuri},
  \citenamefont {S{\"o}nne}, \citenamefont {Mehta}, \citenamefont {Davies},
  \citenamefont {Zheng}, \citenamefont {Pedregosa}, \citenamefont {Li},
  \citenamefont {Von~Glehn}, \citenamefont {Rowland}, \citenamefont {Albanie},
  \citenamefont {Velingker}, \citenamefont {Schmitt}, \citenamefont {Lockhart},
  \citenamefont {Hughes}, \citenamefont {Michalewski}, \citenamefont
  {Sonnerat}, \citenamefont {Hassabis}, \citenamefont {Kohli},\ and\
  \citenamefont {Silver}}]{Hubert2026Olympiadlevel}%
  \BibitemOpen
  \bibfield  {author} {\bibinfo {author} {\bibfnamefont {T.}~\bibnamefont
  {Hubert}}, \bibinfo {author} {\bibfnamefont {R.}~\bibnamefont {Mehta}},
  \bibinfo {author} {\bibfnamefont {L.}~\bibnamefont {Sartran}}, \bibinfo
  {author} {\bibfnamefont {M.~Z.}\ \bibnamefont {Horv{\'a}th}}, \bibinfo
  {author} {\bibfnamefont {G.}~\bibnamefont {{\v Z}u{\v z}i{\'c}}}, \bibinfo
  {author} {\bibfnamefont {E.}~\bibnamefont {Wieser}}, \bibinfo {author}
  {\bibfnamefont {A.}~\bibnamefont {Huang}}, \bibinfo {author} {\bibfnamefont
  {J.}~\bibnamefont {Schrittwieser}}, \bibinfo {author} {\bibfnamefont
  {Y.}~\bibnamefont {Schroecker}}, \bibinfo {author} {\bibfnamefont
  {H.}~\bibnamefont {Masoom}}, \bibinfo {author} {\bibfnamefont
  {O.}~\bibnamefont {Bertolli}}, \bibinfo {author} {\bibfnamefont
  {T.}~\bibnamefont {Zahavy}}, \bibinfo {author} {\bibfnamefont
  {A.}~\bibnamefont {Mandhane}}, \bibinfo {author} {\bibfnamefont
  {J.}~\bibnamefont {Yung}}, \bibinfo {author} {\bibfnamefont {I.}~\bibnamefont
  {Beloshapka}}, \bibinfo {author} {\bibfnamefont {B.}~\bibnamefont {Ibarz}},
  \bibinfo {author} {\bibfnamefont {V.}~\bibnamefont {Veeriah}}, \bibinfo
  {author} {\bibfnamefont {L.}~\bibnamefont {Yu}}, \bibinfo {author}
  {\bibfnamefont {O.}~\bibnamefont {Nash}}, \bibinfo {author} {\bibfnamefont
  {P.}~\bibnamefont {Lezeau}}, \bibinfo {author} {\bibfnamefont
  {S.}~\bibnamefont {Mercuri}}, \bibinfo {author} {\bibfnamefont
  {C.}~\bibnamefont {S{\"o}nne}}, \bibinfo {author} {\bibfnamefont
  {B.}~\bibnamefont {Mehta}}, \bibinfo {author} {\bibfnamefont
  {A.}~\bibnamefont {Davies}}, \bibinfo {author} {\bibfnamefont
  {D.}~\bibnamefont {Zheng}}, \bibinfo {author} {\bibfnamefont
  {F.}~\bibnamefont {Pedregosa}}, \bibinfo {author} {\bibfnamefont
  {Y.}~\bibnamefont {Li}}, \bibinfo {author} {\bibfnamefont {I.}~\bibnamefont
  {Von~Glehn}}, \bibinfo {author} {\bibfnamefont {M.}~\bibnamefont {Rowland}},
  \bibinfo {author} {\bibfnamefont {S.}~\bibnamefont {Albanie}}, \bibinfo
  {author} {\bibfnamefont {A.}~\bibnamefont {Velingker}}, \bibinfo {author}
  {\bibfnamefont {S.}~\bibnamefont {Schmitt}}, \bibinfo {author} {\bibfnamefont
  {E.}~\bibnamefont {Lockhart}}, \bibinfo {author} {\bibfnamefont
  {E.}~\bibnamefont {Hughes}}, \bibinfo {author} {\bibfnamefont
  {H.}~\bibnamefont {Michalewski}}, \bibinfo {author} {\bibfnamefont
  {N.}~\bibnamefont {Sonnerat}}, \bibinfo {author} {\bibfnamefont
  {D.}~\bibnamefont {Hassabis}}, \bibinfo {author} {\bibfnamefont
  {P.}~\bibnamefont {Kohli}},\ and\ \bibinfo {author} {\bibfnamefont
  {D.}~\bibnamefont {Silver}},\ }\bibfield  {title} {\bibinfo {title}
  {Olympiad-level formal mathematical reasoning with reinforcement learning},\
  }\href {https://doi.org/10.1038/s41586-025-09833-y} {\bibfield  {journal}
  {\bibinfo  {journal} {Nature}\ }\textbf {\bibinfo {volume} {651}},\ \bibinfo
  {pages} {607} (\bibinfo {year} {2026})}\BibitemShut {NoStop}%
\bibitem [{\citenamefont {Ren}(2025)}]{Ren2025DeepSeekproverV2}%
  \BibitemOpen
  \bibfield  {author} {\bibinfo {author} {\bibfnamefont {Z.}~\bibnamefont
  {Ren}},\ }\href {https://arxiv.org/abs/2504.21801} {\bibinfo {title}
  {{{DeepSeek-prover-V2}}: Advancing formal mathematical reasoning via
  reinforcement learning for subgoal decomposition}} (\bibinfo {year} {2025}),\
  \Eprint {https://arxiv.org/abs/2504.21801} {arXiv:2504.21801} \BibitemShut
  {NoStop}%
\bibitem [{\citenamefont {Wang}(2025)}]{Wang2025Kiminaprover}%
  \BibitemOpen
  \bibfield  {author} {\bibinfo {author} {\bibfnamefont {H.}~\bibnamefont
  {Wang}},\ }\href {https://arxiv.org/abs/2504.11354} {\bibinfo {title}
  {Kimina-prover preview: Towards large formal reasoning models with
  reinforcement learning}} (\bibinfo {year} {2025}),\ \Eprint
  {https://arxiv.org/abs/2504.11354} {arXiv:2504.11354} \BibitemShut {NoStop}%
\bibitem [{\citenamefont {Lin}\ \emph {et~al.}(2025)\citenamefont {Lin},
  \citenamefont {Tang},\ and\ \citenamefont {Lyu}}]{Lin2025GoedelproverV2}%
  \BibitemOpen
  \bibfield  {author} {\bibinfo {author} {\bibfnamefont {Y.}~\bibnamefont
  {Lin}}, \bibinfo {author} {\bibfnamefont {S.}~\bibnamefont {Tang}},\ and\
  \bibinfo {author} {\bibfnamefont {B.}~\bibnamefont {Lyu}},\ }\href
  {https://arxiv.org/abs/2508.03613} {\bibinfo {title} {Goedel-prover-{{V2}}:
  Scaling formal theorem proving with scaffolded data synthesis and
  self-correction}} (\bibinfo {year} {2025}),\ \Eprint
  {https://arxiv.org/abs/2508.03613} {arXiv:2508.03613} \BibitemShut {NoStop}%
\bibitem [{\citenamefont {{Math Inc.}}(2025)}]{MathInc.2025Strong}%
  \BibitemOpen
  \bibfield  {author} {\bibinfo {author} {\bibnamefont {{Math Inc.}}},\ }\href
  {https://github.com/math-inc/strongpnt} {\bibinfo {title} {Strong prime
  number theorem: A lean formalization}},\ \bibinfo {howpublished} {Math Inc.}
  (\bibinfo {year} {2025})\BibitemShut {NoStop}%
\bibitem [{\citenamefont {Rammal}\ \emph {et~al.}(2026)\citenamefont {Rammal},
  \citenamefont {Patel}, \citenamefont {Gloeckle}, \citenamefont {Hayat},
  \citenamefont {Kempe}, \citenamefont {Munos}, \citenamefont {Arnal},\ and\
  \citenamefont {Cabannes}}]{Rammal2026Formalizing}%
  \BibitemOpen
  \bibfield  {author} {\bibinfo {author} {\bibfnamefont {A.}~\bibnamefont
  {Rammal}}, \bibinfo {author} {\bibfnamefont {N.}~\bibnamefont {Patel}},
  \bibinfo {author} {\bibfnamefont {F.}~\bibnamefont {Gloeckle}}, \bibinfo
  {author} {\bibfnamefont {A.}~\bibnamefont {Hayat}}, \bibinfo {author}
  {\bibfnamefont {J.}~\bibnamefont {Kempe}}, \bibinfo {author} {\bibfnamefont
  {R.}~\bibnamefont {Munos}}, \bibinfo {author} {\bibfnamefont
  {C.}~\bibnamefont {Arnal}},\ and\ \bibinfo {author} {\bibfnamefont
  {V.}~\bibnamefont {Cabannes}},\ }\href {https://arxiv.org/abs/2605.29955}
  {\bibinfo {title} {Formalizing mathematics at scale}} (\bibinfo {year}
  {2026}),\ \Eprint {https://arxiv.org/abs/2605.29955} {arXiv:2605.29955
  [cs.AI]} \BibitemShut {NoStop}%
\bibitem [{\citenamefont {Lu}\ \emph {et~al.}(2026{\natexlab{a}})\citenamefont
  {Lu}, \citenamefont {Jin}, \citenamefont {Zhang}, \citenamefont {Kos},
  \citenamefont {Cirac},\ and\ \citenamefont {Sch{\"o}lkopf}}]{Lu2026Position}%
  \BibitemOpen
  \bibfield  {author} {\bibinfo {author} {\bibfnamefont {S.}~\bibnamefont
  {Lu}}, \bibinfo {author} {\bibfnamefont {Z.}~\bibnamefont {Jin}}, \bibinfo
  {author} {\bibfnamefont {T.~J.}\ \bibnamefont {Zhang}}, \bibinfo {author}
  {\bibfnamefont {P.}~\bibnamefont {Kos}}, \bibinfo {author} {\bibfnamefont
  {J.~I.}\ \bibnamefont {Cirac}},\ and\ \bibinfo {author} {\bibfnamefont
  {B.}~\bibnamefont {Sch{\"o}lkopf}},\ }\bibfield  {title} {\bibinfo {title}
  {Position: {{LLM}} for physics research requires domain-specialized training
  and tooling},\ }in\ \href {https://openreview.net/forum?id=lHtHR6aZKZ} {\emph
  {\bibinfo {booktitle} {Forty-Third International Conference on Machine
  Learning Position Paper Track}}}\ (\bibinfo {year} {2026})\BibitemShut
  {NoStop}%
\bibitem [{\citenamefont {Commelin}\ and\ \citenamefont
  {Topaz}(2023)}]{Commelin2023Liquid}%
  \BibitemOpen
  \bibfield  {author} {\bibinfo {author} {\bibfnamefont {J.}~\bibnamefont
  {Commelin}}\ and\ \bibinfo {author} {\bibfnamefont {A.}~\bibnamefont
  {Topaz}},\ }\href {https://arxiv.org/abs/2309.14870} {\bibinfo {title}
  {Liquid tensor experiment}} (\bibinfo {year} {2023}),\ \Eprint
  {https://arxiv.org/abs/2309.14870} {arXiv:2309.14870} \BibitemShut {NoStop}%
\bibitem [{\citenamefont {Gowers}\ \emph {et~al.}(2025)\citenamefont {Gowers},
  \citenamefont {Green}, \citenamefont {Manners},\ and\ \citenamefont
  {Tao}}]{Gowers2025Conjecture}%
  \BibitemOpen
  \bibfield  {author} {\bibinfo {author} {\bibfnamefont {W.~T.}\ \bibnamefont
  {Gowers}}, \bibinfo {author} {\bibfnamefont {B.}~\bibnamefont {Green}},
  \bibinfo {author} {\bibfnamefont {F.}~\bibnamefont {Manners}},\ and\ \bibinfo
  {author} {\bibfnamefont {T.}~\bibnamefont {Tao}},\ }\bibfield  {title}
  {\bibinfo {title} {On a conjecture of marton},\ }\href
  {https://doi.org/10.4007/annals.2025.201.2.5} {\bibfield  {journal} {\bibinfo
   {journal} {Ann. Math.}\ }\textbf {\bibinfo {volume} {201}},\ \bibinfo
  {pages} {515} (\bibinfo {year} {2025})},\ \Eprint
  {https://arxiv.org/abs/2311.05762} {arXiv:2311.05762 [math.NT]} \BibitemShut
  {NoStop}%
\bibitem [{\citenamefont {Best}\ \emph {et~al.}(2025)\citenamefont {Best},
  \citenamefont {Birkbeck}, \citenamefont {Brasca}, \citenamefont
  {Rodriguez~Boidi}, \citenamefont {Velde},\ and\ \citenamefont
  {Yang}}]{Best2025Complete}%
  \BibitemOpen
  \bibfield  {author} {\bibinfo {author} {\bibfnamefont {A.}~\bibnamefont
  {Best}}, \bibinfo {author} {\bibfnamefont {C.}~\bibnamefont {Birkbeck}},
  \bibinfo {author} {\bibfnamefont {R.}~\bibnamefont {Brasca}}, \bibinfo
  {author} {\bibfnamefont {E.}~\bibnamefont {Rodriguez~Boidi}}, \bibinfo
  {author} {\bibfnamefont {R.~V.~D.}\ \bibnamefont {Velde}},\ and\ \bibinfo
  {author} {\bibfnamefont {A.}~\bibnamefont {Yang}},\ }\bibfield  {title}
  {\bibinfo {title} {A complete formalization of {{Fermat}}'s last theorem for
  regular primes in lean},\ }\href {https://doi.org/10.46298/afm.14586}
  {\bibfield  {journal} {\bibinfo  {journal} {Ann. Formaliz. Math.}\ }\textbf
  {\bibinfo {volume} {Volume 1}},\ \bibinfo {pages} {14586} (\bibinfo {year}
  {2025})}\BibitemShut {NoStop}%
\bibitem [{\citenamefont {Buzzard}\ and\ \citenamefont {{Flt Project
  Contributors}}(2024)}]{Buzzard2024Lean}%
  \BibitemOpen
  \bibfield  {author} {\bibinfo {author} {\bibfnamefont {K.}~\bibnamefont
  {Buzzard}}\ and\ \bibinfo {author} {\bibnamefont {{Flt Project
  Contributors}}},\ }\href {https://imperialcollegelondon.github.io/FLT/}
  {\bibinfo {title} {Towards a lean proof of {{Fermat}}'s last theorem}},\
  \bibinfo {howpublished} {Imperial College London} (\bibinfo {year}
  {2024})\BibitemShut {NoStop}%
\bibitem [{\citenamefont {{Axiom}}(2025)}]{Axiom2025AxiomProver}%
  \BibitemOpen
  \bibfield  {author} {\bibinfo {author} {\bibnamefont {{Axiom}}},\ }\href
  {https://axiommath.ai} {\bibinfo {title} {{{AxiomProver}}}} (\bibinfo {year}
  {2025})\BibitemShut {NoStop}%
\bibitem [{\citenamefont {{Harmonic}}(2025)}]{Harmonic2025Aristotle}%
  \BibitemOpen
  \bibfield  {author} {\bibinfo {author} {\bibnamefont {{Harmonic}}},\ }\href
  {https://harmonic.fun} {\bibinfo {title} {Aristotle}} (\bibinfo {year}
  {2025})\BibitemShut {NoStop}%
\bibitem [{\citenamefont {Wu}\ \emph {et~al.}(2022)\citenamefont {Wu},
  \citenamefont {Jiang}, \citenamefont {Li}, \citenamefont {Rabe},
  \citenamefont {Staats}, \citenamefont {Jamnik},\ and\ \citenamefont
  {Szegedy}}]{Wu2022Autoformalization}%
  \BibitemOpen
  \bibfield  {author} {\bibinfo {author} {\bibfnamefont {Y.}~\bibnamefont
  {Wu}}, \bibinfo {author} {\bibfnamefont {A.~Q.}\ \bibnamefont {Jiang}},
  \bibinfo {author} {\bibfnamefont {W.}~\bibnamefont {Li}}, \bibinfo {author}
  {\bibfnamefont {M.~N.}\ \bibnamefont {Rabe}}, \bibinfo {author}
  {\bibfnamefont {C.}~\bibnamefont {Staats}}, \bibinfo {author} {\bibfnamefont
  {M.}~\bibnamefont {Jamnik}},\ and\ \bibinfo {author} {\bibfnamefont
  {C.}~\bibnamefont {Szegedy}},\ }\bibfield  {title} {\bibinfo {title}
  {Autoformalization with large language models},\ }in\ \href@noop {} {\emph
  {\bibinfo {booktitle} {Adv. {{Neural}} Inf. {{Process}}. {{Syst}}. 35
  ({{NeurIPS}} 2022)}}}\ (\bibinfo {year} {2022})\ pp.\ \bibinfo {pages}
  {32353--32368},\ \Eprint {https://arxiv.org/abs/2205.12615} {arXiv:2205.12615
  [cs.LG]} \BibitemShut {NoStop}%
\bibitem [{\citenamefont {Wang}\ \emph {et~al.}(2025)\citenamefont {Wang},
  \citenamefont {Pan}, \citenamefont {Li}, \citenamefont {Zhang}, \citenamefont
  {Jia}, \citenamefont {Diao}, \citenamefont {Pi}, \citenamefont {Hu},\ and\
  \citenamefont {Zhang}}]{Wang2025MALoT}%
  \BibitemOpen
  \bibfield  {author} {\bibinfo {author} {\bibfnamefont {R.}~\bibnamefont
  {Wang}}, \bibinfo {author} {\bibfnamefont {R.}~\bibnamefont {Pan}}, \bibinfo
  {author} {\bibfnamefont {Y.}~\bibnamefont {Li}}, \bibinfo {author}
  {\bibfnamefont {J.}~\bibnamefont {Zhang}}, \bibinfo {author} {\bibfnamefont
  {Y.}~\bibnamefont {Jia}}, \bibinfo {author} {\bibfnamefont {S.}~\bibnamefont
  {Diao}}, \bibinfo {author} {\bibfnamefont {R.}~\bibnamefont {Pi}}, \bibinfo
  {author} {\bibfnamefont {J.}~\bibnamefont {Hu}},\ and\ \bibinfo {author}
  {\bibfnamefont {T.}~\bibnamefont {Zhang}},\ }\href@noop {} {\bibinfo {title}
  {{{MA-LoT}}: Model-collaboration lean-based long chain-of-thought reasoning
  enhances formal theorem proving}} (\bibinfo {year} {2025}),\ \Eprint
  {https://arxiv.org/abs/2503.03205} {arXiv:2503.03205} \BibitemShut {NoStop}%
\bibitem [{\citenamefont {Yang}\ \emph {et~al.}(2023)\citenamefont {Yang},
  \citenamefont {Swope}, \citenamefont {Gu}, \citenamefont {Chalamala},
  \citenamefont {Song}, \citenamefont {Yu}, \citenamefont {Godil},
  \citenamefont {Prenger},\ and\ \citenamefont
  {Anandkumar}}]{Yang2023LeanDojo}%
  \BibitemOpen
  \bibfield  {author} {\bibinfo {author} {\bibfnamefont {K.}~\bibnamefont
  {Yang}}, \bibinfo {author} {\bibfnamefont {A.}~\bibnamefont {Swope}},
  \bibinfo {author} {\bibfnamefont {A.}~\bibnamefont {Gu}}, \bibinfo {author}
  {\bibfnamefont {R.}~\bibnamefont {Chalamala}}, \bibinfo {author}
  {\bibfnamefont {P.}~\bibnamefont {Song}}, \bibinfo {author} {\bibfnamefont
  {S.}~\bibnamefont {Yu}}, \bibinfo {author} {\bibfnamefont {S.}~\bibnamefont
  {Godil}}, \bibinfo {author} {\bibfnamefont {R.}~\bibnamefont {Prenger}},\
  and\ \bibinfo {author} {\bibfnamefont {A.}~\bibnamefont {Anandkumar}},\
  }\bibfield  {title} {\bibinfo {title} {{{LeanDojo}}: Theorem proving with
  retrieval-augmented language models},\ }in\ \href@noop {} {\emph {\bibinfo
  {booktitle} {Adv. {{Neural}} Inf. {{Process}}. {{Syst}}.}}}\ (\bibinfo {year}
  {2023})\BibitemShut {NoStop}%
\bibitem [{202(2025)}]{2025Introducing}%
  \BibitemOpen
  \href {https://math.inc/gauss} {\bibinfo {title} {Introducing {{Gauss}}, an
  agent for autoformalization}} (\bibinfo {year} {2025})\BibitemShut {NoStop}%
\bibitem [{\citenamefont {{Physlib
  Contributors}}(2024)}]{PhyslibContributors2024PhysLib}%
  \BibitemOpen
  \bibfield  {author} {\bibinfo {author} {\bibnamefont {{Physlib
  Contributors}}},\ }\href {https://github.com/leanprover-community/physlib}
  {\bibinfo {title} {{{PhysLib}}: A lean 4 library for physics}},\ \bibinfo
  {howpublished} {leanprover-community} (\bibinfo {year} {2024})\BibitemShut
  {NoStop}%
\bibitem [{\citenamefont {Brand{\~a}o}\ and\ \citenamefont
  {Plenio}(2010)}]{Brandao2010Generalization}%
  \BibitemOpen
  \bibfield  {author} {\bibinfo {author} {\bibfnamefont {F.~G. S.~L.}\
  \bibnamefont {Brand{\~a}o}}\ and\ \bibinfo {author} {\bibfnamefont {M.~B.}\
  \bibnamefont {Plenio}},\ }\bibfield  {title} {\bibinfo {title} {A
  generalization of quantum stein's lemma},\ }\href
  {https://doi.org/10.1007/s00220-010-1005-z} {\bibfield  {journal} {\bibinfo
  {journal} {Commun. Math. Phys.}\ }\textbf {\bibinfo {volume} {295}},\
  \bibinfo {pages} {791} (\bibinfo {year} {2010})}\BibitemShut {NoStop}%
\bibitem [{\citenamefont {Berta}\ \emph {et~al.}(2023)\citenamefont {Berta},
  \citenamefont {Brand{\~a}o}, \citenamefont {Gour}, \citenamefont {Lami},
  \citenamefont {Plenio}, \citenamefont {Regula},\ and\ \citenamefont
  {Tomamichel}}]{Berta2023Gap}%
  \BibitemOpen
  \bibfield  {author} {\bibinfo {author} {\bibfnamefont {M.}~\bibnamefont
  {Berta}}, \bibinfo {author} {\bibfnamefont {F.~G. S.~L.}\ \bibnamefont
  {Brand{\~a}o}}, \bibinfo {author} {\bibfnamefont {G.}~\bibnamefont {Gour}},
  \bibinfo {author} {\bibfnamefont {L.}~\bibnamefont {Lami}}, \bibinfo {author}
  {\bibfnamefont {M.~B.}\ \bibnamefont {Plenio}}, \bibinfo {author}
  {\bibfnamefont {B.}~\bibnamefont {Regula}},\ and\ \bibinfo {author}
  {\bibfnamefont {M.}~\bibnamefont {Tomamichel}},\ }\bibfield  {title}
  {\bibinfo {title} {On a gap in the proof of the generalised quantum stein's
  lemma and its consequences for the reversibility of quantum resources},\
  }\href {https://doi.org/10.22331/q-2023-09-07-1103} {\bibfield  {journal}
  {\bibinfo  {journal} {Quantum}\ }\textbf {\bibinfo {volume} {7}},\ \bibinfo
  {pages} {1103} (\bibinfo {year} {2023})}\BibitemShut {NoStop}%
\bibitem [{\citenamefont {Hayashi}\ and\ \citenamefont
  {Yamasaki}(2025)}]{Hayashi2025Generalized}%
  \BibitemOpen
  \bibfield  {author} {\bibinfo {author} {\bibfnamefont {M.}~\bibnamefont
  {Hayashi}}\ and\ \bibinfo {author} {\bibfnamefont {H.}~\bibnamefont
  {Yamasaki}},\ }\bibfield  {title} {\bibinfo {title} {The generalized quantum
  stein's lemma and the second law of quantum resource theories},\ }\href
  {https://doi.org/10.1038/s41567-025-03047-9} {\bibfield  {journal} {\bibinfo
  {journal} {Nat. Phys.}\ }\textbf {\bibinfo {volume} {21}},\ \bibinfo {pages}
  {1988} (\bibinfo {year} {2025})}\BibitemShut {NoStop}%
\bibitem [{\citenamefont {Lami}(2025)}]{Lami2025Solution}%
  \BibitemOpen
  \bibfield  {author} {\bibinfo {author} {\bibfnamefont {L.}~\bibnamefont
  {Lami}},\ }\bibfield  {title} {\bibinfo {title} {A solution of the
  generalized quantum stein's lemma},\ }\href
  {https://doi.org/10.1109/TIT.2025.3543610} {\bibfield  {journal} {\bibinfo
  {journal} {IEEE Trans. Inf. Theory}\ }\textbf {\bibinfo {volume} {71}},\
  \bibinfo {pages} {4454} (\bibinfo {year} {2025})}\BibitemShut {NoStop}%
\bibitem [{\citenamefont {Meiburg}(2024)}]{Meiburg2024LeanQuantumInfo}%
  \BibitemOpen
  \bibfield  {author} {\bibinfo {author} {\bibfnamefont {A.}~\bibnamefont
  {Meiburg}},\ }\href {https://github.com/Timeroot/Lean-QuantumInfo} {\bibinfo
  {title} {Lean-{{QuantumInfo}}}} (\bibinfo {year} {2024})\BibitemShut
  {NoStop}%
\bibitem [{\citenamefont {Meiburg}\ \emph {et~al.}(2025)\citenamefont
  {Meiburg}, \citenamefont {Lessa},\ and\ \citenamefont
  {Soldati}}]{Meiburg2025Formalization}%
  \BibitemOpen
  \bibfield  {author} {\bibinfo {author} {\bibfnamefont {A.}~\bibnamefont
  {Meiburg}}, \bibinfo {author} {\bibfnamefont {L.~A.}\ \bibnamefont {Lessa}},\
  and\ \bibinfo {author} {\bibfnamefont {R.~R.}\ \bibnamefont {Soldati}},\
  }\href@noop {} {\bibinfo {title} {A formalization of the generalized quantum
  stein's lemma in lean}} (\bibinfo {year} {2025})\BibitemShut {NoStop}%
\bibitem [{\citenamefont {{Perez-Garcia}}\ \emph {et~al.}(2007)\citenamefont
  {{Perez-Garcia}}, \citenamefont {Verstraete}, \citenamefont {Wolf},\ and\
  \citenamefont {Cirac}}]{Perez-Garcia2007Matrix}%
  \BibitemOpen
  \bibfield  {author} {\bibinfo {author} {\bibfnamefont {D.}~\bibnamefont
  {{Perez-Garcia}}}, \bibinfo {author} {\bibfnamefont {F.}~\bibnamefont
  {Verstraete}}, \bibinfo {author} {\bibfnamefont {M.}~\bibnamefont {Wolf}},\
  and\ \bibinfo {author} {\bibfnamefont {J.}~\bibnamefont {Cirac}},\ }\bibfield
   {title} {\bibinfo {title} {Matrix product state representations},\ }\href
  {https://doi.org/10.26421/QIC7.5-6-1} {\bibfield  {journal} {\bibinfo
  {journal} {Quantum Inf. Comput.}\ }\textbf {\bibinfo {volume} {7}},\ \bibinfo
  {pages} {401} (\bibinfo {year} {2007})}\BibitemShut {NoStop}%
\bibitem [{\citenamefont {Cirac}\ \emph
  {et~al.}(2017{\natexlab{a}})\citenamefont {Cirac}, \citenamefont
  {{P{\'e}rez-Garc{\'i}a}}, \citenamefont {Schuch},\ and\ \citenamefont
  {Verstraete}}]{Cirac2017Matrixa}%
  \BibitemOpen
  \bibfield  {author} {\bibinfo {author} {\bibfnamefont {J.}~\bibnamefont
  {Cirac}}, \bibinfo {author} {\bibfnamefont {D.}~\bibnamefont
  {{P{\'e}rez-Garc{\'i}a}}}, \bibinfo {author} {\bibfnamefont {N.}~\bibnamefont
  {Schuch}},\ and\ \bibinfo {author} {\bibfnamefont {F.}~\bibnamefont
  {Verstraete}},\ }\bibfield  {title} {\bibinfo {title} {Matrix product density
  operators: Renormalization fixed points and boundary theories},\ }\href
  {https://doi.org/10.1016/j.aop.2016.12.030} {\bibfield  {journal} {\bibinfo
  {journal} {Ann. Phys.}\ }\textbf {\bibinfo {volume} {378}},\ \bibinfo {pages}
  {100} (\bibinfo {year} {2017}{\natexlab{a}})}\BibitemShut {NoStop}%
\bibitem [{\citenamefont {Cirac}\ \emph {et~al.}(2021)\citenamefont {Cirac},
  \citenamefont {{P{\'e}rez-Garc{\'i}a}}, \citenamefont {Schuch},\ and\
  \citenamefont {Verstraete}}]{Cirac2021Matrix}%
  \BibitemOpen
  \bibfield  {author} {\bibinfo {author} {\bibfnamefont {J.~I.}\ \bibnamefont
  {Cirac}}, \bibinfo {author} {\bibfnamefont {D.}~\bibnamefont
  {{P{\'e}rez-Garc{\'i}a}}}, \bibinfo {author} {\bibfnamefont {N.}~\bibnamefont
  {Schuch}},\ and\ \bibinfo {author} {\bibfnamefont {F.}~\bibnamefont
  {Verstraete}},\ }\bibfield  {title} {\bibinfo {title} {Matrix product states
  and projected entangled pair states: Concepts, symmetries, theorems},\ }\href
  {https://doi.org/10.1103/RevModPhys.93.045003} {\bibfield  {journal}
  {\bibinfo  {journal} {Rev. Mod. Phys.}\ }\textbf {\bibinfo {volume} {93}},\
  \bibinfo {pages} {045003} (\bibinfo {year} {2021})}\BibitemShut {NoStop}%
\bibitem [{\citenamefont {Massot}(2021)}]{Massot2021Leanblueprint}%
  \BibitemOpen
  \bibfield  {author} {\bibinfo {author} {\bibfnamefont {P.}~\bibnamefont
  {Massot}},\ }\href {https://github.com/PatrickMassot/leanblueprint} {\bibinfo
  {title} {Leanblueprint}} (\bibinfo {year} {2021})\BibitemShut {NoStop}%
\bibitem [{\citenamefont {Lu}\ \emph {et~al.}(2026{\natexlab{b}})\citenamefont
  {Lu}, \citenamefont {Tjoa},\ and\ \citenamefont {Cirac}}]{TNLeanBlueprint}%
  \BibitemOpen
  \bibfield  {author} {\bibinfo {author} {\bibfnamefont {S.}~\bibnamefont
  {Lu}}, \bibinfo {author} {\bibfnamefont {E.}~\bibnamefont {Tjoa}},\ and\
  \bibinfo {author} {\bibfnamefont {J.~I.}\ \bibnamefont {Cirac}},\ }\href@noop
  {} {\bibinfo {title} {A formalization blueprint for the fundamental theorem
  of matrix product states}},\ \bibinfo {howpublished}
  {https://lionsr.github.io/TNLean/blueprint/} (\bibinfo {year}
  {2026}{\natexlab{b}})\BibitemShut {NoStop}%
\bibitem [{\citenamefont {Chen}\ \emph {et~al.}(2011)\citenamefont {Chen},
  \citenamefont {Gu},\ and\ \citenamefont {Wen}}]{Chen2011Classification}%
  \BibitemOpen
  \bibfield  {author} {\bibinfo {author} {\bibfnamefont {X.}~\bibnamefont
  {Chen}}, \bibinfo {author} {\bibfnamefont {Z.-C.}\ \bibnamefont {Gu}},\ and\
  \bibinfo {author} {\bibfnamefont {X.-G.}\ \bibnamefont {Wen}},\ }\bibfield
  {title} {\bibinfo {title} {Classification of gapped symmetric phases in
  one-dimensional spin systems},\ }\href
  {https://doi.org/10.1103/PhysRevB.83.035107} {\bibfield  {journal} {\bibinfo
  {journal} {Phys. Rev. B}\ }\textbf {\bibinfo {volume} {83}},\ \bibinfo
  {pages} {035107} (\bibinfo {year} {2011})}\BibitemShut {NoStop}%
\bibitem [{\citenamefont {Schuch}\ \emph {et~al.}(2011)\citenamefont {Schuch},
  \citenamefont {{P{\'e}rez-Garc{\'i}a}},\ and\ \citenamefont
  {Cirac}}]{Schuch2011Classifying}%
  \BibitemOpen
  \bibfield  {author} {\bibinfo {author} {\bibfnamefont {N.}~\bibnamefont
  {Schuch}}, \bibinfo {author} {\bibfnamefont {D.}~\bibnamefont
  {{P{\'e}rez-Garc{\'i}a}}},\ and\ \bibinfo {author} {\bibfnamefont
  {I.}~\bibnamefont {Cirac}},\ }\bibfield  {title} {\bibinfo {title}
  {Classifying quantum phases using matrix product states and projected
  entangled pair states},\ }\href {https://doi.org/10.1103/PhysRevB.84.165139}
  {\bibfield  {journal} {\bibinfo  {journal} {Phys. Rev. B}\ }\textbf {\bibinfo
  {volume} {84}},\ \bibinfo {pages} {165139} (\bibinfo {year}
  {2011})}\BibitemShut {NoStop}%
\bibitem [{\citenamefont {Pollmann}\ \emph {et~al.}(2010)\citenamefont
  {Pollmann}, \citenamefont {Turner}, \citenamefont {Berg},\ and\ \citenamefont
  {Oshikawa}}]{Pollmann2010Entanglement}%
  \BibitemOpen
  \bibfield  {author} {\bibinfo {author} {\bibfnamefont {F.}~\bibnamefont
  {Pollmann}}, \bibinfo {author} {\bibfnamefont {A.~M.}\ \bibnamefont
  {Turner}}, \bibinfo {author} {\bibfnamefont {E.}~\bibnamefont {Berg}},\ and\
  \bibinfo {author} {\bibfnamefont {M.}~\bibnamefont {Oshikawa}},\ }\bibfield
  {title} {\bibinfo {title} {Entanglement spectrum of a topological phase in
  one dimension},\ }\href {https://doi.org/10.1103/physrevb.81.064439}
  {\bibfield  {journal} {\bibinfo  {journal} {Phys. Rev. B}\ }\textbf {\bibinfo
  {volume} {81}},\ \bibinfo {pages} {064439} (\bibinfo {year}
  {2010})}\BibitemShut {NoStop}%
\bibitem [{\citenamefont {Pollmann}\ \emph {et~al.}(2012)\citenamefont
  {Pollmann}, \citenamefont {Berg}, \citenamefont {Turner},\ and\ \citenamefont
  {Oshikawa}}]{Pollmann2012Symmetry}%
  \BibitemOpen
  \bibfield  {author} {\bibinfo {author} {\bibfnamefont {F.}~\bibnamefont
  {Pollmann}}, \bibinfo {author} {\bibfnamefont {E.}~\bibnamefont {Berg}},
  \bibinfo {author} {\bibfnamefont {A.~M.}\ \bibnamefont {Turner}},\ and\
  \bibinfo {author} {\bibfnamefont {M.}~\bibnamefont {Oshikawa}},\ }\bibfield
  {title} {\bibinfo {title} {Symmetry protection of topological phases in
  one-dimensional quantum spin systems},\ }\href
  {https://doi.org/10.1103/physrevb.85.075125} {\bibfield  {journal} {\bibinfo
  {journal} {Phys. Rev. B}\ }\textbf {\bibinfo {volume} {85}},\ \bibinfo
  {pages} {075125} (\bibinfo {year} {2012})}\BibitemShut {NoStop}%
\bibitem [{TNL(2026)}]{TNLean}%
  \BibitemOpen
  \href@noop {} {\bibinfo {title} {{TNLean}: A {Lean~4} tensor-network
  library}},\ \bibinfo {howpublished} {https://github.com/lionsr/TNLean}
  (\bibinfo {year} {2026})\BibitemShut {NoStop}%
\bibitem [{\citenamefont {Verstraete}\ and\ \citenamefont
  {Cirac}(2006)}]{Verstraete2006Matrix}%
  \BibitemOpen
  \bibfield  {author} {\bibinfo {author} {\bibfnamefont {F.}~\bibnamefont
  {Verstraete}}\ and\ \bibinfo {author} {\bibfnamefont {J.~I.}\ \bibnamefont
  {Cirac}},\ }\bibfield  {title} {\bibinfo {title} {Matrix product states
  represent ground states faithfully},\ }\href
  {https://doi.org/10.1103/PhysRevB.73.094423} {\bibfield  {journal} {\bibinfo
  {journal} {Phys. Rev. B}\ }\textbf {\bibinfo {volume} {73}},\ \bibinfo
  {pages} {094423} (\bibinfo {year} {2006})}\BibitemShut {NoStop}%
\bibitem [{\citenamefont {Cirac}\ \emph
  {et~al.}(2017{\natexlab{b}})\citenamefont {Cirac}, \citenamefont
  {{P{\'e}rez-Garc{\'i}a}}, \citenamefont {Schuch},\ and\ \citenamefont
  {Verstraete}}]{Cirac2017Matrix}%
  \BibitemOpen
  \bibfield  {author} {\bibinfo {author} {\bibfnamefont {J.~I.}\ \bibnamefont
  {Cirac}}, \bibinfo {author} {\bibfnamefont {D.}~\bibnamefont
  {{P{\'e}rez-Garc{\'i}a}}}, \bibinfo {author} {\bibfnamefont {N.}~\bibnamefont
  {Schuch}},\ and\ \bibinfo {author} {\bibfnamefont {F.}~\bibnamefont
  {Verstraete}},\ }\bibfield  {title} {\bibinfo {title} {Matrix product
  unitaries: Structure, symmetries, and topological invariants},\ }\href
  {https://doi.org/10.1088/1742-5468/aa7e55} {\bibfield  {journal} {\bibinfo
  {journal} {J. Stat. Mech: Theory Exp.}\ }\textbf {\bibinfo {volume} {2017}},\
  \bibinfo {pages} {083105} (\bibinfo {year} {2017}{\natexlab{b}})},\ \Eprint
  {https://arxiv.org/abs/1710.04799} {arXiv:1710.04799 [math.CT]} \BibitemShut
  {NoStop}%
\bibitem [{\citenamefont {Liu}\ \emph {et~al.}(2026)\citenamefont {Liu},
  \citenamefont {{Ruiz-de-Alarc{\'o}n}}, \citenamefont {Styliaris},
  \citenamefont {Sun}, \citenamefont {{P{\'e}rez-Garc{\'i}a}},\ and\
  \citenamefont {Cirac}}]{Liu2026Parent}%
  \BibitemOpen
  \bibfield  {author} {\bibinfo {author} {\bibfnamefont {Y.}~\bibnamefont
  {Liu}}, \bibinfo {author} {\bibfnamefont {A.}~\bibnamefont
  {{Ruiz-de-Alarc{\'o}n}}}, \bibinfo {author} {\bibfnamefont {G.}~\bibnamefont
  {Styliaris}}, \bibinfo {author} {\bibfnamefont {X.-Q.}\ \bibnamefont {Sun}},
  \bibinfo {author} {\bibfnamefont {D.}~\bibnamefont
  {{P{\'e}rez-Garc{\'i}a}}},\ and\ \bibinfo {author} {\bibfnamefont {J.~I.}\
  \bibnamefont {Cirac}},\ }\bibfield  {title} {\bibinfo {title} {Parent
  lindbladians for matrix product density operators},\ }\href
  {https://doi.org/10.1103/1qyd-59z7} {\bibfield  {journal} {\bibinfo
  {journal} {Phys. Rev. Res.}\ }\textbf {\bibinfo {volume} {8}},\ \bibinfo
  {pages} {013210} (\bibinfo {year} {2026})}\BibitemShut {NoStop}%
\bibitem [{\citenamefont {Fannes}\ \emph {et~al.}(1992)\citenamefont {Fannes},
  \citenamefont {Nachtergaele},\ and\ \citenamefont
  {Werner}}]{Fannes1992Finitely}%
  \BibitemOpen
  \bibfield  {author} {\bibinfo {author} {\bibfnamefont {M.}~\bibnamefont
  {Fannes}}, \bibinfo {author} {\bibfnamefont {B.}~\bibnamefont
  {Nachtergaele}},\ and\ \bibinfo {author} {\bibfnamefont {R.~F.}\ \bibnamefont
  {Werner}},\ }\bibfield  {title} {\bibinfo {title} {Finitely correlated states
  on quantum spin chains},\ }\href {https://doi.org/10.1007/BF02099178}
  {\bibfield  {journal} {\bibinfo  {journal} {Commun. Math. Phys.}\ }\textbf
  {\bibinfo {volume} {144}},\ \bibinfo {pages} {443} (\bibinfo {year}
  {1992})}\BibitemShut {NoStop}%
\bibitem [{Sup()}]{Supp}%
  \BibitemOpen
  \href@noop {} {}\bibinfo {note} {See {S}upplemental {M}aterial at [{URL} will
  be inserted by publisher] for details of the agent architecture and {Lean}
  interface, the blueprint, the persistent-memory system, orchestration
  patterns, cost analysis, a representative proof excerpt, system prompts, and
  a paper-to-formalization gap analysis}\BibitemShut {NoStop}%
\bibitem [{Note1()}]{Note1}%
  \BibitemOpen
  \bibinfo {note} {The formalization follows the $(D^2-d+1)D^2 = O(D^4)$
  blocking bound of the quantum Wielandt inequality~\cite {Sanz2010Quantum};
  whether this can be improved to the optimal scaling remains an interesting
  open question~\cite {Michalek2019Quantum,Shitov2023Growth}.}\BibitemShut
  {Stop}%
\bibitem [{\citenamefont {Wolf}(2012)}]{Wolf2012Quantum}%
  \BibitemOpen
  \bibfield  {author} {\bibinfo {author} {\bibfnamefont {M.~M.}\ \bibnamefont
  {Wolf}},\ }\href {https://mediatum.ub.tum.de/doc/1701036/1701036.pdf} {\emph
  {\bibinfo {title} {Quantum Channels \& Operations: Guided Tour}}},\ \bibinfo
  {type} {Lecture Notes}\ (\bibinfo  {institution} {Technische Universit\"at
  M\"unchen},\ \bibinfo {year} {2012})\BibitemShut {NoStop}%
\bibitem [{\citenamefont {Sanz}\ \emph {et~al.}(2010)\citenamefont {Sanz},
  \citenamefont {{Perez-Garcia}}, \citenamefont {Wolf},\ and\ \citenamefont
  {Cirac}}]{Sanz2010Quantum}%
  \BibitemOpen
  \bibfield  {author} {\bibinfo {author} {\bibfnamefont {M.}~\bibnamefont
  {Sanz}}, \bibinfo {author} {\bibfnamefont {D.}~\bibnamefont
  {{Perez-Garcia}}}, \bibinfo {author} {\bibfnamefont {M.~M.}\ \bibnamefont
  {Wolf}},\ and\ \bibinfo {author} {\bibfnamefont {J.~I.}\ \bibnamefont
  {Cirac}},\ }\bibfield  {title} {\bibinfo {title} {A quantum version of
  wielandt's inequality},\ }\href {https://doi.org/10.1109/TIT.2010.2054552}
  {\bibfield  {journal} {\bibinfo  {journal} {IEEE Trans. Inf. Theory}\
  }\textbf {\bibinfo {volume} {56}},\ \bibinfo {pages} {4668} (\bibinfo {year}
  {2010})}\BibitemShut {NoStop}%
\bibitem [{Note2()}]{Note2}%
  \BibitemOpen
  \bibinfo {note} {After the formalization for the fundamental theorem has been
  completed, the positive and completely positive maps between $C^*$-algebras
  were added to Mathlib 4.31; TNLean has since been updated to build on them,
  where it had previously constructed these notions itself.}\BibitemShut
  {Stop}%
\bibitem [{\citenamefont {{P{\'e}rez-Garc{\'i}a}}\ \emph
  {et~al.}(2008)\citenamefont {{P{\'e}rez-Garc{\'i}a}}, \citenamefont {Wolf},
  \citenamefont {Sanz}, \citenamefont {Verstraete},\ and\ \citenamefont
  {Cirac}}]{Perez-Garcia2008String}%
  \BibitemOpen
  \bibfield  {author} {\bibinfo {author} {\bibfnamefont {D.}~\bibnamefont
  {{P{\'e}rez-Garc{\'i}a}}}, \bibinfo {author} {\bibfnamefont {M.~M.}\
  \bibnamefont {Wolf}}, \bibinfo {author} {\bibfnamefont {M.}~\bibnamefont
  {Sanz}}, \bibinfo {author} {\bibfnamefont {F.}~\bibnamefont {Verstraete}},\
  and\ \bibinfo {author} {\bibfnamefont {J.~I.}\ \bibnamefont {Cirac}},\
  }\bibfield  {title} {\bibinfo {title} {String order and symmetries in quantum
  spin lattices},\ }\href {https://doi.org/10.1103/PhysRevLett.100.167202}
  {\bibfield  {journal} {\bibinfo  {journal} {Phys. Rev. Lett.}\ }\textbf
  {\bibinfo {volume} {100}},\ \bibinfo {pages} {167202} (\bibinfo {year}
  {2008})}\BibitemShut {NoStop}%
\bibitem [{\citenamefont {Chen}\ \emph {et~al.}(2013)\citenamefont {Chen},
  \citenamefont {Gu}, \citenamefont {Liu},\ and\ \citenamefont
  {Wen}}]{Chen2013Symmetry}%
  \BibitemOpen
  \bibfield  {author} {\bibinfo {author} {\bibfnamefont {X.}~\bibnamefont
  {Chen}}, \bibinfo {author} {\bibfnamefont {Z.-C.}\ \bibnamefont {Gu}},
  \bibinfo {author} {\bibfnamefont {Z.-X.}\ \bibnamefont {Liu}},\ and\ \bibinfo
  {author} {\bibfnamefont {X.-G.}\ \bibnamefont {Wen}},\ }\bibfield  {title}
  {\bibinfo {title} {Symmetry protected topological orders and the group
  cohomology of their symmetry group},\ }\href
  {https://doi.org/10.1103/physrevb.87.155114} {\bibfield  {journal} {\bibinfo
  {journal} {Phys. Rev. B}\ }\textbf {\bibinfo {volume} {87}},\ \bibinfo
  {pages} {155114} (\bibinfo {year} {2013})}\BibitemShut {NoStop}%
\bibitem [{202(2026)}]{2026Leanprovercommunity}%
  \BibitemOpen
  \href {https://github.com/leanprover-community/mathlib4} {\bibinfo {title}
  {Leanprover-community/mathlib4}},\ \bibinfo {howpublished}
  {leanprover-community} (\bibinfo {year} {2026})\BibitemShut {NoStop}%
\bibitem [{\citenamefont {{The Mathlib
  Community}}(2020)}]{TheMathlibCommunity2020Lean}%
  \BibitemOpen
  \bibfield  {author} {\bibinfo {author} {\bibnamefont {{The Mathlib
  Community}}},\ }\bibfield  {title} {\bibinfo {title} {The lean mathematical
  library},\ }in\ \href {https://doi.org/10.1145/3372885.3373824} {\emph
  {\bibinfo {booktitle} {Proceedings of the 9th {{ACM SIGPLAN}} International
  Conference on Certified Programs and Proofs ({{CPP}} 2020)}}}\ (\bibinfo
  {publisher} {ACM},\ \bibinfo {year} {2020})\ pp.\ \bibinfo {pages}
  {367--381},\ \Eprint {https://arxiv.org/abs/1910.09336} {arXiv:1910.09336
  [cs.LO]} \BibitemShut {NoStop}%
\bibitem [{\citenamefont {Buzzard}(2025)}]{Buzzard2025Formal}%
  \BibitemOpen
  \bibfield  {author} {\bibinfo {author} {\bibfnamefont {K.}~\bibnamefont
  {Buzzard}},\ }\href
  {https://xenaproject.wordpress.com/2025/10/22/formal-or-not-formal-that-is-the-question-in-ai-for-theorem-proving/}
  {\bibinfo {title} {Formal or not formal? {{That}} is the question in {{AI}}
  for theorem proving}} (\bibinfo {year} {2025})\BibitemShut {NoStop}%
\bibitem [{\citenamefont {{Florido-Llin{\`a}s}}\ \emph
  {et~al.}(2025)\citenamefont {{Florido-Llin{\`a}s}}, \citenamefont {Alhambra},
  \citenamefont {{P{\'e}rez-Garc{\'i}a}},\ and\ \citenamefont
  {Cirac}}]{Florido-Llinas2025Uniform}%
  \BibitemOpen
  \bibfield  {author} {\bibinfo {author} {\bibfnamefont {M.}~\bibnamefont
  {{Florido-Llin{\`a}s}}}, \bibinfo {author} {\bibfnamefont {{\'A}.~M.}\
  \bibnamefont {Alhambra}}, \bibinfo {author} {\bibfnamefont {D.}~\bibnamefont
  {{P{\'e}rez-Garc{\'i}a}}},\ and\ \bibinfo {author} {\bibfnamefont {J.~I.}\
  \bibnamefont {Cirac}},\ }\href@noop {} {\bibinfo {title} {Uniform matrix
  product states with a boundary}} (\bibinfo {year} {2025}),\ \Eprint
  {https://arxiv.org/abs/2512.11968} {arXiv:2512.11968 [quant-ph]} \BibitemShut
  {NoStop}%
\bibitem [{\citenamefont {Molnar}\ \emph {et~al.}(2018)\citenamefont {Molnar},
  \citenamefont {{Garre-Rubio}}, \citenamefont {{P{\'e}rez-Garc{\'i}a}},
  \citenamefont {Schuch},\ and\ \citenamefont {Cirac}}]{Molnar2018Normal}%
  \BibitemOpen
  \bibfield  {author} {\bibinfo {author} {\bibfnamefont {A.}~\bibnamefont
  {Molnar}}, \bibinfo {author} {\bibfnamefont {J.}~\bibnamefont
  {{Garre-Rubio}}}, \bibinfo {author} {\bibfnamefont {D.}~\bibnamefont
  {{P{\'e}rez-Garc{\'i}a}}}, \bibinfo {author} {\bibfnamefont {N.}~\bibnamefont
  {Schuch}},\ and\ \bibinfo {author} {\bibfnamefont {J.~I.}\ \bibnamefont
  {Cirac}},\ }\bibfield  {title} {\bibinfo {title} {Normal projected entangled
  pair states generating the same state},\ }\href
  {https://doi.org/10.1088/1367-2630/aae9fa} {\bibfield  {journal} {\bibinfo
  {journal} {New J. Phys.}\ }\textbf {\bibinfo {volume} {20}},\ \bibinfo
  {pages} {113017} (\bibinfo {year} {2018})},\ \Eprint
  {https://arxiv.org/abs/1804.04964} {arXiv:1804.04964 [cond-mat.str-el]}
  \BibitemShut {NoStop}%
\bibitem [{\citenamefont {{Garre-Rubio}}\ \emph {et~al.}(2025)\citenamefont
  {{Garre-Rubio}}, \citenamefont {Turzillo},\ and\ \citenamefont
  {Moln{\'a}r}}]{Garre-Rubio2025MPS}%
  \BibitemOpen
  \bibfield  {author} {\bibinfo {author} {\bibfnamefont {J.}~\bibnamefont
  {{Garre-Rubio}}}, \bibinfo {author} {\bibfnamefont {A.}~\bibnamefont
  {Turzillo}},\ and\ \bibinfo {author} {\bibfnamefont {A.}~\bibnamefont
  {Moln{\'a}r}},\ }\bibfield  {title} {\bibinfo {title} {{{MPS}} stability and
  the intersection property},\ }in\ \href@noop {} {\emph {\bibinfo {booktitle}
  {Annales Henri Poincar\'e}}}\ (\bibinfo  {publisher} {Springer},\ \bibinfo
  {year} {2025})\ pp.\ \bibinfo {pages} {1--23}\BibitemShut {NoStop}%
\bibitem [{\citenamefont {Schuch}\ \emph {et~al.}(2025)\citenamefont {Schuch},
  \citenamefont {Molnar},\ and\ \citenamefont
  {{Perez-Garcia}}}]{Schuch2025Simple}%
  \BibitemOpen
  \bibfield  {author} {\bibinfo {author} {\bibfnamefont {N.}~\bibnamefont
  {Schuch}}, \bibinfo {author} {\bibfnamefont {A.}~\bibnamefont {Molnar}},\
  and\ \bibinfo {author} {\bibfnamefont {D.}~\bibnamefont {{Perez-Garcia}}},\
  }\bibfield  {title} {\bibinfo {title} {Simple hamiltonians for matrix product
  state models},\ }\href@noop {} {\bibfield  {journal} {\bibinfo  {journal}
  {arXiv Prepr. arXiv:2503,10767}\ } (\bibinfo {year} {2025})},\ \Eprint
  {https://arxiv.org/abs/2503.10767} {arXiv:2503.10767} \BibitemShut {NoStop}%
\bibitem [{\citenamefont {Chen}\ \emph {et~al.}(2012)\citenamefont {Chen},
  \citenamefont {Gu}, \citenamefont {Liu},\ and\ \citenamefont
  {Wen}}]{Chen2012Symmetryprotected}%
  \BibitemOpen
  \bibfield  {author} {\bibinfo {author} {\bibfnamefont {X.}~\bibnamefont
  {Chen}}, \bibinfo {author} {\bibfnamefont {Z.-C.}\ \bibnamefont {Gu}},
  \bibinfo {author} {\bibfnamefont {Z.-X.}\ \bibnamefont {Liu}},\ and\ \bibinfo
  {author} {\bibfnamefont {X.-G.}\ \bibnamefont {Wen}},\ }\bibfield  {title}
  {\bibinfo {title} {Symmetry-protected topological orders in interacting
  bosonic systems},\ }\href {https://doi.org/10.1126/science.1227224}
  {\bibfield  {journal} {\bibinfo  {journal} {Science}\ }\textbf {\bibinfo
  {volume} {338}},\ \bibinfo {pages} {1604} (\bibinfo {year}
  {2012})}\BibitemShut {NoStop}%
\bibitem [{\citenamefont {Ji}\ \emph {et~al.}(2020)\citenamefont {Ji},
  \citenamefont {Natarajan}, \citenamefont {Vidick}, \citenamefont {Wright},\
  and\ \citenamefont {Yuen}}]{Ji2020LowIndividualDegree}%
  \BibitemOpen
  \bibfield  {author} {\bibinfo {author} {\bibfnamefont {Z.}~\bibnamefont
  {Ji}}, \bibinfo {author} {\bibfnamefont {A.}~\bibnamefont {Natarajan}},
  \bibinfo {author} {\bibfnamefont {T.}~\bibnamefont {Vidick}}, \bibinfo
  {author} {\bibfnamefont {J.}~\bibnamefont {Wright}},\ and\ \bibinfo {author}
  {\bibfnamefont {H.}~\bibnamefont {Yuen}},\ }\href@noop {} {\bibinfo {title}
  {Quantum soundness of the classical low individual degree test}} (\bibinfo
  {year} {2020}),\ \Eprint {https://arxiv.org/abs/2009.12982} {arXiv:2009.12982
  [quant-ph]} \BibitemShut {NoStop}%
\bibitem [{LDT(2026)}]{LDTLeanPaper}%
  \BibitemOpen
  \href@noop {} {\bibinfo {title} {Low degree test {Lean} formalization}}
  (\bibinfo {year} {2026}),\ \bibinfo {note} {in preparation}\BibitemShut
  {NoStop}%
\bibitem [{\citenamefont {Ji}\ \emph {et~al.}(2021)\citenamefont {Ji},
  \citenamefont {Natarajan}, \citenamefont {Vidick}, \citenamefont {Wright},\
  and\ \citenamefont {Yuen}}]{Ji2021Mip}%
  \BibitemOpen
  \bibfield  {author} {\bibinfo {author} {\bibfnamefont {Z.}~\bibnamefont
  {Ji}}, \bibinfo {author} {\bibfnamefont {A.}~\bibnamefont {Natarajan}},
  \bibinfo {author} {\bibfnamefont {T.}~\bibnamefont {Vidick}}, \bibinfo
  {author} {\bibfnamefont {J.}~\bibnamefont {Wright}},\ and\ \bibinfo {author}
  {\bibfnamefont {H.}~\bibnamefont {Yuen}},\ }\bibfield  {title} {\bibinfo
  {title} {Mip* = re},\ }\href {https://doi.org/10.1145/3485628} {\bibfield
  {journal} {\bibinfo  {journal} {Commun, ACM}\ }\textbf {\bibinfo {volume}
  {64}},\ \bibinfo {pages} {131} (\bibinfo {year} {2021})}\BibitemShut
  {NoStop}%
\bibitem [{\citenamefont {Ren}\ \emph {et~al.}(2026)\citenamefont {Ren},
  \citenamefont {Li},\ and\ \citenamefont {Qi}}]{Ren2026MerLean}%
  \BibitemOpen
  \bibfield  {author} {\bibinfo {author} {\bibfnamefont {Y.}~\bibnamefont
  {Ren}}, \bibinfo {author} {\bibfnamefont {J.}~\bibnamefont {Li}},\ and\
  \bibinfo {author} {\bibfnamefont {Y.}~\bibnamefont {Qi}},\ }\href
  {https://arxiv.org/abs/2602.16554} {\bibinfo {title} {{{MerLean}}: An agentic
  framework for autoformalization in quantum computation}} (\bibinfo {year}
  {2026}),\ \Eprint {https://arxiv.org/abs/2602.16554} {arXiv:2602.16554
  [cs.LO]} \BibitemShut {NoStop}%
\bibitem [{\citenamefont {Li}\ \emph {et~al.}(2026)\citenamefont {Li},
  \citenamefont {Zhu},\ and\ \citenamefont {Ren}}]{Li2026MerLeanprover}%
  \BibitemOpen
  \bibfield  {author} {\bibinfo {author} {\bibfnamefont {J.}~\bibnamefont
  {Li}}, \bibinfo {author} {\bibfnamefont {Z.}~\bibnamefont {Zhu}},\ and\
  \bibinfo {author} {\bibfnamefont {Y.}~\bibnamefont {Ren}},\ }\href
  {https://arxiv.org/abs/2605.26959} {\bibinfo {title} {{{MerLean-prover}}: A
  recursive looping harness for end-to-end lean 4 theorem proving}} (\bibinfo
  {year} {2026}),\ \Eprint {https://arxiv.org/abs/2605.26959} {arXiv:2605.26959
  [cs.LO]} \BibitemShut {NoStop}%
\bibitem [{\citenamefont {Ehatamm}\ \emph {et~al.}(2026)\citenamefont
  {Ehatamm}, \citenamefont {Lee}, \citenamefont {Wu},\ and\ \citenamefont
  {Tao}}]{Ehatamm2026Endtoend}%
  \BibitemOpen
  \bibfield  {author} {\bibinfo {author} {\bibfnamefont {M.}~\bibnamefont
  {Ehatamm}}, \bibinfo {author} {\bibfnamefont {Y.}~\bibnamefont {Lee}},
  \bibinfo {author} {\bibfnamefont {X.}~\bibnamefont {Wu}},\ and\ \bibinfo
  {author} {\bibfnamefont {R.}~\bibnamefont {Tao}},\ }\href
  {https://arxiv.org/abs/2605.16523} {\bibinfo {title} {End-to-end
  formalization of quantum error correction}} (\bibinfo {year} {2026}),\
  \Eprint {https://arxiv.org/abs/2605.16523} {arXiv:2605.16523 [quant-ph]}
  \BibitemShut {NoStop}%
\bibitem [{\citenamefont {Lu}(2025)}]{TeXRA}%
  \BibitemOpen
  \bibfield  {author} {\bibinfo {author} {\bibfnamefont {S.}~\bibnamefont
  {Lu}},\ }\href {https://texra.ai} {\bibinfo {title} {{TeXRA}: A multi-agent
  research assistant for theorists}} (\bibinfo {year} {2025})\BibitemShut
  {NoStop}%
\bibitem [{\citenamefont {Micha{\l}ek}\ and\ \citenamefont
  {Shitov}(2019)}]{Michalek2019Quantum}%
  \BibitemOpen
  \bibfield  {author} {\bibinfo {author} {\bibfnamefont {M.}~\bibnamefont
  {Micha{\l}ek}}\ and\ \bibinfo {author} {\bibfnamefont {Y.}~\bibnamefont
  {Shitov}},\ }\bibfield  {title} {\bibinfo {title} {Quantum version of
  wielandt's inequality revisited},\ }\href
  {https://doi.org/10.1109/TIT.2019.2897772} {\bibfield  {journal} {\bibinfo
  {journal} {IEEE Trans. Inf. Theory}\ }\textbf {\bibinfo {volume} {65}},\
  \bibinfo {pages} {5239} (\bibinfo {year} {2019})}\BibitemShut {NoStop}%
\bibitem [{\citenamefont {Shitov}(2023)}]{Shitov2023Growth}%
  \BibitemOpen
  \bibfield  {author} {\bibinfo {author} {\bibfnamefont {Y.}~\bibnamefont
  {Shitov}},\ }\href {https://vixra.org/abs/2308.0028} {\bibinfo {title}
  {Growth in matrix algebras and a conjecture of perez-garcia, verstraete, wolf
  and cirac}} (\bibinfo {year} {2023})\BibitemShut {NoStop}%
\bibitem [{\citenamefont {{Microsoft
  Corporation}}(2024)}]{MicrosoftCorporation2024Visual}%
  \BibitemOpen
  \bibfield  {author} {\bibinfo {author} {\bibnamefont {{Microsoft
  Corporation}}},\ }\href {https://code.visualstudio.com/} {\bibinfo {title}
  {Visual studio code}} (\bibinfo {year} {2024})\BibitemShut {NoStop}%
\bibitem [{\citenamefont {Ma}\ \emph {et~al.}(2025)\citenamefont {Ma},
  \citenamefont {Li},\ and\ \citenamefont {{the math-xmum
  contributors}}}]{MathXmum2025Brouwer}%
  \BibitemOpen
  \bibfield  {author} {\bibinfo {author} {\bibfnamefont {J.-J.}\ \bibnamefont
  {Ma}}, \bibinfo {author} {\bibfnamefont {K.}~\bibnamefont {Li}},\ and\
  \bibinfo {author} {\bibnamefont {{the math-xmum contributors}}},\ }\href
  {https://github.com/math-xmum/Brouwer} {\bibinfo {title} {Game theory
  formalization in {Lean}: {Brouwer}'s fixed-point theorem via {Scarf}'s
  lemma}} (\bibinfo {year} {2025})\BibitemShut {NoStop}%
\end{thebibliography}%

%======================================================================
% End Matter (part of the Letter) - kept in endmatter.tex
%======================================================================
\clearpage
\appendix
%======================================================================
% End Matter (part of the Letter, NOT the Supplementary Material).
% \input by the main file Draft3.tex.
%======================================================================
% The combined architecture/orchestration figure is pinned as a full-width
% block so it appears at the start of the End Matter; REVTeX figure* floats
% otherwise tend to move it onto a later or isolated float page.
% fig_architecture_combined.tex  -  Architecture (left, 7) + orchestration
% patterns (right, 3), combined into a single full-width (figure*) row so
% both fit one figure in the end matter, separated by a vertical rule
% (a tabular column rule spans the full row height automatically).  The
% architecture panel is the original three-tier layout from
% fig_architecture.tex (human on top, orchestrator, five agents), just
% resized to share the row with the orchestration-pattern panel.
% Requires:  \usepackage{tikz}
%            \usetikzlibrary{arrows.meta, positioning, calc, fit, backgrounds}

% Pinned page-wide block instead of a figure* float: placed at the very top
% of the End Matter's first page (see note in endmatter.tex). The caption
% still numbers as a figure via \def\@captype{figure}.
\onecolumngrid
\begin{center}
% 6pt gutters (8pt overflowed \textwidth by 1.1pt with the 0.63+0.34 panels)
\begin{tabular}{@{}c@{\hspace{12pt}}c@{}}
\resizebox{0.63\linewidth}{!}{%
\begin{tikzpicture}[
    >={Stealth[length=3.2pt]},
    every node/.style ={font=\footnotesize},
    box/.style    ={draw, rounded corners=1.5pt, inner sep=2pt,
                    align=center, font=\scriptsize, minimum height=0.50cm},
    agent/.style  ={box, agentbox, minimum width=1.95cm},
    human/.style  ={box, humanbox},
    sandbox/.style={resourcebox, rounded corners=3pt, inner sep=10pt},
    chip/.style   ={font=\small\itshape, fill=white, inner sep=1.5pt,
                    rounded corners=1pt, draw=black!40, dashed},
    legsw/.style  ={draw, minimum size=0.26cm, inner sep=0pt,
                    rounded corners=0.5pt},
    flow/.style   ={->, semithick, draw=black!80}
  ]

  %% --- Top: human supervisor (grey, not an agent) ---------------------
  \node[human, font=\footnotesize\itshape, minimum width=2.2cm] (human) at (0,2.55)
      {Human \scriptsize(mathematical oversight)};

  %% --- Orchestrator (lead AI agent) -----------------------------------
  \node[agent, minimum width=3.6cm] (orch) at (0,1.40)
      {\texttt{leanOrchestrator}\\\scriptsize planning, delegation, integration};

  %% --- Five front-line AI agents --------------------------------------
  \node[agent] (scout) at (-5.4,0)
      {\texttt{leanSearch}\\\scriptsize Mathlib search};
  \node[agent] (proof) at (-2.7,0)
      {\texttt{lean}\\\scriptsize proofs, sorry closure};
  \node[agent] (clean) at (0,0)
      {\texttt{leanSimplifier}\\\scriptsize lint, simplify};
  \node[agent] (bp) at (2.7,0)
      {\texttt{leanBlueprint}\\\scriptsize \LaTeX{} sync};
  \node[agent] (audit) at (5.4,0)
      {reviewer\\\scriptsize hypothesis audit};

  %% --- Shared sandbox (drawn behind so it does not overpaint nodes);
  %%     a phantom coordinate below the agents gives the chip extra
  %%     room without pushing the top edge into the human box ---------
  \coordinate (padbottom) at ($(scout.south)+(0,-16pt)$);
  \begin{pgfonlayer}{background}
    \node[sandbox, fit=(orch)(scout)(audit)(padbottom)] (sandbox) {};
  \end{pgfonlayer}
  \node[chip, anchor=south]
      at ([yshift=-5pt]sandbox.south)
      {shared by every agent: persistent memory $\cdot$ Lean 4 server (\texttt{diagnostics}, \texttt{inspect}, \texttt{loogle})};

  %% --- Legend: role is encoded by box fill/outline (sits in the empty
  %%     upper-right of the sandbox interior) ----------------------------
  \node[legsw, agentbox] (lg1) at (4.0,1.6) {};
  \node[anchor=west, font=\scriptsize] (lg1t) at (lg1.east) {AI agent};
  \node[legsw, humanbox,
        below=3pt of lg1.south, anchor=north] (lg2) {};
  \node[anchor=west, font=\scriptsize] (lg2t) at (lg2.east) {human supervisor};
  \node[legsw, resourcebox,
        below=3pt of lg2.south, anchor=north] (lg3) {};
  \node[anchor=west, font=\scriptsize] (lg3t) at (lg3.east) {shared resource};
  \begin{pgfonlayer}{background}
    \node[draw=black!20, rounded corners=2pt, fill=white, inner sep=4pt,
          fit=(lg1)(lg1t)(lg2t)(lg3)(lg3t)] {};
  \end{pgfonlayer}

  %% --- Control edges (solid): human -> orch -> agents -----------------
  \draw[flow] (human) -- (orch);
  \foreach \n in {scout,proof,clean,bp,audit}
      \draw[flow] (orch.south) -- (\n.north);
\end{tikzpicture}%
}
&
\resizebox{0.34\linewidth}{!}{%
\begin{tikzpicture}[
        >={Stealth[length=3.2pt]},
        every node/.style={font=\footnotesize},
        box/.style   ={draw, rounded corners=1pt, minimum width=1.05cm,
                minimum height=0.42cm, inner sep=1.5pt, font=\footnotesize},
        agent/.style ={box, agentbox},
        % white fill: a grey fill here read as the legend's "human" encoding
        op/.style    ={box, draw=black!70, fill=white, font=\footnotesize\itshape},
        flow/.style  ={->, semithick, draw=black!75},
        caplbl/.style={font=\footnotesize\bfseries},
        sub/.style   ={font=\footnotesize\itshape, text=black!65},
        edgelbl/.style={font=\scriptsize, inner sep=1pt, fill=white}
    ]

    %% --- Panel A: Parallel dispatch (left half) ------------------------
    \begin{scope}[local bounding box=panelA]
        \node[agent] (orchA) at (0,1.25) {orchestrator};
        % 1.22cm pitch: at the old 1.10 the three boxes nearly abutted after
        % the 0.34\linewidth resize and read as one subdivided bar
        \node[agent] (a1) at (-1.22,0.20) {$T_1$};
        \node[agent] (a2) at ( 0.00,0.20) {$T_2$};
        \node[agent] (a3) at ( 1.22,0.20) {$T_3$};
        \node[op] (mg) at ( 0.00,-0.85) {merge};
        \foreach \n in {a1,a2,a3} \draw[flow] (orchA) -- (\n);
        \foreach \n in {a1,a2,a3} \draw[flow] (\n) -- (mg);
        \node[caplbl] at (0,-1.55) {(a) Parallel dispatch};
        \node[sub]    at (0,-1.92) {disjoint scopes};
    \end{scope}

    %% --- Panel B: Scout-then-prove (top right); named as in the End
    %%     Matter text, which cites the panels individually ---------------
    \begin{scope}[shift={(3.95,0.95)}, local bounding box=panelB]
        \node[caplbl] at (0,0.70) {(b) Scout--then--prove};
        \node[agent] (sc) at (-1.05,0) {scout};
        \node[agent] (im) at ( 1.05,0) {prover};
        % unfilled label lifted clear of the box tops: a white-filled 'memo'
        % here erased the corners of the scout/implementer outlines
        \draw[flow] (sc) -- node[edgelbl, fill=none, above=3pt]{memo} (im);
        \node[sub] at (-1.05,-0.40) {cheap};
        \node[sub] at ( 1.05,-0.40) {expensive};
    \end{scope}

    %% --- Panel C: Review-repair loop (bottom right) --------------------
    \begin{scope}[shift={(3.95,-1.05)}, local bounding box=panelC]
        \node[caplbl] at (0,0.75) {(c) Review--repair loop};
        \node[agent] (aud) at (-0.95,0) {reviewer};
        \node[agent] (fix) at ( 0.95,0) {fixer};
        \draw[flow] (aud) to[bend left=28]
        node[edgelbl, above]{issues} (fix);
        \draw[flow] (fix) to[bend left=28]
        node[edgelbl, below]{patch} (aud);
        \draw[flow, dashed] (aud.south) -- ++(0,-0.55)
        node[edgelbl, below, fill=none]{accept};
        \node[sub] at (0.55,-0.85) {$\le 5$ iterations};
    \end{scope}

\end{tikzpicture}%
}
\end{tabular}
\end{center}
\begingroup
\makeatletter\def\@captype{figure}\makeatother
\caption{%
    System architecture and orchestration patterns.
    \emph{Left}: a human
    supervisor sets mathematical intent; the orchestrator dispatches to
    four specialized interactive agents: a library scout (\texttt{leanSearch}), a
    proof writer (\texttt{lean}), a simplifier (\texttt{leanSimplifier}), and
    a blueprint synchronizer (\texttt{leanBlueprint}), together with an
    automated reviewer that also runs on each change proposed to the shared
    repository. All interactive agents
    (including the orchestrator) share a persistent memory and the
    Lean~4 server's \texttt{diagnostics}, \texttt{inspect}, and
    \texttt{loogle} tools (Supplementary Material~\cite{Supp}), drawn as a single
    dashed region rather than one connection per agent per service.
    \emph{Right}: the three recurring orchestration patterns.
    \textbf{(a)} The orchestrator dispatches independent subtasks
    $T_1,\dots,T_n$ to workers on disjoint scopes and merges their
    outputs. \textbf{(b)} A cheap scout surveys the library and emits a
    design memo which a stronger prover consumes, so the expensive
    proof agent is invoked only after a feasible route is
    identified. \textbf{(c)} A
    reviewer flags issues and a fixer applies patches; the loop
    terminates when the reviewer accepts (dashed) or after a cap of
    five iterations.%
}\label{fig:architecture}\label{fig:orchestration}
\endgroup
\vspace{0.5\baselineskip}
\twocolumngrid

\section*{Method: Multi-agent autoformalization}
%======================================================================

The formalization was produced by a team of specialized language-model agents coordinated through the shared blueprint and a persistent memory; here an agent is a language-model program that edits files and responds to Lean's type-checking feedback until its task is complete. This End Matter describes the agent roles, the blueprint and memory that coordinate them, and the software built to run them; the Supplementary Material~\cite{Supp} expands each part, reproduces the agent prompts, and reports the full cost analysis.

\prlsection{Multi-agent architecture and implementation}
The proof development is organized around six specialized agent roles: proof writer, library scout, simplifier, blueprint synchronizer, orchestrator, and reviewer; the last runs automatically on each change proposed to the shared repository. The other five agents work in interactive sessions and share the blueprint and a persistent memory of notes on what has been tried, what worked, and what failed; \cref{fig:architecture} shows the resulting system architecture. Each of these five agents interacts with Lean~4 through file editing and real-time type-checking feedback, the same interface a human developer uses in editors such as VS Code~\cite{MicrosoftCorporation2024Visual}. Proof writing is assigned to the most capable, most expensive language models, since proof errors propagate downstream; scouting, cleanup, and auditing use less expensive models, since failed attempts are cheap to retry; orchestration uses either class of model, according to the complexity of the task. The resulting usage by role is shown in \cref{fig:role_model_main}, and \cref{fig:model_gantt_main} shows when each model was in use over the project; the full cost analysis appears in the Supplementary Material~\cite{Supp}.

These six roles were not fixed from the start: whenever a recurring bottleneck appeared, often reported by the agents themselves, the supervisor introduced a new role for the missing function, and the full set was in place within the first six weeks of the project.

The agents interact through a small set of recurring coordination patterns: parallel dispatch, shown in panel~(a) of \cref{fig:orchestration}, where the orchestrator sends out many independent tasks; scout-then-prove, shown in panel~(b) of \cref{fig:orchestration}, where a scout runs an inexpensive library search before the prover attempts a costly proof; and review-repair cycles, shown in panel~(c) of \cref{fig:orchestration}, where a reviewer-fixer pair cycles through review and repair. Scout-then-prove was reused most often: a fast, inexpensive agent first identifies the relevant existing results in Mathlib, and a more capable agent then uses them to construct the proof. The remaining agents audit and reorganize in review-repair cycles: the blueprint synchronizer keeps the blueprint and the Lean code in step, the simplifier compresses long proofs, and the reviewer cross-checks claims against the primary literature.

The system we build and use is not an off-the-shelf coding assistant like Claude Code or Codex.
The system is built from the language-model interfaces (API calls) as a software that runs the agents, TeXRA~\cite{TeXRA}; the Lean-specific roles, tools, and prompts are additions built for this project.
The agent prompts and role definitions are released with TNLean~\cite{TNLean}; TeXRA itself is available separately, and the system prompts of the interactive roles are reproduced in the Supplementary Material~\cite{Supp}.

\begin{figure}[!htb]
    \centering
    \includegraphics[width=\linewidth]{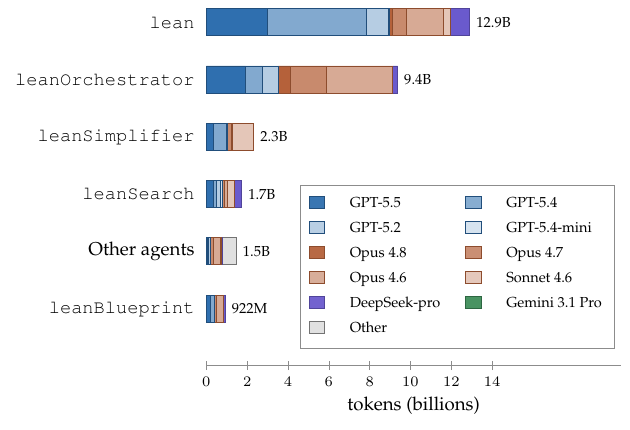}
    \caption{Token usage by agent role, each bar broken down by the
        language models that role used. Proof writing and orchestration
        dominate, consistent with assigning the most capable models to the
        tasks where an error is most expensive.}\label{fig:role_model_main}
\end{figure}

\begin{figure}[!htb]
    \centering
    \includegraphics[width=\linewidth]{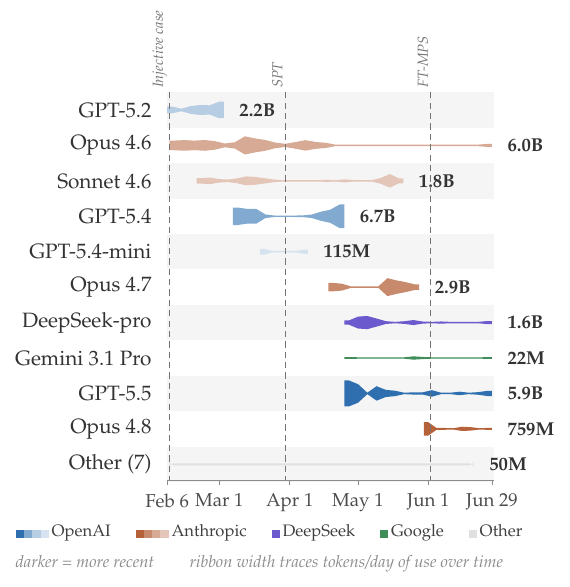}
    \caption{Model usage over the project. Newer model releases
        progressively replaced older ones within each provider family, and the
        token usage concentrated on the most capable models available at each
        time. Each model's period of use is shown as a ribbon, sorted by first
        use: one color family per provider, shade distinguishing models within a
        family, and ribbon width tracing the daily token usage on one scale
        shared across every model; the number at the end of each ribbon is that
        model's total. The dashed vertical lines mark the dates the injective
        case of FT-MPS, the SPT cohomological invariant, and the FT-MPS proof
        chains reported in this work became fully verified.}\label{fig:model_gantt_main}
\end{figure}

\prlsection{Blueprint, memory, and human oversight}
Two shared elements support this architecture: the blueprint and the persistent memory. Every definition, lemma, and theorem in the blueprint carries a proof sketch, a proof status, and its logical dependencies, and is linked to the Lean declarations that formalize it~\cite{Massot2021Leanblueprint}. A theorem stated across several sources in different notations is written once in the blueprint and then proved in Lean, so the blueprint sits between the primary literature and the Lean code (\cref{fig:alignment}). The blueprint is publicly available~\cite{TNLeanBlueprint}, and the Supplementary Material~\cite{Supp} reproduces the portion of its dependency graph around the fundamental theorem at print scale. The persistent memory separately records scouting results, proof strategies, counterexamples, and technique notes across sessions (the bounded working periods after which an agent's conversation is set aside and no longer read by later agents); without it, each dead end would be re-investigated from scratch.

The blueprint is also where the project separates \textit{tactical} from \textit{strategic} decisions. Tactical decisions belong to the agents: which Mathlib lemmas to invoke, when to abandon a failed approach, how to decompose a long argument into intermediate lemmas, and what definitions to try for MPS tensors and transfer operators, down to how the \nFilesfull{} source files of the full library are organized. Strategic decisions remain with the human supervisor: whether the formal statement is the theorem intended by the primary literature, and whether its hypotheses are correct. The supervisor acts through blueprint review and by steering the orchestrator (assigning stronger or more economical models to each task, redirecting the orchestrator away from unproductive routes), never by writing Lean code or choosing proof tactics; the supervision effort therefore does not grow with the length of individual proofs, since Lean already checks each proof. The supervisor reviewed the blueprint in \nReviewRounds{} rounds, each assisted by separate review agents that cross-checked chapters against the primary literature~\cite{Perez-Garcia2007Matrix,Cirac2017Matrixa,Cirac2021Matrix,Sanz2010Quantum}; each round surfaced 10--30 issues, chiefly missing hypotheses, mismatched definitions, and overly general claims, of which the doubly-stochastic gauge and the asymptotic reformulation traced in the main text are representative. This division of labor proved effective: most tactical decisions survived blueprint review, and where they did not, the review traced the error to a definition or an unintended hypothesis, which the agents then corrected.

As anticipated in the Introduction, no language-model context window can hold the source literature, the growing blueprint, and the expanding formal library at once; the work is therefore split into the many bounded sessions coordinated above. Within a session, too, the orchestrator's own conversation grows until it must be condensed: older exchanges are replaced by summaries while the list of running agents and open tasks is retained, so durable knowledge resides in the persistent memory rather than in any one conversation. This memory is what lets a fixed context window, and a fixed amount of supervisor effort, span a project far larger than either could otherwise hold.

%======================================================================
% Supplementary Material (separate document; standalone wrapper Draft3SM.tex)
%======================================================================
\clearpage
\onecolumngrid
\begin{center}
    {\large \bf Supplementary Material}
\end{center}

% !TEX root = Draft3SM.tex
%======================================================================
% Supplementary Material body. Shared (\input) by Draft3.tex and the
% standalone wrapper Draft3SM.tex. (End Matter is NOT here; see endmatter.tex.)
%
% Supplementary Material numbering: figures, equations, and tables get an
% "S" prefix (S1, S2, ...). Defining it here means both the combined
% Draft3.tex build and the standalone Draft3SM.tex build pick it up.
%======================================================================
\makeatletter
\setcounter{figure}{0}
\setcounter{equation}{0}
\setcounter{table}{0}
\renewcommand{\thefigure}{S\@arabic\c@figure}
\renewcommand{\theequation}{S\@arabic\c@equation}
\renewcommand{\thetable}{S\@arabic\c@table}
% Give the S-numbered floats their own hyperref anchors; otherwise the
% combined Draft3 build reuses the Letter's figure.N/table.N/equation.N
% destinations and links to S-figures jump to the main-text ones.
\renewcommand{\theHfigure}{S\@arabic\c@figure}
\renewcommand{\theHequation}{S\@arabic\c@equation}
\renewcommand{\theHtable}{S\@arabic\c@table}
\makeatother

This Supplementary Material expands the method summary given in the End Matter of the Letter: the agent
roles and their Lean~4 interface (\cref{app:architecture}), the blueprint that connects source theorems to Lean declarations (\cref{app:blueprint}), the
persistent memory that carries mathematical knowledge across bounded sessions
(\cref{app:memory}), the coordination patterns used by the agents
(\cref{app:orchestration}), and the paper-to-formalization gap analysis
(\cref{app:gaps}). It also records the computational cost (\cref{app:cost}), presents a representative proof excerpt (\cref{app:ftsb_walkthrough}), and reproduces the system prompts that defined the agent roles (\cref{app:prompts}).

% SM contents: list only the appendix sections. The wrappers (Draft3.tex /
% Draft3SM.tex) inject a tocdepth=-10 into the .toc before \maketitle to hide
% everything (title, abstract, acknowledgments, references, End Matter); the
% brackets below raise the depth for the appendices only, then hide again.
\tableofcontents

% Re-enable TOC entries for the appendices only (disabled in the wrapper after
% \maketitle), then disable again so any trailing references stay out.
\let\addcontentsline\SMrealaddcontentsline{}
\appendix

%======================================================================
\section{Agent architecture and Lean interface}\label{app:architecture}
%======================================================================

The formalization system is organized as a human-directed multi-agent architecture. Here an \emph{agent} is a large language model equipped with a system prompt that defines its role and a fixed set of tools that we programmed and supplied, such as editing a file or querying the Lean compiler. Once dispatched, the agent runs autonomously, calling commands and reading their results until it reports completion, failure, or a partial result. When the orchestrator dispatches tasks to sub-agents, each sub-agent reports its findings back to the orchestrator on completion.

The system is implemented in TeXRA~\cite{TeXRA}, which works as a command-line program and Visual~Studio~Code extension; the Lean-specific roles and commands described here are project-specific additions.
A human supervisor works with a pool of such agents through a shared repository of Lean code and \LaTeX{} sources, a persistent memory directory of notes, and the Lean~4 compiler accessed in real time.

\subsection{Agent roles}\label{sec:roles}

The specialized roles listed in \cref{tab:roles} were introduced incrementally over the first month of the project. The earliest sessions used generic, undifferentiated agents; the supervisor introduced each new specialization as the bottlenecks of that setup became visible.
Scouting, proof writing, and proof simplification became distinct roles by late February, and the full role list was codified in a pinned strategy memo in mid-March.
Each agent interacts with the Lean compiler and the file system through a fixed set of commands.

\begin{table*}[htbp]
    \caption{\label{tab:roles}Agent roles: the model tier each typically uses, its function, and the tools it is granted beyond the shared core. All are tool-using agents. The shared core of the interactive roles is the Lean~4 server interface of \cref{sec:lean_tools}, file editing and shell access, and the persistent memory of \cref{app:memory}; the last column lists what each role adds. ``Tier'' is the model class the role typically runs on, expensive (E) or inexpensive (I); roles were not preassigned to models, and the supervisor set the tier per task at dispatch (\cref{sec:model_selection}). The code names correspond to the roles named in the End Matter: \texttt{lean} is the proof writer, \texttt{leanSearch} the library scout, \texttt{leanSimplifier} the simplifier, and \texttt{leanBlueprint} the blueprint synchronizer. The reviewer runs as an automated review agent on each proposed repository change (\cref{sec:audit_fix}); its review prompts are released with TNLean.}
    \small
    \centering
    \setlength{\tabcolsep}{4pt}%
    \begin{tabular}{@{}lcll@{}}
        \toprule
        {Role}                  & {Tier} & {Function}                                                                       & {Tools beyond the shared core}                                                                                    \\
        \midrule
        \texttt{leanOrchestrator} & E      & \begin{tabular}[t]{@{}l@{}}Task decomposition,\\ dispatch\end{tabular}           & \begin{tabular}[t]{@{}l@{}}sub-agent dispatch, run management,\\ repository and code-review access\end{tabular} \\[0.9em]
        \texttt{lean}             & E      & \begin{tabular}[t]{@{}l@{}}Proof writing,\\ closure of placeholders\end{tabular} & ---                                                                                                             \\
        \texttt{leanSearch}       & I      & Mathlib scouting, feasibility                                                    & Web search, arXiv lookup                                                                                        \\
        \texttt{leanSimplifier}   & I      & Style checking, proof simplification                                             & ---                                                                                                             \\
        \texttt{leanBlueprint}    & I      & Blueprint-Lean sync                                                              & Web/arXiv, bibliography                                                                                         \\
        reviewer                  & I      & Literature-Lean cross-checking                                                   & runs on proposed repository changes                                                                             \\
        \bottomrule
    \end{tabular}
\end{table*}

All agents draw on a common core of tools: the Lean~4 server interface of \cref{sec:lean_tools}, file editing and shell access, and the persistent memory of \cref{app:memory}. They differ only in the additional tools they are granted (last column of \cref{tab:roles}), their system prompt, and their model tier.

\subsection{Delegation}\label{sec:delegation}

When the orchestrator, the \texttt{leanOrchestrator} role of \cref{tab:roles}, dispatches a task to a sub-agent (a separate agent session launched to carry out that one task), it fixes the properties of the session: the sub-agent's role, the task instructions and background notes it starts with, the files it works on, and the model tier it runs on. A sub-agent runs in isolation, with no access to the orchestrator's conversation, so it acts only on the instructions, notes, files, and model it is given.

\emph{Attach instructions and background notes.} Beyond the task instructions, the orchestrator may place one or more memory files in the sub-agent's initial context as read-only background: a style guide, a list of statements already known to be false together with their counterexamples, or the memo an earlier scouting agent left behind, so that a later proof-writing agent reuses the Mathlib lemmas the scout located instead of searching again. The life cycle that produces and reuses these notes is described in \cref{app:memory}.

\emph{Assign files.} Each sub-agent is assigned a set of files to edit and a set to read as background; vision-capable sub-agents additionally receive figures, screenshots, or PDFs. Parallel sub-agents are assigned disjoint sets of files to edit, a working convention rather than an enforced restriction, so that no two work on the same part of the Lean source tree (\cref{app:orchestration}). When the task finishes, the sub-agent's output is either copied back into the project by the orchestrator or, for sub-agents editing project files in place, already there.

\emph{Choose the model tier.} The orchestrator selects one model tier per task; the reasoning behind each choice, and the specific models used, are given in \cref{sec:model_selection}.

\subsection{Lean~4 integration and related tools}\label{sec:lean_tools}

Agents interact with Lean~4 through a programming interface that lets external tools query the compiler for errors and type information in real time, without rebuilding the project. The interface drives the same Lean~4 language server that human developers use in editors such as VS~Code~\cite{MicrosoftCorporation2024Visual}: inside the editor it routes through the running Lean~4 extension, and on the command line it spawns its own server process, so the agent sees the same diagnostics and proof state the developer does. It exposes the five tools listed below, each set in typewriter font.

\begin{enumerate}[label={(\roman*)},leftmargin=*]
    \item \texttt{lean\_diagnostics} returns the compilation errors, warnings, and hints for a given file, either in full or as severity counts only. It checks whether a proof step compiles, and is the first tool invoked after a proof attempt.

    \item \texttt{lean\_inspect} reports the proof state and available declarations at a specified position, given by line and column, in one of three modes. The \texttt{hover} mode reproduces the information shown when a user places the cursor over an identifier in the Lean editor: its full type and any associated documentation. For example, inspecting the single-block fundamental theorem returns a signature stating that an injective tensor $A$ and a tensor $B$ generating the same matrix product vectors are gauge-equivalent. The \texttt{goal} mode returns the current tactic state, namely the assumptions in scope and the goal that remains to be proved. The \texttt{term\_goal} mode returns the type expected at the position within a proof term, that is, within a proof written as an expression rather than as a tactic block.

    \item \texttt{lean\_loogle} searches Mathlib for a named result. It wraps Loogle, a community-developed search engine for Mathlib, querying it over its public interface, and accepts queries by declaration name, by the constants a result mentions, or by type pattern, such as lemmas of the form $f(x+y)=f(x)+f(y)$, several at once in one batched call.

    \item \texttt{lean\_file} runs file-scoped commands that refresh the server when the editor state becomes stale: restarting the Lean server for one file, or performing a lighter refresh of its dependencies, used when a file's diagnostics no longer reflect the latest edits.

    \item \texttt{lean\_project} runs project-wide commands that take no target file and manage the build and the language server: building the library through Lake, fetching the precompiled Mathlib cache (for the whole project or for the current file's imports alone), cleaning build artifacts, restarting or stopping the language server, and installing or selecting the Lean toolchain. The build command does not capture its output, so compilation errors are read back afterward with \texttt{lean\_diagnostics}.
\end{enumerate}

Together, these tools support the same iterative cycle followed by a human Lean user: attempt a proof, inspect the resulting error or goal state, search the library for the missing result, edit the proof, and recheck.

\subsection{Model selection}\label{sec:model_selection}

The system draws on multiple models at different capability and cost tiers, including Claude Opus and Sonnet, the GPT-5 family, DeepSeek, and Gemini, each run at an explicitly chosen maximum reasoning effort that is raised for hard proofs and lowered for routine work. Proof-writing tasks, such as closing a Lean \texttt{sorry} (an unproved placeholder) or repairing a broken proof chain, are assigned to an expensive, high-effort model, since a wrong attempt is costly to detect and can affect later tasks. Scouting, simplification, and blueprint synchronization use cheaper models, where failures are inexpensive to retry; orchestration typically uses higher-tier models such as Claude Opus. Empirically, we found Claude Opus 4.6 to be a better orchestrator than GPT 5.2, which we attribute to differences in token usage (see \cref{fig:role_model}); GPT 5.5 later closed this gap, and its share of the work grew accordingly.
The allocation was not fixed upfront but settled over the project as it became clear which task profiles needed the expensive tier. The resulting cost per model (Fig.~\ref{fig:model_spend}) concentrates on a few high-capability models, with the total cost analyzed in \cref{app:cost}.

\subsection{Context management}

Within a session, the orchestrator's conversation grows until it nears the model's context-window limit, at which point the system compacts it, summarizing the older exchanges to free space (\cref{fig:compaction}). On each compaction the orchestrator is handed back a summary of its still-running sub-agents, any background computations, and its current task list, and the recorded results of completed sub-agents stay retrievable, so it continues the work in progress rather than relaunching it.

% fig_compaction.tex  -  Context management within a session.
%   Context reads top to bottom: the system prompt sits at the top and new
%   turns are appended downward toward the context-window limit at the bottom.
%   Left:  the conversation grows downward until it nearly fills the window.
%   Right: on compaction the older turns collapse into one summary block,
%          the active state is preserved, and a large freed region below
%          lets the session continue.
% Requires: \usepackage{tikz}; \usetikzlibrary{arrows.meta} (loaded in preamble).
\begin{figure}[htbp]
    \centering
    \resizebox{0.66\textwidth}{!}{%
        \begin{tikzpicture}[
                >={Stealth[length=3pt]},
                blk/.style  ={draw, rounded corners=1.5pt, minimum width=2.7cm,
                        minimum height=0.40cm, inner sep=1.5pt, font=\scriptsize,
                        align=center, fill=black!4},
                sum/.style  ={blk, fill=black!12},
                keep/.style ={blk, fill=black!22},
                cont/.style ={blk, dashed, fill=white},
                lim/.style  ={draw=black!55, dashed, thick},
                lbl/.style  ={font=\scriptsize\itshape, align=center}
            ]

            %% ---- LEFT: conversation grows downward, nearly filling the window -
            \node[blk] (l1) at (0,3.55) {system prompt};
            \node[blk] (l2) at (0,3.05) {instruction, tool calls};
            \node[blk] (l3) at (0,2.55) {subagent result};
            \node[blk] (l4) at (0,2.05) {instruction, tool calls};
            \node[blk] (l5) at (0,1.55) {subagent result};
            \node[blk] (l6) at (0,1.05) {instruction, tool calls};
            \node[blk] (l7) at (0,0.55) {subagent result};
            \node[font=\scriptsize] at (0,0.28) {$\vdots$};
            \draw[lim] (-1.55,-0.05) -- (1.55,-0.05);
            \node[lbl, anchor=north] at (0,-0.18) {context-window limit};
            \draw[->, draw=black!45] (-1.98,3.45) -- (-1.98,0.35);
            \node[lbl, text=black!55, rotate=90] at (-2.30,1.90) {conversation grows};

            %% ---- COMPACT (summarize) -----------------------------------------
            \draw[->, semithick, draw=black!80] (1.75,2.00)
            -- node[lbl, anchor=south] {compact}
            node[lbl, anchor=north, text=black!60] {summarize}
            (3.85,2.00);

            %% ---- RIGHT: compacted; large freed region; session continues -----
            \node[blk]  (r1) at (5.6,3.55) {system prompt};
            \node[sum]  (r2) at (5.6,2.88) {summary of\\earlier exchanges};
            \node[keep] (r3) at (5.6,2.12) {active state:\\subagents, jobs, to-dos};
            \node[cont] (r4) at (5.6,1.48) {session continues};
            \node[lbl, text=black!55] at (5.6,0.70) {space freed};
            \draw[lim] (4.05,-0.05) -- (7.15,-0.05);
            \node[lbl, anchor=north] at (5.6,-0.18) {context-window limit};

        \end{tikzpicture}%
    }
    \caption{\label{fig:compaction}Context management within a session. Context reads top to bottom: the system prompt sits at the top and new turns (task instructions, tool calls, returned subagent results) are appended downward as work proceeds, until they nearly fill the model's context-window limit (left). On compaction the older exchanges are summarized into a single block while the active state (running subagents, background jobs, and the current task list) is carried over, freeing much of the window so the session can continue (right). Durable knowledge is held separately, in the persistent archive of this appendix.}
\end{figure}

%======================================================================
\section{The blueprint}\label{app:blueprint}
%======================================================================

The blueprint is a synchronized mathematical \LaTeX{} document that serves as an intermediate layer between the primary literature and the Lean codebase. Every definition, lemma, and theorem carries machine-readable metadata that links it to the formal proof, records its status, and tracks its logical dependencies. The blueprint toolkit itself is Massot's \texttt{leanblueprint} package~\cite{Massot2021Leanblueprint}; our contribution is the FT-MPS content. We also document some learnings and extensions from the project for using the blueprint as the media for the multi-agent autoformazations below.

The results quoted in this paper, in the Letter as well as in this Supplementary Material, are stated in the informal style of the physics literature; we do not link them one by one to the Lean declarations that formalize them. That correspondence, together with the precise formal hypotheses and the proof status of each statement, lives in the blueprint~\cite{TNLeanBlueprint}, and we encourage the reader who wants the formal counterpart of any result discussed here to read the blueprint alongside this paper.

Two checks keep the layers in agreement through the review-repair cycles of \cref{app:orchestration}:
The \texttt{checkdecls} command from \texttt{leanblueprint} package  checks via a program that every \texttt{\textbackslash{}lean} and \texttt{\textbackslash{}leanok} link to the theorem and lemma labels in the \LaTeX{} version of the blueprint resolves and compiles. For the \texttt{leanBlueprint} agent we also prompt them to write blueprint contents to be consistent with the lean label, and they also get feedback by running \texttt{checkdecls} themselves. The blueprint contents produced by the \texttt{leanBlueprint} agent should match what is formalized in the Lean code. However, sometimes they may get staled. Therefore, we use automated reviewers who check the semantic correspondence that no syntactic tool can verify: that the Lean theorem proves the blueprint statement with hypotheses no stronger than the cited source's.

\subsection{Chapter structure}\label{sec:bp_chapters}

The formalized proof is organized around the canonical-form construction and reduces to three steps:
\begin{enumerate}[label={(\roman*)},leftmargin=*]
    \item \textit{Canonical-form reduction.} The spectral theory of the transfer operator $E_A$ is used to reduce an arbitrary tensor to canonical form (CF). First, quantum Perron-Frobenius theory and Kadison-Schwarz inequalities are used to decompose the virtual space into minimal invariant subspaces of the matrices $\{A^i\}$. This puts the tensor in block-upper-triangular form. The off-diagonal triangular blocks do not contribute to periodic traces and can therefore be discarded, leaving a block-diagonal tensor with irreducible blocks. These blocks may still be \textit{periodic}, in the sense that their transfer maps can have nontrivial peripheral eigenvalues; blocking by a common period removes this periodicity~\cite{Fannes1992Finitely}. The resulting tensor is a block-diagonal direct sum of normal blocks, each with a scalar weight, and is in CF as defined in the main text. One may then choose a further gauge normalization of the CF blocks, giving canonical form~II (CFII) in the terminology of~\cite{Cirac2017Matrixa}. Separately, a normal block becomes injective only after a further finite blocking, with the required blocking length controlled by the quantum Wielandt bound~\cite{Sanz2010Quantum}. This produces the block-injective canonical form (biCF in~\cite{Cirac2017Matrixa}). 

    \item \textit{Block separation.} After the tensors are in CF, the proof compares their normal blocks by first grouping repeated copies of the same normal tensor into a basis of normal tensors. For large enough system size, the MPVs generated by distinct basis elements are linearly independent. Since the two full MPV families are equal, the independent normal-block sectors on the two sides must match. The transfer-operator gap then identifies the possible matches: if the mixed overlap of two normal blocks does not decay, the blocks are gauge-equivalent up to a phase; if the overlap decays, they cannot represent the same sector. In the equal-MPV case, these block phases must be compatible with the scalar weights in the CF decomposition, so that after absorbing the phases into the weights the full block-diagonal tensors are gauge-equivalent. The comparison is made at finite system sizes, not through an asymptotic limit, which is why equal-modulus and oscillating cases, such as GHZ-type phase copies, are included.

    \item \textit{Equal-MPV fundamental theorem.} Combining canonical-form reduction with block separation gives the equal case of the fundamental theorem: two CF representatives generate the same MPV family exactly when their normal-block sector data match and the corresponding blocks are gauge-equivalent after the scalar weights and block phases are aligned.
\end{enumerate}

\cref{fig:cascade} summarizes this reduction. Starting from an arbitrary tensor, one first obtains a block-diagonal tensor with irreducible blocks by discarding the off-diagonal triangular terms. A common blocking removes the periods of these irreducible blocks and gives CF, whose blocks are normal; \cref{fig:block_spectrum} shows this step at the level of a single block's transfer-operator spectrum. A further gauge normalization gives CFII, while a separate finite blocking, controlled by the quantum Wielandt bound, gives the block-injective canonical form (biCF).

% fig_cascade.tex  --  canonical-form reduction (step (i) of the appendix
% proof outline), at the level of the global tensor. The single-block
% spectrum picture behind the +p step is figures/fig_block_spectrum.tex.
% Requires:  \usepackage{tikz, bm}
%            \usetikzlibrary{arrows.meta, positioning, calc, fit, backgrounds}
\begin{figure*}[htbp]
    \centering
    \begin{tikzpicture}[
            font=\scriptsize,
            >={Stealth[length=3pt]},
            mfill/.style   ={fill=black!20},
            mfill2/.style  ={fill=black!34},
            mfill3/.style  ={fill=black!50},
            moff/.style    ={fill=black!8},
            mborder/.style ={draw=black, line width=0.5pt},
            grid/.style    ={draw=black!55, line width=0.2pt},
            flow/.style    ={->, semithick, draw=black!80},
            elbl/.style    ={font=\scriptsize, fill=white, inner sep=1pt},
            stage/.style   ={font=\scriptsize\bfseries},
            sub/.style     ={font=\scriptsize\itshape, text=black!65},
            x=1cm, y=1cm
        ]

        \begin{scope}[shift={(0,0)}]
            \fill[mfill] (0,0) rectangle (1.6,1.6);
            \draw[grid,step=0.4] (0,0) grid (1.6,1.6);
            \draw[mborder] (0,0) rectangle (1.6,1.6);
            \node[stage] at (0.8,-0.34) {general $A^i$};
        \end{scope}

        \begin{scope}[shift={(3.0,0)}]
            \fill[mfill]  (0,0.8) rectangle (0.8,1.6);
            \fill[mfill]  (0.8,0) rectangle (1.6,0.8);
            \fill[moff]   (0.8,0.8) rectangle (1.6,1.6);
            \draw[draw=black!45,line width=0.4pt] (0.8,0.8)--(1.6,1.6);
            \draw[grid,step=0.4] (0,0) grid (1.6,1.6);
            \draw[mborder] (0,0) rectangle (1.6,1.6);
            \draw[mborder, line width=0.4pt] (0.8,0)--(0.8,1.6) (0,0.8)--(1.6,0.8);
            \node[font=\tiny] at (1.2,1.2) {drop};
            \node[font=\tiny] at (0.4,0.4) {$0$};
            \node[stage] at (0.8,-0.34) {block-triangular};
        \end{scope}

        \begin{scope}[shift={(6.25,0)}]
            \fill[mfill]  (0,0.8) rectangle (0.8,1.6);
            \fill[mfill2] (0.8,0) rectangle (1.6,0.8);
            \draw[grid,step=0.4] (0,0) grid (1.6,1.6);
            \draw[mborder] (0,0) rectangle (1.6,1.6);
            \draw[mborder, line width=0.4pt] (0.8,0)--(0.8,1.6) (0,0.8)--(1.6,0.8);
            \node[font=\scriptsize] at (0.4,1.2) {$A_0^i$};
            \node[font=\scriptsize] at (1.2,0.4) {$A_1^i$};
            \node[stage] at (0.8,-0.34) {irreducible: $\textstyle\bigoplus_k A_k^i$};
            \draw[densely dashed, draw=black!60] (-0.05,0.78) rectangle (0.85,1.63);
            \node[sub, anchor=south] at (0.4,1.66) {$A_k^i$};
        \end{scope}

        % CF and biCF squares use the block-canonical-form iconography:
        % light plain blocks = normal, solid dark blocks = injective (the
        % block's matrices span M_{D_k}(C)); grids are reserved for matrix
        % entries in the first three squares
        \begin{scope}[shift={(9.5,0)}]
            \fill[mfill] (0,0.8) rectangle (0.8,1.6);
            \fill[mfill] (0.8,0) rectangle (1.6,0.8);
            \draw[mborder] (0,0) rectangle (1.6,1.6);
            \draw[mborder, line width=0.4pt] (0.8,0)--(0.8,1.6) (0,0.8)--(1.6,0.8);
        \end{scope}
        \node[stage, align=center] at (10.3,-0.5)
              {normal (CF):\\$\textstyle\bigoplus_k \mu_k A_k^i$};

        \begin{scope}[shift={(12.75,0)}]
            \fill[mfill3] (0,0.8) rectangle (0.8,1.6);
            \fill[mfill3] (0.8,0) rectangle (1.6,0.8);
            \draw[mborder] (0,0) rectangle (1.6,1.6);
            \draw[mborder, line width=0.4pt] (0.8,0)--(0.8,1.6) (0,0.8)--(1.6,0.8);
        \end{scope}
        \node[stage, align=center] at (13.55,-0.5)
              {injective (biCF):\\$\textstyle\bigoplus_k \mu_k^L A_k^{(L)}$};
        \node[sub] at (13.55,-1.05) {blocks $\{A_k^{(L)}\}$ span $M_{D_k}(\C)$};

        \draw[flow] (1.7,0.8) -- node[sub, above, align=center, yshift=-1pt]
              {invariant\\subspaces\\of $\{A^i\}$} (2.95,0.8);
        \draw[flow] (4.7,0.8) -- node[sub, above, align=center, yshift=-1pt]
              {discard\\off-diagonal}
              node[sub, below, align=center, yshift=1pt]
              {no contr.\\to the trace} (6.12,0.8);
        \draw[flow] (7.95,0.8) -- node[elbl, above]{$+p$ sites}
              node[sub, below, align=center, yshift=1pt]
              {removes\\periodicity} (9.4,0.8);
        % blocking length renamed k -> L (k is the block index in the same
        % row) and the bound corrected to the cited O(D^4)
        \draw[flow] (11.2,0.8) -- node[elbl, above]{$+L$ sites}
              node[sub, below, yshift=1pt]{$L=O(D^4)$} (12.65,0.8);

    \end{tikzpicture}
    \caption{Canonical-form reduction (step~(i) of the proof outline in this
        appendix), at the level of the global tensor. The common invariant
        subspaces of the matrices $\{A^i\}$ put the tensor in block-triangular
        form; the off-diagonal blocks do not contribute to the trace and are
        discarded, giving a block-diagonal tensor with \emph{irreducible}
        blocks $A^i=\bigoplus_k A_k^i$, which need not yet be normal. Blocking
        $p$ sites removes the residual periodicity
        (\cref{fig:block_spectrum}) and gives the canonical
        form (CF), a direct sum of normal blocks, each carrying a scalar
        weight $\mu_k$. A further blocking of $L=O(D^4)$ sites, controlled by
        the quantum Wielandt bound, makes each block injective: the blocked
        matrices $\{A_k^{(L)}\}$ span the block algebra $M_{D_k}(\C)$. The
        result, $\bigoplus_k \mu_k^L A_k^{(L)}$, has the same block-diagonal
        shape as CF and is the block-injective canonical form (biCF).
        }\label{fig:cascade}
\end{figure*}

% fig_block_spectrum.tex  --  the single-block spectrum picture behind the
% +p step of the canonical-form reduction (figures/fig_cascade.tex), plus
% the CFII branch.
% Requires:  \usepackage{tikz, bm}
%            \usetikzlibrary{arrows.meta, positioning, calc, fit, backgrounds}
\begin{figure}[htbp]
    \centering
    \begin{tikzpicture}[
            font=\scriptsize,
            >={Stealth[length=3pt]},
            flow/.style    ={->, semithick, draw=black!80},
            elbl/.style    ={font=\scriptsize, fill=white, inner sep=1pt},
            stage/.style   ={font=\scriptsize\bfseries},
            sub/.style     ={font=\scriptsize\itshape, text=black!65},
            eig/.style     ={fill=black, draw=none},
            x=1cm, y=1cm
        ]

        \node[sub] at (2.35,0.55) {spectrum of $E_{A_k}$ in the unit disk};

        \coordinate (P) at (0.7,-1.05);
        \draw[draw=black!35,line width=0.2pt] ($(P)+(-0.7,0)$)--($(P)+(0.7,0)$)
                                              ($(P)+(0,-0.7)$)--($(P)+(0,0.7)$);
        \draw[thin] (P) circle (0.55);
        \foreach \a in {90,210,330}{\fill[eig] ($(P)+(\a:0.55)$) circle (0.045);}
        \fill[eig] ($(P)+(25:0.20)$) circle (0.03);
        \fill[eig] ($(P)+(165:0.24)$) circle (0.028);
        \fill[eig] ($(P)+(270:0.16)$) circle (0.026);
        \node[stage] at (0.7,-1.85) {periodic};

        \coordinate (N) at (3.95,-1.05);
        \draw[draw=black!35,line width=0.2pt] ($(N)+(-0.7,0)$)--($(N)+(0.7,0)$)
                                              ($(N)+(0,-0.7)$)--($(N)+(0,0.7)$);
        \draw[thin] (N) circle (0.55);
        \draw[densely dashed, draw=black!50] (N) circle (0.30);
        % the largest subdominant eigenvalue sits ON the dashed circle, so the
        % gap arrow measures exactly |lambda_1|-|lambda_2| of the drawn spectrum
        \fill[eig] ($(N)+(0:0.55)$) circle (0.055);
        \fill[eig] ($(N)+(150:0.17)$) circle (0.03);
        \fill[eig] ($(N)+(215:0.30)$) circle (0.028);
        \fill[eig] ($(N)+(95:0.12)$) circle (0.026);
        % gap measured vertically at the top; label outside the disk so no
        % white box masks the spectrum
        \draw[<->, line width=0.4pt] ($(N)+(90:0.30)$) -- ($(N)+(90:0.55)$);
        \node[font=\tiny] at ($(N)+(-0.28,0.73)$)
              {$|\lambda_1|{-}|\lambda_2|$};
        \node[stage] at (3.95,-1.85) {normal};

        \draw[flow] ($(P)+(0.62,0)$) -- node[elbl, above]{$+p$ sites} ($(N)+(-0.62,0)$);

        \draw[flow] ($(N)+(0.30,0.46)$) to[out=50,in=250]
              node[elbl, right, pos=0.3]{gauge norm.} (5.35,0.20);
        \node[stage] at (5.35,0.48) {CFII};

    \end{tikzpicture}
    \caption{Refining one irreducible block $A_k^i$: the single-block picture
        behind the $+p$ step of \cref{fig:cascade}. The transfer operator
        $E_{A_k}$ of an irreducible block may have several peripheral
        eigenvalues, the roots of unity (periodic); blocking $p$ sites leaves
        a single peripheral eigenvalue $1$, with a gap $|\lambda_1|-|\lambda_2|$
        to the rest of the spectrum, so the block is normal and its transfer
        operator primitive. Separately, a gauge normalization of the CF
        blocks gives canonical form~II (CFII); CFII and biCF are distinct
        refinements of CF and should not be identified with each
        other.}\label{fig:block_spectrum}
\end{figure}

\cref{tab:chapters} lists the blueprint chapters analyzed in this paper. These chapters contain the proof chain for the equal-MPV fundamental theorem, its prerequisites, and the symmetry result used as a first application. The full TNLean repository contains additional material beyond this chain, including channel representations, matrix product operators and density operators, quantum entropy, quantum dynamical semigroups, parent Hamiltonians, exponential decay of correlations, concrete examples, and alternative formulations of MPS fundamental theorems. These parts are useful for the broader TNLean library, but they are not part of the formalization analyzed here. Some deep inputs used only outside the FT-MPS and symmetry chains are currently assumed rather than proved; they do not enter the results discussed in this paper. The chapters are listed in the blueprint's order of presentation, not in a strict prerequisite order.

\begin{table*}[h]
    \caption{\label{tab:chapters}The \nChapters{} blueprint chapters covered in this paper.}
    \begin{tabular}{cl}
        \toprule
        {Ch.} & {Topic}                                               \\
        \midrule
        1        & Introduction                                       \\
        2        & Matrix product vectors                             \\
        3        & Single-block fundamental theorem                   \\
        4        & Quantum channels                                   \\
        5        & Schwarz inequalities and the multiplicative domain \\
        6        & Quantum Perron-Frobenius theory                    \\
        7        & Transfer-operator gap and block separation         \\
        8        & Wielandt bound                                     \\
        9        & Canonical form reduction                           \\
        10       & Basis of normal tensors                            \\
        11       & Proof of the fundamental theorem                   \\
        12       & Symmetries and string order                        \\
        \bottomrule
    \end{tabular}
\end{table*}

\cref{fig:references} summarizes how the source literature was distributed across the blueprint. The point of the figure is not only bibliographic: it shows that the formalization was not driven by a single FT-MPS paper, but by a network of inputs from quantum channels, Perron-Frobenius theory, Wielandt bounds, canonical forms, and symmetry analysis. These sources use slightly different physical and mathematical notations, which the agent team had to reconcile.
This source structure is reflected in the organization of the blueprint chapters and in the separation between operator-theoretic infrastructure and tensor-network-specific arguments.

% fig_references.tex  --  provenance: which primary sources fed which blueprint
% chapters.  Complementary to fig_alignment (the abstract literature/blueprint/
% Lean loop): this one names the actual sources.  Two-column float.
% Chapter numbers follow \cref{tab:chapters} (paper numbering); ch. 1 is the
% introduction and has no source, every other chapter appears in one topic box.
% Requires:  \usepackage{tikz}
%            \usetikzlibrary{arrows.meta, positioning, calc, fit, backgrounds}
\begin{figure*}[htbp]
    \centering
    \begin{tikzpicture}[
            font=\scriptsize,
            >={Stealth[length=3pt]},
            src/.style    ={draw, rounded corners=2pt, align=center, inner sep=3pt,
                    fill=black!5, minimum height=0.78cm, text width=2.8cm},
            topic/.style  ={draw, rounded corners=2pt, align=center, inner sep=3pt,
                    fill=black!8, minimum height=0.72cm, text width=3.5cm},
            we/.style     ={->, semithick, draw=black!75},
            oe/.style     ={->, thin, draw=black!45},
            head/.style   ={font=\footnotesize\bfseries},
            x=1cm, y=1cm
        ]

        % headers sit the same distance above their first box
        \node[head] at (0,6.25) {primary sources};
        \node[head] at (8.0,7.0) {blueprint chapters (paper numbering)};

        %% --- sources (left) ---
        \node[src] (wolf) at (0,5.5)
              {Wolf (2012)\\[1pt]{\scriptsize\itshape Quantum Channels\\ \& Operations}};
        % the 2021 RMP review gets its own box (was bundled with the 2017 MPV
        % paper): it fed most chapters and the string-order paper alone does
        % not cover the SPT material of ch. 12
        \node[src] (pvwc) at (0,4.15) {P\'erez-Garc\'{\i}a et al.\ (2007)\\[1pt]{\scriptsize\itshape the FT-MPS paper}};
        \node[src] (cirac) at (0,3.1) {Cirac et al.\ (2017)\\[1pt]{\scriptsize\itshape the MPV paper}};
        \node[src] (rmp) at (0,2.05) {Cirac et al.\ (2021)\\[1pt]{\scriptsize\itshape the RMP review}};
        \node[src] (sanz) at (0,1.0) {Sanz et al.\ (2010)\\[1pt]{\scriptsize\itshape quantum Wielandt}};
        \node[src] (so) at (0,-0.05) {P\'erez-Garc\'{\i}a et al.\ (2008)\\[1pt]{\scriptsize\itshape string order (0802.0447)}};

        %% --- topics (right): PAPER chapter numbers (\cref{tab:chapters}) ---
        \node[topic] (chan) at (8.0,6.3)  {quantum channels (ch.\,4)};
        \node[topic] (pf)   at (8.0,5.25) {Perron--Frobenius \& Schwarz (ch.\,5--6)};
        \node[topic] (wlt)  at (8.0,4.2)  {Wielandt bound (ch.\,8)};
        \node[topic] (cf)   at (8.0,3.15) {canonical form \& block separation (ch.\,7, 9)};
        \node[topic] (mpv)  at (8.0,2.1)  {matrix product vectors \& single-block theorem (ch.\,2--3)};
        \node[topic] (ft)   at (8.0,1.05) {normal tensors \&\\ fundamental theorem\\ (ch.\,10--11)};
        \node[topic] (sym)  at (8.0,0.0)  {symmetries \& string order (ch.\,12)};

        %% --- edges; multi-input boxes take each edge at a vertically offset
        %%     anchor so the arrowheads do not stack on one point ---
        %% Wolf edges (heavier)
        \draw[we] (wolf.east) to[out=5,in=180]   ([yshift=1.2mm]chan.west);
        \draw[we] (wolf.east) to[out=0,in=180]   (pf.west);
        \draw[we] (wolf.east) to[out=-15,in=180] ([yshift=1.5mm]wlt.west);
        \draw[we] (wolf.east) to[out=-30,in=180] ([yshift=1.5mm]cf.west);

        %% other sources (lighter)
        \draw[oe] (pvwc.east)  to[out=2,in=184]   (wlt.west);
        \draw[oe] (pvwc.east)  to[out=-14,in=180] ([yshift=0.5mm]cf.west);
        \draw[oe] (pvwc.east)  to[out=-28,in=178] ([yshift=1.2mm]mpv.west);
        \draw[we] (pvwc.east)  to[out=-42,in=180] ([yshift=1.2mm]ft.west);
        \draw[oe] (pvwc.east)  to[out=-55,in=180] ([yshift=1.4mm]sym.west);
        \draw[oe] (cirac.east) to[out=2,in=182]   ([yshift=-0.5mm]cf.west);
        \draw[we] (cirac.east) to[out=-16,in=182] ([yshift=-0.2mm]mpv.west);
        \draw[we] (cirac.east) to[out=-32,in=182] ([yshift=-0.2mm]ft.west);
        %% review fan (thin): it fed most chapters, the system's first input;
        %% heavy (principal) credit stays with the theorem papers
        \draw[oe] (rmp.east)   to[out=55,in=186]  ([yshift=-1.2mm]chan.west);
        \draw[oe] (rmp.east)   to[out=22,in=178]  ([yshift=-1.8mm]cf.west);
        \draw[oe] (rmp.east)   to[out=0,in=180]   ([yshift=-1.6mm]mpv.west);
        \draw[oe] (rmp.east)   to[out=-22,in=178] ([yshift=-1.6mm]ft.west);
        \draw[oe] (rmp.east)   to[out=-42,in=178] ([yshift=0.2mm]sym.west);
        \draw[oe] (sanz.east)  to[out=32,in=188]  ([yshift=-1.5mm]wlt.west);
        \draw[we] (so.east)    to[out=0,in=180]   ([yshift=-1.2mm]sym.west);

    \end{tikzpicture}
    \caption{Primary sources and the blueprint chapters they feed, with chapter
        numbers as in \cref{tab:chapters}. Wolf's lecture notes on quantum
        channels and operations~\cite{Wolf2012Quantum} are the principal source
        for the operator-theoretic chapters (quantum channels, the
        Perron--Frobenius and Schwarz theory, the Wielandt bound, and
        canonical-form reduction). The
        matrix-product-state results of P\'erez-Garc\'{\i}a et
        al.~\cite{Perez-Garcia2007Matrix} and of Cirac et
        al.~\cite{Cirac2017Matrixa} supply the tensor-network chapters, and the
        review of Cirac et al.~\cite{Cirac2021Matrix} feeds most chapters; the
        quantum Wielandt inequality of Sanz et al.~\cite{Sanz2010Quantum} feeds
        the Wielandt bound, and the symmetry chapter rests on the string-order
        analysis of P\'erez-Garc\'{\i}a et al.~\cite{Perez-Garcia2008String}
        together with the review. Heavier
        lines mark each chapter's principal sources: Wolf's notes supply the
        operator-theoretic groundwork (ch.~4--9), while the
        matrix-product-vector, fundamental-theorem, and symmetry chapters
        (ch.~2--3 and 10--12) rest on the matrix-product-state
        literature.
        }\label{fig:references}
\end{figure*}

\subsection{Build modes}\label{sec:bp_build}

The same chapter sources compile both as a PDF monograph and as an HTML document with an interactive dependency graph. The PDF provides the conventional mathematical presentation, while the HTML version lets readers inspect theorem dependencies and proof status directly.
The repository README documents the technical setup used to compile both outputs from the shared source files, including Tikz figure handling, common macros, and output-specific formatting choices.

\subsection{The tag system}\label{sec:bp_tags}

The tags below are part of Massot's \texttt{leanblueprint} package~\cite{Massot2021Leanblueprint}; we list them here for completeness.
The \texttt{\textbackslash{}lean} tag links a blueprint item to one or more full Lean declaration names. The \texttt{\textbackslash{}leanok} marker indicates that the corresponding statement or proof has been implemented in Lean, and may appear on the statement, on the proof, or on both. The \texttt{\textbackslash{}notready} marker flags items that are not yet ready for formalization, typically because prerequisite material is still missing. The \texttt{\textbackslash{}uses} tag records logical dependencies between blueprint items. Its arguments are blueprint labels, and a theorem statement and its proof may carry different dependency lists: the statement lists the
definitions needed to parse it, while the proof lists the lemmas the proof
actually invokes.

\subsection{Dependency graph and status tracking}\label{sec:bp_depgraph}

The HTML build of the blueprint~\cite{TNLeanBlueprint} renders all labeled items as nodes in a directed acyclic graph, with edges drawn from the \texttt{\textbackslash{}uses} declarations. Each node is colored according to its formalization status: dark green for theorems whose proofs are implemented and verified, green for theorems with Lean statements marked as complete, light green for definitions linked to Lean, blue for items that are ready to be stated or proved, and orange for items marked as not ready. The graph shows where work can proceed: orange nodes identify the next layer of prerequisite work. These status are generated from the tags in the blueprint and the dependency relations of the theorem and lemma labels.

\cref{fig:depgraph} shows the part of the formalization dependency graph surrounding the Fundamental Theorem, redrawn for print. Green nodes are Lean-verified results, using the same status convention as the dark-green ``proof verified'' nodes in the HTML graph; the grey node marks an intended application outside the present formalization. The lower part of the figure displays the theorem's prerequisites. In particular, the canonical-form argument depends on two largely independent inputs: quantum Perron-Frobenius theory, which supplies the spectral fixed-point and primitivity results, and the Wielandt bound, which supplies the finite-length spanning estimates. The upper part records what the theorem is then used to obtain: the virtual projective representation and its cohomology class, the formalized first step toward the one-dimensional symmetry-protected topological classification.

% fig_depgraph.tex  --  curated sub-region of the blueprint dependency graph
% around the fundamental theorem.  Requires: \usepackage{tikz};
% \usetikzlibrary{arrows.meta, positioning, calc, fit, backgrounds}.
\begin{figure*}[htbp]
    \centering
    \begin{tikzpicture}[
            >={Stealth[length=3.2pt]},
            % verified (proved) blueprint results are shown in green, using the
            % actual HTML dependency graph's status colours (bpVerified border;
            % bpProofDone/bpChainDone fill, per node, defined in preamble.tex):
            % the lighter fill is the graph's "this proof is formalized" shade,
            % the darker one "this proof and all its ancestors are formalized".
            n/.style    ={draw=bpVerified, rounded corners=2pt, fill=bpProofDone!55, align=center,
                    font=\scriptsize, minimum height=0.72cm, text width=2.5cm, inner sep=3pt},
            % ft depends on the whole verified chain shown here, so it gets the
            % graph's darker "chain done" shade rather than just "proof done".
            % blocking length: L (not k) and O(D^4), matching Figs. 2/S2
            % (cited Sanz et al. bound (D^2-d+1)D^2, the version formalized)
            ft/.style   ={n, draw=bpVerified, fill=bpChainDone!55, line width=1.1pt, text width=2.7cm},
            card/.style ={draw=black!50, rounded corners=2pt, fill=black!3,
                    minimum width=2.6cm, text width=2.2cm, minimum height=0.95cm,
                    align=center, font=\scriptsize, inner sep=3pt},
            up/.style   ={->, semithick, draw=black!65, shorten >=1.5pt, shorten <=1.5pt},
            x=1cm, y=1cm
        ]

        \node[n] (cesaro)  at (0,0)      {Ces\`aro fixed point};
        \node[n] (schwarz) at (3.6,0)    {Kadison--Schwarz inequalities};
        \node[n] (wiel)    at (7.2,0)    {Wielandt bound, blocking length $L=O(D^4)$};
        \node[n] (qpf)     at (0,1.15)   {quantum Perron--Frobenius theory};
        \node[n] (canon)   at (3.6,2.3)  {canonical-form reduction (CF\,$\to$\,biCF)};
        \node[n] (block)   at (1.9,3.45) {block separation (exact matching)};
        \node[n] (bnt)     at (5.7,3.45) {basis of normal tensors};
        \node[ft] (ft)     at (3.6,4.6)  {\textbf{fundamental theorem of MPS} (multi-block)};
        % wider than the other "n" boxes: wraps to 2 lines instead of 3, so
        % its top edge does not crowd the "intended application" label above
        \node[n, text width=3.4cm] (single)  at (9.2,4.6)  {single-block theorem (Skolem--Noether)};
        % Invariant stated in H^2(G,\C^\times), matching the formalized quotient
        % (blueprint ch. 12); the U(1) form follows after unitary normalization.
        \node[n] (spt)     at (3.6,6.15) {SPT phase invariant $[\omega]\in H^2(G,\C^\times)$};

        \node[card] at (9.46,6.4) {};
        \node[card] at (9.33,6.27) {};
        \node[card] (land) at (9.2,6.15) {1D SPT phases:\\ AKLT, cluster,\\ Haldane, \dots};
        % annotation sits beside the card stack (not above it), so the top
        % tier does not need extra vertical clearance for a two-line label
        \node[font=\scriptsize\itshape, text=black!65, align=left, anchor=west]
              at (11.05,6.27) {a family of phases,\\ one per cohomology class};

        \draw[up] (cesaro)  -- (qpf);
        \draw[up] (qpf)     -- (canon);
        \draw[up] (schwarz) -- (canon);
        \draw[up] (wiel)    -- (canon);
        \draw[up] (canon)   -- (block);
        \draw[up] (canon)   -- (bnt);
        \draw[up] (block)   -- (ft);
        \draw[up] (bnt)     -- (ft);
        \draw[up] (single)  -- (ft);
        \draw[up] (single)  -- (spt);
        % label above the dashed edge (clear of the single->spt diagonal,
        % which approaches spt from below); unfilled, so it doesn't paint
        % over the SPT node's corner or the card border
        \draw[up, dashed] (spt.east) -- node[font=\scriptsize, above, inner sep=2pt, pos=0.5]{intended application} (land.west);

    \end{tikzpicture}
    \caption{A curated sub-region of the blueprint dependency graph around the
        fundamental theorem, redrawn for print. Each green box is a formalized
        result; an arrow points from a result to the one that uses it. The verified chain
        runs from the Ces\`aro fixed point and the Kadison--Schwarz inequalities,
        through quantum Perron--Frobenius theory and the Wielandt bound, to
        canonical-form reduction,
        block separation, the
        basis of normal tensors,
        and the multi-block fundamental theorem; the single-block theorem feeds
        both the multi-block theorem and the symmetry construction, which
        yields the symmetry-protected-topological phase invariant
        $[\omega]\in H^2(G,\C^\times)$.
        The card stack at the top right is the intended application (dashed arrow), the classification of
        one-dimensional symmetry-protected phases; it is not part of the formalization, which ends at the invariant.
        Quantum Perron--Frobenius theory and the
        Wielandt bound are independent prerequisites of the canonical form: the
        former is spectral, the latter a span-growth argument.}\label{fig:depgraph}
\end{figure*}

\subsection{Blueprint writing conventions}\label{sec:bp_sync}

Left to its defaults, the agents' writing style is often too obscure for a human reader to audit, and the obscurity has real costs. One episode produced roughly $2{,}000$ lines of Lean code and an entire blueprint chapter devoted solely to the $D=0$ and $N=0$ edge cases; in others, the agents pursued long unintended proof paths whose success could not be judged without additional hypotheses.
We therefore require, and enforce in review, that the blueprint be written as self-contained, human-readable mathematical prose: Lean identifiers must not appear in prose, proof sketches must follow the actual Lean proof structure, and software-engineering vocabulary is avoided in chapter titles and mathematical statements.
Agent-written drafts often violated these conventions, for example by leaking Lean names into prose, replacing statements by source-paper theorem numbers, or using terms such as ``helper lemma'', ``wrapper'',
and ``refactor''. Correcting these issues was one of the recurring tasks during
the \nReviewRounds{} review rounds.

%======================================================================
\section{Persistent memory and project knowledge}\label{app:memory}
%======================================================================

The multi-agent system maintains a persistent project memory across sessions.
These files play the role of accumulated research notes, much like those a graduate student keeps over the course of a project: they record what has been tried, which proof routes worked, which statements were found to be false, and which conventions must be followed. This memory prevents the same mathematical mistakes from recurring across independent agent sessions.
The memory also mediates cross-agent handoffs: when the orchestrator delegates a task, it can place selected memory files in the sub-agent's initial context. The most frequent handoff is scout-then-prove (\cref{sec:scout_impl}), in which the memo written by an inexpensive search agent is attached to the context of the proof-writing agent that follows, so that the Mathlib search is done once rather than repeated in every session.
The full session archive, and the distilled files described below, are kept in the project's persistent memory system; a curated selection of the ones judged most broadly reusable is released with TNLean under \texttt{docs/audits/} and \texttt{docs/paper-gaps/} so that the reusable parts of the process can be inspected and reused.

\subsection{Organization}\label{sec:mem_arch}

The knowledge base consists of roughly 450 Markdown files, accumulated across the project's successive working directories. Files are named by task and date, for example \texttt{2026-02-07\_initial-mathlib-audit.md} or \texttt{aperiodic-ft-assembly-gap-2026-06-02.md}, so that retrieval is by filename convention and keyword search rather than by deep hierarchy. Many files also carry metadata recording the agent that last modified them and a timestamp, so that files can be filtered by role and time period.

\subsection{Categories of memory files}

The archive falls into seven principal categories, summarized in \cref{tab:memory_types}. The category counts are approximate and not exhaustive; the remaining files are cross-cutting analyses and one-off notes.

\begin{table*}[h]
    \caption{\label{tab:memory_types}Memory categories with approximate counts and representative filename patterns.}
    \begin{tabular}{lcl}
        \toprule
        {Category}                       & {Count}  & {Typical patterns}                          \\
        \midrule
        Progress / status                & $\sim$80 & \texttt{session\_*}, \texttt{*\_progress*}  \\
        Design / scout                   & $\sim$60 & \texttt{*\_plan*}, \texttt{*\_scout*}       \\
        Cleanup / reorganization         & $\sim$50 & \texttt{*\_cleanup*}, \texttt{*\_linter*}   \\
        Audit / review                   & $\sim$40 & \texttt{*\_audit*}, \texttt{*\_crosscheck*} \\
        Blueprint sync                   & $\sim$30 & \texttt{blueprint\_*}, \texttt{chXX\_*}     \\
        Errata / counterexample          & $\sim$15 & \texttt{counterexample\_*}                  \\
        Pinned references / style guides & 10       & style guides, technique compendia           \\
        \bottomrule
    \end{tabular}
\end{table*}

Each memory file also carries a label identifying the agent role that last
modified it. The distribution across role families is shown in
\cref{fig:by_labels}. The proof-writing agent and the orchestration family
together account for more than half of all entries, consistent with how much of
the work is proof construction and task coordination.

% fig_creator_stats.tex  -  Memory files by normalized role family.
%   Horizontal bar chart, sorted descending.  Dark bars: roles in the
%   specialized-role taxonomy (\cref{tab:roles}).  Light bars: labels
%   outside it (chat, untagged bootstrap files, and a residual other
%   group: presenter 3, review sessions 2, user 1).
%   Bar length encodes file count; scale S = 0.034 cm per file (so the
%   x-tick at 150 sits at 5.10 cm).  Total = 453 (sum of the bars below).
% Requires: \usepackage{tikz} (loaded in preamble).
\begin{figure}[htbp]
\centering
\resizebox{0.66\textwidth}{!}{%
\begin{tikzpicture}[
    role/.style ={fill=black!55, draw=black!70, thin},
    other/.style={fill=black!18, draw=black!70, thin},
    cat/.style  ={font=\scriptsize, anchor=east},
    val/.style  ={font=\scriptsize, anchor=west},
    ax/.style   ={draw=black!45}
  ]

  %% ---- bars (y top to bottom, descending count) --------------------
  \filldraw[role]  (0,3.81) rectangle (5.542,4.11);
    \node[cat] at (-0.12,3.96) {\texttt{lean}};            \node[val] at (5.622,3.96) {163};
  \filldraw[role]  (0,3.37) rectangle (3.162,3.67);
    \node[cat] at (-0.12,3.52) {\texttt{leanOrchestrator}};\node[val] at (3.242,3.52) {93};
  \filldraw[role]  (0,2.93) rectangle (1.904,3.23);
    \node[cat] at (-0.12,3.08) {\texttt{leanSearch}};      \node[val] at (1.984,3.08) {56};
  \filldraw[role]  (0,2.49) rectangle (1.496,2.79);
    \node[cat] at (-0.12,2.64) {\texttt{leanSimplifier}};  \node[val] at (1.576,2.64) {44};
  \filldraw[other] (0,2.05) rectangle (1.428,2.35);
    \node[cat] at (-0.12,2.20) {(untagged)};               \node[val] at (1.508,2.20) {42};
  \filldraw[role]  (0,1.61) rectangle (1.088,1.91);
    \node[cat] at (-0.12,1.76) {\texttt{leanBlueprint}};   \node[val] at (1.168,1.76) {32};
  \filldraw[other] (0,1.17) rectangle (0.578,1.47);
    \node[cat] at (-0.12,1.32) {\texttt{chat}};            \node[val] at (0.658,1.32) {17};
  \filldraw[other] (0,0.73) rectangle (0.204,1.03);
    \node[cat] at (-0.12,0.88) {(other)};                  \node[val] at (0.284,0.88) {6};

  %% ---- axes --------------------------------------------------------
  \draw[ax] (0,0.66) -- (0,4.20);
  \draw[ax] (0,0.66) -- (5.55,0.66);
  \foreach \x/\t in {0/0, 1.70/50, 3.40/100, 5.10/150} {
    \draw[ax] (\x,0.66) -- (\x,0.58);
    \node[font=\scriptsize, text=black!60, anchor=north] at (\x,0.57) {\t};
  }
  \node[font=\scriptsize\itshape, text=black!60, anchor=north] at (2.77,0.26) {memory files};

\end{tikzpicture}%
}
\caption{Memory files by the agent role that last modified them (total $=453$). Dark bars are roles in the specialized-role taxonomy of \cref{tab:roles}; light bars are labels outside it: \texttt{chat}, untagged startup files from before the metadata was recorded, and a residual \texttt{other} group.}\label{fig:by_labels}
\end{figure}

\subsection{From session notes to pinned reference files}\label{sec:knowledge_lifecycle}

Session notes record local operational information, including proof status, build results, unfinished steps, and task completion notes. These notes are periodically distilled by the orchestrator, or by a dedicated agent, into durable lessons: proof techniques, known false statements with counterexamples, Mathlib gap lists, and coordination policies. Distillation is usually requested by the supervisor, and the orchestrator carries a standing instruction to consolidate notes as they accumulate. The most reusable notes are then promoted to pinned status and made available to all future sessions without explicit search. This prevents previously encountered obstacles from being re-investigated independently.

Up to ten memory files may be pinned, and every agent is directed to consult them at the start of a session. This limit reserves pins for durable guidance rather than temporary status notes. At the final project state, the pinned files covered five kinds of information: writing conventions for the blueprint and paper; Lean~4 and Mathlib techniques; source-literature alignment, including lecture-note numbering and coverage audits for Wolf's notes~\cite{Wolf2012Quantum}; review and orchestration policy, including when an automatic reviewer should comment on or approve code changes; and project-specific mathematical guidance, including the paper-vs-formalization gap analysis and the MPS/SPT analysis that was the active effort at that time.

%======================================================================
\section{Orchestration patterns}\label{app:orchestration}
%======================================================================

The orchestrator coordinates sub-agents through three recurring patterns: dispatching independent tasks in parallel, staging proof work through scout-then-prove handoffs, and alternating between review and repair phases. Note that we did not enforce or hint these patterns; the orchestration agent came up withthe orchestrator arrived at them on its own.

\subsection{Parallel dispatch}\label{sec:fanout}

The dominant orchestration pattern is \textit{parallel dispatch}. The orchestrator classifies the remaining work, assigns it to disjoint groups of files, and launches three to five sub-agents concurrently, each with its own file set and task-specific instructions.
The sub-agents run independently while the orchestrator continues supervising other work. As results return, the orchestrator merges the changes and validates the combined state with a full build.
The orchestrator assigns disjoint file sets at dispatch, since two agents editing the same file produce conflicting edits that are expensive to reconcile.

\subsection{Scout-then-prove}\label{sec:scout_impl}

Before attempting a difficult proof, the system deploys an inexpensive scouting agent to assess feasibility. A \texttt{leanSearch} agent searches Mathlib for relevant lemmas, checks whether the needed definition or lemma already exists, and writes a design memo with exact theorem signatures. The memo is not held in the orchestrator's conversation but written to the persistent memory (\cref{app:memory}); the orchestrator reviews it and, if the route is feasible, dispatches an expensive \texttt{lean} agent with that memo attached as read-only context. In practice, finding the right existing Mathlib lemma was often the step that saved the most effort.

\emph{Failure modes.} Several recurring failure modes appear in such sessions. First, an agent may produce a proof that compiles but establishes only a weaker version of the intended theorem; these are caught by the agent team's self review or the human review of the blueprint, as discussed in the main text. Second, a proof term may grow too large for Lean's computation budget: Lean caps the work spent on a single proof by counting internal steps, which it calls ``heartbeats'', and aborts the attempt once their number exceeds the budget, by default $200{,}000$, so the compiler does not hang on a runaway expression. The remedy is to factor the proof into smaller lemmas with named intermediate results; in one case this cut a single proof's count roughly sixfold, from $1{,}600{,}000$ to $250{,}000$ heartbeats. Third, diagnostics can become outdated after edits and require a file-specific server restart. A scouting agent can also propose a lemma whose type does not match the goal; checking the lemma's full type signature with the \texttt{lean\_inspect} command of \cref{sec:lean_tools} catches this before the lemma is used later.

\subsection{Review-repair cycles}\label{sec:audit_fix}

Review and repair phases in alternating fashion. A review agent, typically the automated reviewer or \texttt{leanBlueprint}, identifies discrepancies: paper-vs-Lean mismatches, missing hypotheses, style violations, or blueprint drift. Repair agents address the identified issues, and the review is repeated to verify the fixes and catch regressions. This cycle typically converges in two or three rounds. The most important instance was the paper-vs-Lean cross-check, which discovered several errors during formalization (see \cref{app:gaps}).

A related review step runs when an agent proposes changes to the Lean source tree. An automated reviewer, a language model following a fixed review prompt, examines the proposed changes and posts comments on style violations, proof-integrity issues, or missing documentation; a repair agent then reads the unresolved comments, applies corrections, and submits a revised version. The revised change does not itself trigger a fresh review, which would risk an unending review loop; instead the loop repeats while unresolved comments remain and stops once they are cleared or after a safety cap of five iterations. The review and repair prompts are released with TNLean; the system prompts of the agent roles in \cref{tab:roles} are reproduced in \cref{app:prompts}.

%======================================================================
\section{Paper-to-formalization gap analysis}\label{app:gaps}
%======================================================================

The formalization process revealed several discrepancies between the primary literature and the formal proofs: divergences in proof strategy, differences in hypothesis strength, and gaps between formal correctness and mathematical intent.
The main text briefly discusses two such errors, the doubly-stochastic ``gauge'' and the asymptotic reformulation of the finite statement; this appendix records the remaining case in more detail.
Another mismatch of the same kind concerned repeated copies in the canonical form. Several blocks can represent the same normal tensor up to gauge and phase. If these copies have weights $\mu_{j,1},\ldots,\mu_{j,r_j}$, then their total contribution at length $N$ is multiplied by
$\sum_{q=1}^{r_j}\mu_{j,q}^N$.
The agents had initially treated this factor as if it behaved like a single scalar power, or as if it converged to a nonzero limit after normalization. This is false in general: for two copies with weights $\mu_{j,1}=+1$ and $\mu_{j,2}=-1$, the factor is $1+{(-1)}^N$, which oscillates and vanishes for every odd $N$. The proof therefore covered only a narrower case than the intended FT-MPS\@.

The correction was to avoid a limiting argument at this point. After equivalent normal blocks have been grouped together, the remaining distinct normal blocks produce MPVs that are linearly independent for all sufficiently large system sizes. Equality of the full MPVs therefore forces equality of the corresponding scalar factors, exactly and at each sufficiently large length. These scalar factors are finite sums of powers of nonzero complex numbers. A geometric extrapolation step extends equality from all sufficiently large exponents to all exponents, and Newton-Girard identities then recover the multisets of weights attached to each repeated block. This treats the oscillating case $\mu_{j,1}=+1$, $\mu_{j,2}=-1$ and general repeated-copy cases on the same footing, without assuming convergence.

\subsection{Proof strategy divergences}

The injective-case route via the Skolem--Noether theorem, described in the main text and reproduced in \cref{app:ftsb_walkthrough}, is one example of a divergence between the formal proof and the standard MPS literature; the divergences below cover the remaining cases.

When autoformalization is carried out across different resources with different conventions and terminologies, some care may be needed to ensure that the agents do not conflate these themselves. Otherwise they may lead to possibly extensive detour and failed attempts that can be expensive to fix or retries.
\cref{fig:gauge_innerproduct} clarifies two distinctions we encountered. First, the gauge of the transfer operator can be chosen to be unital/right-canonical or trace-preserving/left-canonical, but not both in general. Second, the proofs use two different pairings on bond operators: the bilinear trace pairing in the algebraic single-block argument, and the sesquilinear Hilbert-Schmidt inner product used to define transfer-operator adjoints in the spectral arguments. The distinction between the block canonical forms CF and biCF is shown in \cref{fig:cascade}: CF has normal blocks, while biCF is obtained only after a further blocking makes each block injective.

% fig_gauge_innerproduct.tex  --  canonical forms and pairings.
% Requires: \usepackage{tikz}; \usetikzlibrary{arrows.meta, positioning, calc, fit, backgrounds}.
\begin{figure*}[t]
    \centering
    \begin{tikzpicture}[
            font=\scriptsize,
            >={Stealth[length=3pt]},
            % TN conventions as in Figs. 2-3: bonds 0.5pt, physical legs 0.7pt
            bond/.style ={line width=0.5pt},
            phys/.style ={line width=0.7pt},
            tnA/.style  ={draw, fill=black, circle, inner sep=1.6pt},
            tnAc/.style ={draw, fill=white, circle, inner sep=1.6pt},
            opf/.style  ={draw, fill=black!15, minimum size=0.34cm, inner sep=0pt},
            opc/.style  ={draw, fill=white, minimum size=0.34cm, inner sep=0pt},
            nlabel/.style={font=\scriptsize\bfseries, align=center},
            note/.style ={font=\scriptsize\itshape, text=black!70, align=center},
            x=1cm, y=1cm
        ]

        %% ===== Panel (a): GAUGE (TN) =====
        \node[nlabel, text width=3.6cm] at (1.4,5.4) {(a) gauge of $E_A$};

        \begin{scope}[shift={(0,4.0)}]
            \node[tnA] (A) at (0,0.28) {}; \node[tnAc] (Ac) at (0,-0.28) {};
            \draw[phys] (A)--(Ac);
            \draw[bond] (A.west)--(-0.55,0.28); \draw[bond] (Ac.west)--(-0.55,-0.28);
            \draw[bond] (A.east)--(0.4,0.28);   \draw[bond] (Ac.east)--(0.4,-0.28);
            \draw[bond] (0.4,0.28) to[out=0,in=0] (0.4,-0.28);
            \node at (1.15,0) {$=$};
            \draw[bond] (1.5,0.28)--(2.0,0.28); \draw[bond] (1.5,-0.28)--(2.0,-0.28);
            \draw[bond] (2.0,0.28) to[out=0,in=0] (2.0,-0.28);
        \end{scope}
        \node[nlabel] at (1.1,3.35) {right-canonical (unital)};

        \begin{scope}[shift={(0,2.0)}]
            \node[tnA] (B) at (0,0.28) {}; \node[tnAc] (Bc) at (0,-0.28) {};
            \draw[phys] (B)--(Bc);
            \draw[bond] (B.east)--(0.55,0.28); \draw[bond] (Bc.east)--(0.55,-0.28);
            \draw[bond] (B.west)--(-0.4,0.28); \draw[bond] (Bc.west)--(-0.4,-0.28);
            \draw[bond] (-0.4,0.28) to[out=180,in=180] (-0.4,-0.28);
            % '=' at 1.05 (at 1.15 it nearly touched the left-bulging identity arc)
            \node at (1.05,0) {$=$};
            \draw[bond] (1.5,0.28)--(2.0,0.28); \draw[bond] (1.5,-0.28)--(2.0,-0.28);
            \draw[bond] (1.5,0.28) to[out=180,in=180] (1.5,-0.28);
        \end{scope}
        \node[nlabel] at (1.1,1.35) {left-canonical (trace-preserving)};
        \node[note, text width=3.6cm] at (1.1,0.5) {filled $=A$, open $=\bar A$, arc $=$ identity};

        \draw[draw=black!35, line width=0.4pt] (3.9,-0.1) -- (3.9,5.6);

        % Resolved (\SL{cut panel b}): the old panel (b), the CF/biCF block
        % canonical forms, was removed as fully covered by the cascade figure
        % after the notation sync (same squares, shades, and direct sums
        % there); the caption now points to \cref{fig:cascade} instead.

        %% ===== Panel (b): two pairings (TN loops) =====
        \node[nlabel, text width=3.8cm] at (6.0,5.4) {(b) two pairings on $M_D(\C)$};

        \begin{scope}[shift={(6.0,3.95)}]
            \node[opf] (M) at (-0.65,0) {$X$}; \node[opf] (N) at (0.65,0) {$Y$};
            \draw[bond] (M.east) -- (N.west);
            \draw[bond] (N.east) to[out=50,in=130,looseness=1.6] (M.west);
        \end{scope}
        \node[align=center, text width=3.8cm] at (6.0,3.0)
            {$\mathrm{Tr}(XY)$: \textbf{bilinear}\\ (algebraic single-block FT)};

        \begin{scope}[shift={(6.0,1.65)}]
            \node[opc] (X) at (-0.65,0) {$\bar X$}; \node[opf] (Y) at (0.65,0) {$Y$};
            \draw[bond] (X.north) to[out=90,in=90] (Y.north);
            \draw[bond] (X.south) to[out=-90,in=-90] (Y.south);
        \end{scope}
        \node[align=center, text width=4.0cm] at (6.0,0.6)
            {$\langle X,Y\rangle=\mathrm{Tr}(Y X^\dagger)$: \textbf{sesquilinear}\\ (blocking / spectral; open $=\bar X$)};

    \end{tikzpicture}
    \caption{Two notions that the terms ``canonical form'' and ``inner
        product'' overload in this development: (a) The gauge of the transfer
        operator $E_A(X)=\sum_i A^i X {(A^i)}^\dagger$, as tensor-network
        identities (filled $=A$, open $=\bar A$, an arc $=$ the identity):
        closing the right bonds gives the identity on the left, the unital
        right-canonical gauge $\sum_i A^i {(A^i)}^\dagger=I$; closing the left bonds
        gives the identity on the right, the trace-preserving left-canonical
        gauge $\sum_i {(A^i)}^\dagger A^i=I$. A gauge realizes one or the other;
        both at once, the doubly stochastic gauge discussed in the main text, holds only for a non-generic family.
        (b) The two
        pairings on bond operators the proofs use: the bilinear trace pairing
        $\mathrm{Tr}(XY)$, on which the algebraic single-block argument rests, and
        the sesquilinear Hilbert--Schmidt inner product
        $\langle X,Y\rangle=\mathrm{Tr}(Y X^\dagger)$, which defines the
        transfer-operator adjoint used in the blocking and spectral arguments.}\label{fig:gauge_innerproduct}
\end{figure*}

Even the classical Perron-Frobenius theorem for non-negative matrices is not available in Mathlib: the version pinned by the project contains only the definitions of irreducible and primitive matrices, and the theorem itself exists, at the time of writing, only in proposed contributions under review.
For quantum Perron-Frobenius theory, the literature invokes the Perron-Frobenius theorem for positive maps~\cite{Wolf2012Quantum}. The formal proof instead builds the needed consequences directly. For an irreducible CP map $E$ on $M_D(\C)$, a positive semidefinite Perron eigenvector is obtained by a Brouwer fixed-point argument on the compact convex set of density matrices; the Brouwer step was assumed as an axiom in the project's first weeks and later proved by transporting an external Lean formalization of the simplex case along an explicit retraction.
For trace-preserving maps, the Ces\`aro average
\begin{equation}
    S_N(\rho)=\frac{1}{N}\sum_{k=0}^{N-1}E^k(\rho)\label{eq: cesaro}
\end{equation}
gives a fixed point by sequential compactness alone, with no fixed-point theorem. Irreducibility upgrades the resulting eigenvector to a positive-definite one, and uniqueness follows from a critical-scalar comparison combined with the dual trace. This direct chain replaces every later invocation of the Perron-Frobenius theorem in the formalization.

\subsection{Formal correctness vs.\ mathematical intent}

As discussed in the main text, the system can produce formally correct proofs of statements that are weaker or more special than the intended theorem. Lean checks the theorem as stated; by type checking alone, it does not verify that the statement matches the result intended from the literature. This occurred when statements assumed structural data that the intended theorem should derive, when definitions were too restrictive and excluded intended cases, or when intermediate lemmas were proved under assumptions not available in the target theorem. In each case the proof was formally valid, but the mathematical statement had drifted from the intended one. The human review loop was designed to catch this drift by checking the blueprint for mathematical meaning, not only for formal consistency.

\subsection{Paper-to-Lean expansion}\label{sec:expansion}

A persistent feature of the formalization is that short arguments in the paper expand into longer chains of intermediate Lean lemmas. The expansion is uneven: algebraic arguments usually grow by a small factor, while spectral arguments requiring new supporting lemmas can grow by an order of magnitude.

Several representative examples illustrate the scaling. The step proving that the gauge matrix $X$ in the FT-MPS (cf. Theorem~1 in the main text) is invertible expands to more than $200$ lines of Lean. A paper invocation of Perron-Frobenius expands to more than $150$ lines, covering existence, positive-definiteness, and uniqueness of the relevant fixed point. The block-separation argument expands to more than $800$ lines, because the formal proof must compare finite-length MPVs, establish the needed linear independence, and control the scalar weights attached to repeated blocks. An algebraic generation step based on Burnside's theorem expands to more than $400$ lines, since the proof passes through irreducibility of the natural action and the finite-dimensional generation of the full matrix algebra.

The contrast is smaller for purely algebraic parts. The single-block fundamental theorem has a ten-line proof sketch in the blueprint and a $70$-line Lean file. The Skolem-Noether theorem has a four-line blueprint proof and a $40$-line Lean proof, namely \texttt{skolemNoether\_matrix}. At chapter level, the comparison is roughly $240$ lines of blueprint mathematics against roughly $650$ lines of Lean.

\subsection{Mathlib gaps and workarounds}\label{sec:mathlib_gaps}

The formalization encountered five gaps in the Mathlib library, each requiring a custom replacement (\cref{tab:mathlib_gaps}).

\begin{table*}[h]
    \caption{\label{tab:mathlib_gaps}Mathlib gaps encountered and their replacements.}
    \begin{tabular}{ll}
        \toprule
        {Missing from Mathlib}                        & {Replacement used}                                                        \\
        \midrule
        Jordan normal form                            & Generalized eigenspaces and an invertible/nilpotent decomposition         \\
        Burnside's theorem for matrix algebras        & Jacobson density theorem plus finite-dimensional span arguments           \\
        Kadison--Schwarz inequality                   & Direct proof for completely positive maps, following Wolf's notes         \\
        Perron-Frobenius for completely positive maps & Direct fixed-point and irreducibility arguments                           \\
        Brouwer's fixed-point theorem                 & Brouwer for a simplex (external library), transferred to density matrices \\
        \bottomrule
    \end{tabular}
\end{table*}

Jordan normal form is not available in Mathlib. The standard proof of the Wielandt bound, as well as several decay estimates in the literature, uses Jordan blocks to control the growth of matrix powers. The formalization replaces this step with generalized eigenspaces, which are available in Mathlib, and proves the needed decomposition into invertible and nilpotent parts directly. The nilpotent estimates then reduce to elementary finite-dimensional arguments, such as the finite geometric-series inverse for $1+N$ when $N$ is nilpotent, which Mathlib provides.

Burnside's theorem for matrix algebras is also absent: Mathlib does not provide the statement that every irreducible subalgebra of $M_D(\C)$ is the full algebra. The formalization derives it from the Jacobson density theorem, passing through irreducibility of the natural module, its simplicity, density of the algebra action, and finally stabilization of the finite-dimensional word spans. The fundamental theorem itself does not need this result. It arises in the quantum Wielandt inequality paper of Sanz et al.~\cite{Sanz2010Quantum}, which the project formalizes as well: there it shows that irreducibly acting aperiodic tensors are normal, connecting the algebraic definition of normality to the spectral one common in the literature. The results reported in this paper focus on what the fundamental theorem requires.

The Kadison--Schwarz inequality is likewise absent. The formalization proves it for completely positive maps, together with its equality case and the multiplicative domain, following Wolf's notes~\cite{Wolf2012Quantum}. These results enter the fundamental theorem directly: equality in a weighted Kadison--Schwarz inequality yields the intertwining relations between gauged Kraus operators that establish the transfer-operator gap separating distinct blocks, and the multiplicative domain underlies the decomposition of the virtual space into minimal invariant subspaces during the canonical-form reduction.

Perron-Frobenius theory for completely positive maps is not available in the form used by the MPS literature. The formalization therefore proves the required quantum Perron-Frobenius consequences internally. In the trace-preserving case, the Ces\`aro averages~\eqref{eq: cesaro} produce a fixed point by compactness, avoiding any fixed-point theorem; for a general positive map, which need not preserve the trace, the fixed-point step uses Brouwer's theorem on density matrices, described below, and this is how an irreducible tensor is brought into trace-preserving gauge. Irreducibility then supplies the two Perron-Frobenius consequences needed later: that the fixed point is positive-definite, and that it is unique after normalization. This direct argument replaces later invocations of Perron-Frobenius theory in the formalization.

Finally, Brouwer's fixed-point theorem itself was not available in Mathlib. The project imports an external Lean formalization (forked from \url{https://github.com/math-xmum/Brouwer} to \url{https://github.com/LionSR/Brouwer}), which proves the theorem for a standard simplex, and for finite products of simplices, via Scarf's combinatorial lemma. From the product case the formalization derives the theorem for a cube, since a cube is a product of intervals and each interval is homeomorphic to a one-dimensional simplex, and in turn for any compact retract of a finite-dimensional normed space. The density-matrix case then follows from an explicit continuous retraction of the space of all matrices onto the density matrices, built from the Hermitian part, a trace-fixing shift, the positive part, and normalization. The Ces\`aro construction above avoids Brouwer for trace-preserving maps; Brouwer remains in use for the Perron-Frobenius eigenvector of a general positive map, which need not preserve the trace.

% !TEX root = Draft3SM.tex
%======================================================================
% Supplementary Material: computational cost of the formalization.
% Numbers are a lower bound from the local interactive logs
% (generated by the texra-cost-toolkit). All figures are constants defined in
% preamble.tex (\nCost, \nCacheRate, ...). Figures live in figures/cost/ and are
% included at \columnwidth so they cannot overflow; tables from cost_tables.tex.
%======================================================================
\section{Computational cost}\label{app:cost}

The interactive agent runtime recorded a model-API cost of \$\nCost{}, merging
the logs from the two machines used during the project and spanning from
6~February to 29~June 2026. Figure~\ref{fig:cumulative_spend} traces the running
total, in dollars and in tokens. This estimate is approximate: it covers all
model usage in the project workspaces, that is, the full TNLean library rather
than the fundamental theorem alone, and the cost of the headline theorem (in
this case the FT-MPS) cannot be cleanly separated from that of the surrounding
development. It is best read as the cost of building the whole library. A rough
figure for the fundamental theorem alone follows by scaling with code size: the
core chain is about \nLOC{} lines, roughly a quarter of the library, giving an
estimate near \$\nCostFT{}.

\begin{figure}[htbp]
    \centering
    \includegraphics[width=0.49\textwidth]{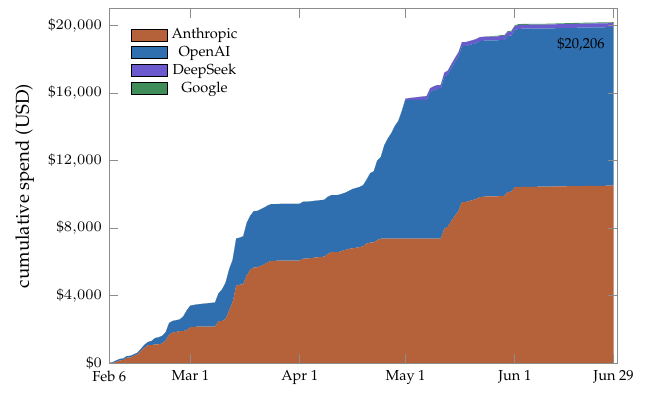}\hfill
    \includegraphics[width=0.49\textwidth]{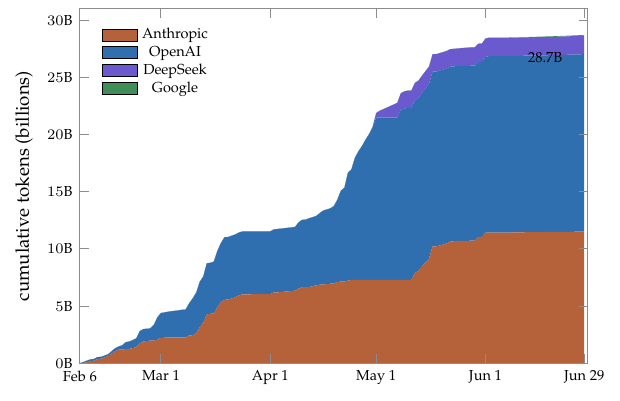}
    \caption{Cumulative model-API cost over the development, merging the logs
        from both machines, in dollars (left) and tokens (right). The dollar curve's
        endpoint matches the total quoted in the text, \$\nCost{}; OpenAI and
        DeepSeek account for a larger share of tokens than of cost, since their
        per-token price is lower.}\label{fig:cumulative_spend}
\end{figure}

The cumulative curve has a stepped rather than uniform form, reflecting periods
in which the agents were closing long proof chains or repairing large
dependency breaks. The next breakdown separates the total by agent role.

\begin{figure}[htbp]
    \centering
    \includegraphics[width=0.49\textwidth]{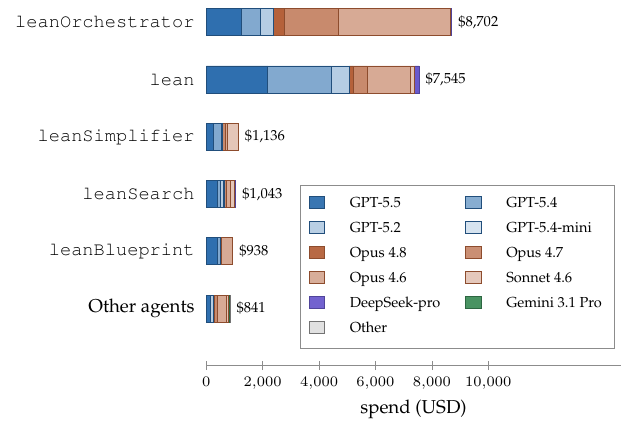}\hfill
    \includegraphics[width=0.49\textwidth]{figures/costs/fig_role_model_tokens.pdf}
    \caption{Cost by agent role, each bar broken down by the language models that
        role used, in dollars (left) and tokens (right). Proof writing and
        orchestration dominate; each provider is shown in one color family.}\label{fig:role_model}
\end{figure}

Proof writing and orchestration together absorbed four fifths of the cost,
consistent with assigning the most capable models to the tasks where an error is
most expensive (\cref{sec:model_selection}). The two roles had different cost
profiles: proof-writing calls had lower average cost but were numerous, whereas
orchestration calls were more expensive because each carried a large context.
Figure~\ref{fig:role_model} breaks each role's cost down by the models it used.
The time-resolved view in \cref{fig:model_gantt} shows when each model was in
use over the project; one panel of each also appears in the End Matter of the
Letter. Each cost figure pairs the dollar view with the same breakdown by
tokens, and the two views do not always agree, since price per token varies
across providers.

\begin{figure}[htbp]
    \centering
    \includegraphics[width=0.49\textwidth]{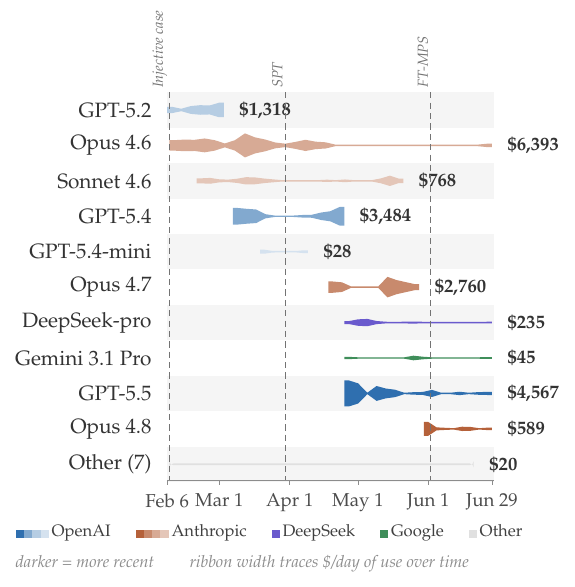}\hfill
    \includegraphics[width=0.49\textwidth]{figures/costs/fig_model_gantt_tokens.pdf}
    \caption{Each model's period of use, sorted by first use and merging both
        machines; one color family per provider, with shade distinguishing models
        within a family. Each ribbon's width traces the intensity of use along its
        own timeline (left: dollars per day; right: tokens per day), on one scale
        shared across every model. Models with less than \$10 of use each are
        grouped into a single ``Other'' row. The dashed vertical
        lines mark the dates the injective case of FT-MPS, the SPT cohomological
        invariant, and the FT-MPS proof chains reported in this work became
        fully verified.}\label{fig:model_gantt}
\end{figure}

The work was split among four providers, with Anthropic and OpenAI together
accounting for 99\% of the metered cost. A portion of the OpenAI work ran
through a code-generation agent on a fixed-price ChatGPT subscription, billed at
\$0 per token but equivalent to about \$\nCodexEq{} at metered API rates (it ran
on GPT-5.4). The same concentration is visible when the data are grouped by
model rather than by role.

\begin{figure}[htbp]
    \centering
    \includegraphics[width=0.49\textwidth]{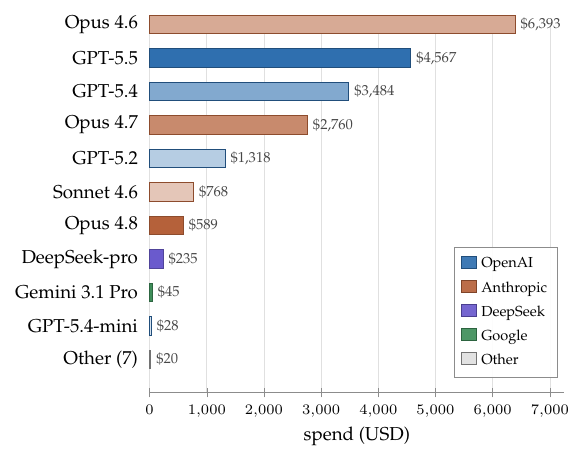}\hfill
    \includegraphics[width=0.49\textwidth]{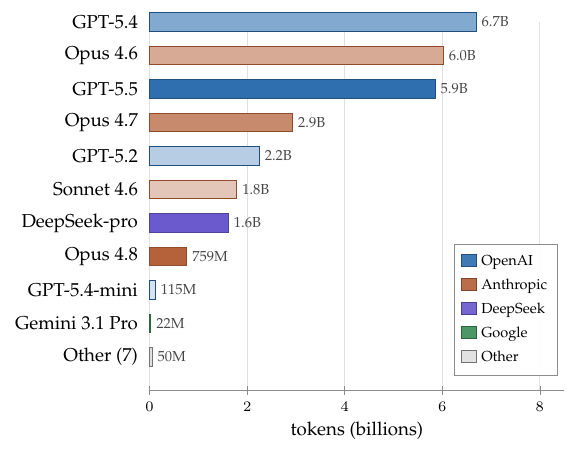}
    \caption{Per-model cost over the development, combined across both machines,
        in dollars (left) and tokens (right). The system leaned on a few
        high-capability models for proof writing and orchestration; the less
        expensive models handled high-volume routine work.}\label{fig:model_spend}
\end{figure}

Tables~\ref{tab:cost_model} and~\ref{tab:cost_role} give the numerical
breakdowns by model and role. They should be read with the same caveat as the
figures: the logs cover the interactive runtime for the whole TNLean
development, not only the final FT-MPS proof chain.

% Cost tables - auto-generated (gen_tables.py). Dollars are macros from cost_macros.tex.

\begin{table}[htbp]
    \caption{\label{tab:cost_model}Recorded interactive model-API cost by language model (total $\sim$\$\nCost{} across the development; see text). Input is dominated by cached prompt tokens; OpenAI Codex ran on a ChatGPT subscription, billed at \$0 per token (parenthesized: its equivalent metered cost on GPT-5.4 at the \texttt{llm-zoo} rate \$2.5/\$15 per 1M tokens).}
    \begin{ruledtabular}\begin{tabular}{llrrr}
            Model             & Provider  & Input & Output & Cost (share)                           \\ \colrule
            Claude Opus 4.6   & Anthropic & 6.0B  & 24M    & \$\costMdlClaudeOpusFourSix{} (32\%)   \\
            GPT-5.5           & OpenAI    & 5.8B  & 11M    & \$\costMdlGPTFiveFive{} (23\%)         \\
            GPT-5.4           & OpenAI    & 6.7B  & 21M    & \$\costMdlGPTFiveFour{} (17\%)         \\
            Claude Opus 4.7   & Anthropic & 2.9B  & 9M     & \$\costMdlClaudeOpusFourSeven{} (14\%) \\
            GPT-5.2           & OpenAI    & 2.2B  & 21M    & \$\costMdlGPTFiveTwo{} (7\%)           \\
            Claude Sonnet 4.6 & Anthropic & 1.8B  & 8M     & \$\costMdlClaudeSonnetFourSix{} (4\%)  \\
            Claude Opus 4.8   & Anthropic & 755M  & 4M     & \$\costMdlClaudeOpusFourEight{} (3\%)  \\
            DeepSeek-pro      & DeepSeek  & 1.6B  & 6M     & \$\costMdlDeepSeekpro{} (1\%)          \\
            Gemini 3.1 Pro    & Google    & 22M   & 74k    & \$\costMdlGeminiThreeOnePro{} (0\%)    \\
            GPT-5.4-mini      & OpenAI    & 115M  & 580k   & \$\costMdlGPTFiveFourmini{} (0\%)      \\
            Other models (7)  & ---       & 50M   & 353k   & \$\costMdlOther{} (0\%)                \\
            OpenAI Codex      & OpenAI    & 617M  & 3M     & \$0 ($\approx$\$\nCodexEq{} eq.)       \\
            \colrule
            Total             &           & 28.6B & 108M   & \$\nCost{} (100\%)                     \\
        \end{tabular}
    \end{ruledtabular}
\end{table}

\begin{table}[htbp]
    \caption{\label{tab:cost_role}Recorded cost by agent role. Proof writing and orchestration absorb four-fifths of the cost; orchestration is costly per call because each session carries a large context. Workflow agents run single-pass transformations and make no tool calls, so they are shown separately from the tool-using roles.}
    \begin{ruledtabular}\begin{tabular}{lrrr}
            Role             & Calls   & Cost (share)                     & \$/call                           \\ \colrule
            Orchestrator     & 219     & \$\costRoleOrchestrator{} (43\%) & \$\costRoleOrchestratorPerCall{}  \\
            Proof writer     & 967     & \$\costRoleProofwriter{} (37\%)  & \$\costRoleProofwriterPerCall{}   \\
            Simplifier       & 218     & \$\costRoleSimplifier{} (6\%)    & \$\costRoleSimplifierPerCall{}    \\
            Library scout    & 269     & \$\costRoleLibraryscout{} (5\%)  & \$\costRoleLibraryscoutPerCall{}  \\
            Blueprint sync   & 95      & \$\costRoleBlueprintsync{} (5\%) & \$\costRoleBlueprintsyncPerCall{} \\
            Other tool-using & 486     & \$\costRoleOtherTU{} (3\%)       & \$\costRoleOtherTUPerCall{}       \\
            Workflow agents  & 65      & \$\costRoleWorkflow{} (1\%)      & \$\costRoleWorkflowPerCall{}      \\
            \colrule
            Total            & 2{,}319 & \$\nCost{} (100\%)               &                                   \\
        \end{tabular}
    \end{ruledtabular}
\end{table}

\FloatBarrier

Normalized to output, the cost is about \$\nCostPerKLOC{} per thousand lines of
Lean over the full library. From mid-March onward, separate automated review
agents examined each change proposed to the shared repository and applied fixes;
these agents ran outside the interactive runtime, were billed separately, and
are not included in the \$\nCost{}.

\clearpage
%======================================================================
\section{Single-block fundamental theorem: excerpt from the blueprint}\label{app:ftsb_walkthrough}

The purpose of this appendix is to record a proof route that differs from the standard MPS literature. The formalized proof of the single-block theorem extends the assignment $A^i\mapsto B^i$ to an algebra automorphism of $M_D(\C)$ and then applies the Skolem-Noether theorem. This route was selected during the autonomous formalization process because the relevant ring-theoretic infrastructure was closer to Mathlib than the canonical-form and spectral arguments used in the usual MPS proof. We reproduce the argument here in the lightweight statement-and-proof style of the blueprint (\leanid{MPSTensor.fundamentalTheorem_singleBlock}, blueprint Theorem~\texttt{thm:ft\_single}, ch.\,3; a $70$-line Lean file, \texttt{MPS/FundamentalTheorem/Basic.lean}, resting on some $590$ lines of supporting algebra). Minor editings have been made by us to match the writing style of the paper.\\

\paragraph*{Setup.} Throughout this subsection $A = {(A^i)}_{i=0}^{d-1}$ and $B = {(B^i)}_{i=0}^{d-1}$ are MPS tensors of common bond dimension $D \ge 1$ (the formalized theorem also handles the ``degenerate case'' $D = 0$ that is allowed by Lean), so each $A^i, B^i \in M_D(\C)$. We assume $A$ is \emph{injective} in the MPS sense, which we take to mean that the $d$ matrices $\{A^0, A^1, \ldots, A^{d-1}\}$ span the full algebra $M_D(\C)$ as a $\C$-vector space. (Necessarily $d \ge D^2$; we make no other assumption on $d$.) We assume $A$ and $B$ generate the same MPV family, so the cyclic-trace identities
\begin{equation}\label{eq:trace_id}
    \Tr\left(A^{i_1} A^{i_2} \cdots A^{i_L}\right) = \Tr\left(B^{i_1} B^{i_2} \cdots B^{i_L}\right)
\end{equation}
hold for every $L \ge 1$ and every word ${(i_1, \ldots, i_L)} \in {\{0, \ldots, d-1\}}^L$. The conclusion to be proved is the existence of an invertible $X \in \mathrm{GL}_D(\C)$ with $B^i = X A^i X^{-1}$ for all $i$.

We will use one structural fact about $M_D(\C)$ throughout: the trace pairing $(M,N)\longmapsto \Tr(MN)$ is non-degenerate. In other words, if $\Tr(MN)=0$ for every $N\in M_D(\C)$, then $M=0$. Indeed, let $E_{ji}$ be the matrix unit with a single nonzero entry in position $(j,i)$. Then $\Tr(ME_{ji})=M_{ij}$ for all $i,j$, and hence $M=0$.

\begin{applem}[The $\{B^i\}$ also span $M_D(\C)$]\label{applem:b_spans}
    Under the setup above, the $d$ matrices $\{B^0, B^1, \ldots, B^{d-1}\}$ also span $M_D(\C)$ as a $\C$-vector space.
\end{applem}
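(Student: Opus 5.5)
The approach is a rank argument on the Gram matrices of the trace pairing. Work with the $d\times d$ matrices $G^A_{ij}=\Tr(A^iA^j)$ and $G^B_{ij}=\Tr(B^iB^j)$. The trace identities \eqref{eq:trace_id} with $L=2$ give $G^A=G^B$ entrywise. The plan is to show that $\operatorname{rank} G^A = D^2$ and $\operatorname{rank} G^B\le \dim\Span\{B^i\}$. Together these give $\dim\Span\{B^i\}\ge D^2$, so the $B^i$ span $M_D(\C)$.

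For the upper bound, consider the linear map $\beta:\C^d\to M_D(\C)$, $c\mapsto\sum_i c_iB^i$. Its image is $\Span\{B^i\}$. Then $(G^B c)_j=\Tr(\beta(c)B^j)$, so $G^B$ factors through $\beta$ and has rank at most $\dim\operatorname{im}\beta$.

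For $\operatorname{rank} G^A=D^2$, define $\alpha$ analogously for $A$; it is surjective by injectivity. Suppose $G^Ac=0$. Then $\Tr(\alpha(c)A^j)=0$ for all $j$. By surjectivity of $\alpha$ and linearity, $\Tr(\alpha(c)N)=0$ for all $N\in M_D(\C)$. Non-degeneracy of the trace pairing, as recorded in the setup, then gives $\alpha(c)=0$. Hence $\ker G^A\subseteq\ker\alpha$. Conversely, $\ker\alpha\subseteq\ker G^A$ trivially. So $\ker G^A=\ker\alpha$, whose dimension is $d-D^2$, and therefore $\operatorname{rank}G^A=D^2$.

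An alternative route avoids ranks: suppose some nonzero $M$ satisfies $\Tr(MB^j)=0$ for all $j$, and transport $M$ to the $A$ side. That requires choosing preimages, however, so the kernel comparison above is cleaner. In Lean it is a statement that $\ker G = \ker \alpha$ combined with rank–nullity.

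The main obstacle is the bookkeeping of the rank comparison, not the mathematics: relating the rank of the Gram matrix to the dimension of a span. To keep this simple I would phrase everything via kernels. From $\ker\beta\subseteq\ker G^B=\ker G^A=\ker\alpha$ we get $\dim\operatorname{im}\beta\ge\dim\operatorname{im}\alpha=D^2$ by rank–nullity on $\C^d$. This is the same inclusion argument in a form that needs only that $\beta(c)=0$ implies $G^Bc=0$. Only the $L=2$ case of the hypothesis is used.
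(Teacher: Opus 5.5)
Your proof is correct and is essentially the paper's argument. The paper also compares the length-two Gram data $G^A=\Phi_A\circ\ell_A=\Phi_B\circ\ell_B$, and gets rank $D^2$ from surjectivity of $\ell_A$ and injectivity of $\Phi_A$; the one difference is that it bounds this rank by $\operatorname{rank}\Phi_B$, deduces $\Phi_B$ injective, and then needs non-degeneracy of the trace pairing a second time, whereas you bound it by $\operatorname{rank}\beta$ and conclude $\operatorname{im}\beta=M_D(\C)$ directly, which saves that final step.
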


\begin{proof}
    Consider the linear maps $\Phi_A, \Phi_B : M_D(\C) \to \C^d$ given by
    \begin{align}
        {\Phi_A(M)}_i := \Tr(M A^i), \quad {\Phi_B(M)}_i := \Tr(M B^i),
    \end{align}
    and also the linear maps $ \ell_A, \ell_B : \C^d \to M_D(\C)$ given by
    \begin{align}
        \ell_A(c) := \textstyle\sum_i c_i A^i, \quad \ell_B(c) := \textstyle\sum_i c_i B^i\,.
    \end{align}
    By the assumption that the $\{A^i\}$ span $M_D(\C)$, the map $\ell_A$ is surjective. Also, $\Phi_A$ is injective: if $\Phi_A(M) = 0$ then $\Tr(M A^i) = 0$ for every $i$; expanding any $N \in M_D(\C)$ as $N = \sum_i c_i A^i$ and using linearity gives $\Tr(M N) = 0$ for every $N$, so $M = 0$ by non-degeneracy of the trace pairing. In particular $\dim \operatorname{range}(\Phi_A) = D^2$.

    The trace identity \cref{eq:trace_id} at length $2$ gives $\Tr(A^i A^j) = \Tr(B^i B^j)$ for every pair $(i, j)$, which by the definitions of $\Phi_A$ and $\Phi_B$ gives ${\Phi_A(A^i)}_j = {\Phi_B(B^i)}_j$, i.e., $\Phi_A \circ \ell_A = \Phi_B \circ \ell_B$ as maps on $\C^d$. Since $\ell_A$ is surjective,
    \[
        \begin{aligned}
            \operatorname{range}(\Phi_A)
             & = \operatorname{range}(\Phi_A \circ \ell_A) = \operatorname{range}(\Phi_B \circ \ell_B) \subseteq \operatorname{range}(\Phi_B),
        \end{aligned}
    \]
    so $\dim \operatorname{range}(\Phi_B) \ge D^2$. The domain of $\Phi_B$ has dimension $D^2$, so this forces $\dim \operatorname{range}(\Phi_B) = D^2$ and $\Phi_B$ is also injective.

    To conclude that the $\{B^i\}$ span $M_D(\C)$: if they did not, some nonzero $M$ would annihilate $\Span\{B^i\}$ under the non-degenerate trace pairing, giving $\Tr(M B^i) = 0$ for every $i$, i.e., $\Phi_B(M) = 0$ with $M \ne 0$, contradicting the injectivity of $\Phi_B$ proved above.
\end{proof}

\begin{applem}[Linear extension]\label{applem:lin_ext}
    There exists a unique $\C$-linear map $T : M_D(\C) \to M_D(\C)$ such that $T(A^i) = B^i$ for every $i$.
\end{applem}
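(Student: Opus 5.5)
Uniqueness is immediate: since the $\{A^i\}$ span $M_D(\C)$, any two linear maps agreeing on every $A^i$ agree on all of $M_D(\C)$. Existence is the real content. The $A^i$ need not be linearly independent (in general $d \ge D^2$), so the naive assignment $A^i \mapsto B^i$ is well defined only if every linear relation among the $A^i$ also holds among the $B^i$. In the notation of the proof of \cref{applem:b_spans}, this is the inclusion $\ker \ell_A \subseteq \ker \ell_B$.

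To prove this inclusion, take $c \in \C^d$ with $\sum_i c_i A^i = 0$ and set $M := \sum_i c_i B^i$. We show $M = 0$ using the trace pairing. For each $j$, the length-$2$ case of \cref{eq:trace_id} gives
\begin{equation*}
\Tr(M B^j) = \sum_i c_i \Tr(B^i B^j) = \sum_i c_i \Tr(A^i A^j) = \Tr\Bigl(\bigl(\textstyle\sum_i c_i A^i\bigr) A^j\Bigr) = 0 .
\end{equation*}
Thus $\Phi_B(M) = 0$. The proof of \cref{applem:b_spans} showed that $\Phi_B$ is injective, so $M = 0$. An equivalent formulation uses the identity $\Phi_A \circ \ell_A = \Phi_B \circ \ell_B$ established there: if $\ell_A(c) = 0$, then $\Phi_B(\ell_B(c)) = 0$, and injectivity of $\Phi_B$ forces $\ell_B(c) = 0$.

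With $\ker \ell_A \subseteq \ker \ell_B$ in hand, $\ell_B$ factors through the surjection $\ell_A$. Concretely, we define $T(N) := \ell_B(c)$ for any $c$ with $\ell_A(c) = N$; such a $c$ exists because $\ell_A$ is surjective. The value does not depend on the choice of $c$, because two choices differ by an element of $\ker \ell_A \subseteq \ker \ell_B$. Linearity of $T$ follows from linearity of $\ell_A$ and $\ell_B$. Taking $c = e_i$ gives $T(A^i) = B^i$.

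In a formal setting, the cleanest route is probably to pick a linear right inverse $s$ of $\ell_A$ and set $T := \ell_B \circ s$. One then checks $T(A^i) = B^i$ using $e_i - s(A^i) \in \ker \ell_A$. Alternatively, Mathlib's quotient-lift constructions can be used to factor $\ell_B$ through $\ell_A$ directly.

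The only substantive step is the kernel inclusion. Its difficulty lies entirely in the injectivity of $\Phi_B$, which \cref{applem:b_spans} already provides. I therefore expect no real obstacle beyond stating that dependency cleanly.
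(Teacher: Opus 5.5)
Your proof is correct and follows the paper's argument: existence reduces to showing that $\sum_i c_i A^i = 0$ forces $\sum_i c_i B^i = 0$, which both you and the paper obtain by pairing $\sum_i c_i B^i$ against each $B^j$ using the length-$2$ trace identity, and uniqueness comes from the spanning of the $A^i$. The only cosmetic difference is where you take the last step from: you use the injectivity of $\Phi_B$ proved inside \cref{applem:b_spans}, while the paper uses that lemma's conclusion (the $B^j$ span $M_D(\C)$) together with non-degeneracy of the trace pairing, and these are the same fact.
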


\begin{proof}
    Since the $\{A^i\}$ span $M_D(\C)$, every $M \in M_D(\C)$ can be written as $M = \sum_i c_i A^i$ for some coefficients $c_i \in \C$. Define $T(M) := \sum_i c_i B^i$ and check well-definedness: if $\sum_i c_i A^i = 0$ we must show $\sum_i c_i B^i = 0$. For each $j$, summing the trace identity~\eqref{eq:trace_id} at length~$2$ against the coefficients $c_i$ gives
    \begin{align}
        \Tr\left((\textstyle\sum_i c_i B^i)\, B^j\right)
         &=  \sum_i c_i \Tr(B^i B^j)
        \stackrel{\eqref{eq:trace_id}}{=}  \sum_i c_i \Tr(A^i A^j)
        =  \Tr\left((\textstyle\sum_i c_i A^i)\, A^j\right) =  0.
    \end{align}
    By \cref{applem:b_spans} the $\{B^j\}$ span $M_D(\C)$, so $\Tr((\sum_i c_i B^i)\, N) = 0$ for every $N \in M_D(\C)$, and non-degeneracy of the trace pairing forces $\sum_i c_i B^i = 0$. Uniqueness follows because the $\{A^i\}$ span $M_D(\C)$: any $\C$-linear map agreeing with $T$ on $\{A^i\}$ agrees with it on the whole spanning set, hence everywhere.
\end{proof}

\begin{applem}[Multiplicativity]\label{applem:mult}
    The map $T$ of \cref{applem:lin_ext} satisfies $T(MN) = T(M)\,T(N)$ for all $M, N \in M_D(\C)$.
\end{applem}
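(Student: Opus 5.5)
By bilinearity of both sides in $(M,N)$ and the fact that the $\{A^i\}$ span $M_D(\C)$, it suffices to prove $T(A^iA^j)=B^iB^j$ for all pairs $(i,j)$. Indeed, writing $M=\sum_i a_iA^i$ and $N=\sum_j b_jA^j$, linearity of $T$ gives $T(MN)=\sum_{i,j}a_ib_j\,T(A^iA^j)$. On the other side, $T(M)T(N)=\sum_{i,j}a_ib_j\,B^iB^j$. So the whole statement reduces to this identity on products of two generators.

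To prove $T(A^iA^j)=B^iB^j$, I will use the non-degeneracy of the trace pairing together with the fact, from \cref{applem:b_spans}, that the $\{B^k\}$ span $M_D(\C)$. It is enough to show
\[
  \Tr\bigl(T(A^iA^j)\,B^k\bigr)=\Tr\bigl(B^iB^jB^k\bigr)\quad\text{for every }k.
\]
Expand $A^iA^j=\sum_l c_lA^l$, so that by definition $T(A^iA^j)=\sum_l c_lB^l$. The left side then equals
\[
  \sum_l c_l\Tr(B^lB^k)=\sum_l c_l\Tr(A^lA^k)=\Tr(A^iA^jA^k),
\]
where the middle step is the length-$2$ trace identity \eqref{eq:trace_id}. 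The length-$3$ instance of \eqref{eq:trace_id} turns the final expression into $\Tr(B^iB^jB^k)$. By linearity in $N$ this gives $\Tr\bigl((T(A^iA^j)-B^iB^j)N\bigr)=0$ for all $N$, and non-degeneracy forces $T(A^iA^j)=B^iB^j$.

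A slightly cleaner packaging, which I would prefer for formalization, first records a general identity: $\Tr(T(X)\,T(Y))=\Tr(XY)$ for all $X,Y$. This follows by bilinearity from the length-$2$ identity on generators. The key computation above then reads
\[
  \Tr\bigl(T(A^iA^j)\,T(A^k)\bigr)=\Tr(A^iA^jA^k)=\Tr\bigl(B^iB^jB^k\bigr)=\Tr\bigl(B^iB^j\,T(A^k)\bigr).
\]
Since $T$ is surjective (its image contains every $B^k$, and these span), this yields the claim.

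The main obstacle is bookkeeping rather than mathematics. One must keep the coefficient expansion of $A^iA^j$ separate from the spanning argument for $B$, and apply each trace identity only at its correct length, $2$ or $3$. No assumption beyond $L\le 3$ instances of \eqref{eq:trace_id} is needed.
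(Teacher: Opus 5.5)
Your proposal is correct and follows the paper's proof: it proves $T(A^iA^j)=B^iB^j$ by pairing against the spanning set $\{B^k\}$ with the length-2 and length-3 trace identities, and then extends to all $M,N$ by bilinearity of $T(MN)-T(M)T(N)$. Your alternative packaging via $\Tr(T(X)T(Y))=\Tr(XY)$ and surjectivity of $T$ is just a repackaging of the same computation.
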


\begin{proof}
    We first show $T(A^i A^j) = B^i B^j$ for every $i, j$. Since the $\{A^i\}$ span $M_D(\C)$, we may write $A^i A^j = \sum_\ell d_\ell A^\ell$ for some scalars $d_\ell = d_\ell(i, j) \in \C$; by linearity and the construction of $T$, we have $T(A^i A^j) = \sum_\ell d_\ell B^\ell$. Then, for each $k$,
    \begin{align}
        \Tr(T(A^i A^j)\, B^k)
         &=  \sum_\ell d_\ell \Tr(B^\ell B^k)                                \stackrel{\eqref{eq:trace_id}}{=}  \sum_\ell d_\ell \Tr(A^\ell A^k)
        =  \Tr(A^i A^j A^k)
        \stackrel{\eqref{eq:trace_id}}{=}  \Tr(B^i B^j B^k)\,,
    \end{align}
    where the middle equality reassembles $\sum_\ell d_\ell A^\ell = A^i A^j$. Hence $\Tr((T(A^i A^j) - B^i B^j)\, B^k) = 0$ for every $k$. By \cref{applem:b_spans} the $\{B^k\}$ span $M_D(\C)$, so non-degeneracy of the trace pairing gives $T(A^i A^j) = B^i B^j$; since $T(A^i) = B^i$ and $T(A^j) = B^j$ by construction, this is $T(A^i)\,T(A^j)$.

    For general $M,N\in M_D(\C)$, define
    \begin{align}
        F(M,N)\coloneqq T(MN)-T(M)T(N)\,.
    \end{align}
    The map $F$ is $\C$-bilinear. The first part of the proof shows that
    $F(A^i,A^j)=0$ for every $i,j$. Since the matrices $\{A^i\}$ span
    $M_D(\C)$, bilinearity implies $F(M,N)=0$ for all $M,N\in M_D(\C)$.
    Thus $T(MN)=T(M)T(N)$ for all $M,N$.
\end{proof}

\begin{applem}[Nonvanishing]\label{applem:nonzero}
    $T \ne 0$.
\end{applem}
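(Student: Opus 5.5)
The plan is to show directly that $T$ sends the identity (or some nonzero element) to a nonzero matrix. The cleanest route uses \cref{applem:b_spans}. Since $D\ge 1$, the algebra $M_D(\C)$ is nonzero, so the spanning set $\{B^0,\ldots,B^{d-1}\}$ cannot consist entirely of zero matrices. Hence some index $i$ has $B^i\neq 0$. By construction in \cref{applem:lin_ext}, $T(A^i)=B^i\neq 0$, so $T\neq 0$.

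Concretely, I would carry this out in three short steps.

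\emph{Step one: the identity lies in the span.} Argue by contradiction: if $B^i=0$ for all $i$, then $\Span\{B^i\}=\{0\}$. But \cref{applem:b_spans} says this span is all of $M_D(\C)$, which contains the identity $\openone_D\neq 0$ because $D\ge1$.

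\emph{Step two: pick a nonzero $B^i$.} This gives an index $i$ with $B^i\ne0$.

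\emph{Step three: evaluate $T$.} Evaluate $T$ at $A^i$ using the defining property $T(A^i)=B^i$.

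An alternative, more quantitative route uses the trace identity at length $1$ together with multiplicativity (\cref{applem:mult}). Write $\openone_D=\sum_i c_iA^i$. Multiplicativity gives $T(\openone_D)^2=T(\openone_D)$, so $T(\openone_D)$ is idempotent. The same trace computation as in \cref{applem:lin_ext} gives
\[
\Tr(T(\openone_D))=\sum_i c_i\Tr(B^i)=\sum_i c_i\Tr(A^i)=\Tr(\openone_D)=D\neq0 .
\]
Hence $T(\openone_D)\neq0$. I would keep this in reserve, since it also shows $T(\openone_D)=\openone_D$ later: an idempotent of trace $D$ in $M_D(\C)$ must be the identity.

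The only real subtlety is the edge case $D=0$, where $M_D(\C)=0$ and the statement $T\ne0$ is false. So the hypothesis $D\ge1$ from the Setup must be invoked explicitly; in Lean this is where the degenerate case must be excluded. Otherwise the argument is immediate.
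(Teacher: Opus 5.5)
Your proof is correct, but your main route differs from the paper's. The paper argues by contradiction using only the length-$1$ trace identities. If $T=0$ then every $B^i=0$, so $\Tr(A^i)=\Tr(B^i)=0$ for all $i$. Since the $A^i$ span $M_D(\C)$, the trace would then vanish on all of $M_D(\C)$, contradicting $\Tr(\openone_D)=D\neq 0$.

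You instead invoke \cref{applem:b_spans}: a spanning family of the nonzero space $M_D(\C)$ cannot consist only of zero matrices, so some $B^i=T(A^i)$ is nonzero. Since that lemma is already proved, your argument is a one-liner and needs only $M_D(\C)\neq 0$. The cost is a heavier dependency: nonvanishing now rests on the length-$2$ identities and the nondegeneracy and dimension-counting argument behind \cref{applem:b_spans}. The paper's proof avoids that lemma and uses only the length-$1$ data and the spanning of $\{A^i\}$.

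Your reserve argument, $\Tr(T(\openone_D))=D\neq 0$, is exactly the paper's proof in direct rather than contradiction form. Your extra remark is a genuine bonus. $T(\openone_D)$ is an idempotent of trace $D$, hence of rank $D$, hence invertible and equal to $\openone_D$. This gives unitality (\cref{applem:unit}) straight from multiplicativity, without first proving bijectivity (\cref{applem:bij}) as the paper does. Both routes invoke $D\ge 1$ at exactly one point, as you correctly flag.
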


\begin{proof}
    If $T = 0$ then $B^i = T(A^i) = 0$ for every $i$, and the trace identities \cref{eq:trace_id} at length $1$ would give $\Tr(A^i) = 0$ for every $i$. Since the $\{A^i\}$ span $M_D(\C)$, every $N \in M_D(\C)$ is a $\C$-linear combination of the $\{A^i\}$, so this would force $\Tr(N) = 0$ for every $N \in M_D(\C)$. But $\Tr(\openone_D) = D \ne 0$, a contradiction.
\end{proof}

\begin{applem}[Bijectivity]\label{applem:bij}
    $T$ is a bijection of $M_D(\C)$.
\end{applem}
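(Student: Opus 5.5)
Since $T$ is a linear endomorphism of the finite-dimensional space $M_D(\C)$, it suffices to show that $T$ is surjective (or injective); rank--nullity then gives bijectivity. The quickest route is surjectivity. By \cref{applem:b_spans}, the matrices $\{B^i\}$ span $M_D(\C)$. By \cref{applem:lin_ext}, each $B^i = T(A^i)$ lies in the range of $T$. The range is a subspace, so it contains $\Span\{B^i\} = M_D(\C)$, and $T$ is surjective. An endomorphism of a finite-dimensional vector space that is surjective is injective, so $T$ is bijective.

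As a cross-check, and to explain why \cref{applem:nonzero} appears just before this lemma, there is also a purely algebraic route that avoids \cref{applem:b_spans}. By \cref{applem:mult}, $T$ is a ring endomorphism of $M_D(\C)$, possibly non-unital. Its kernel $\ker T$ is then a two-sided ideal: if $T(M)=0$, then $T(NM)=T(N)T(M)=0$ and $T(MN)=T(M)T(N)=0$. For $D\ge 1$ the algebra $M_D(\C)$ is simple, so $\ker T$ is either $0$ or all of $M_D(\C)$. \cref{applem:nonzero} rules out the second case. Hence $T$ is injective, and bijective by dimension count.

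I would present the first argument, because it uses only linear algebra and lemmas already established. The only point needing care, and the nearest thing to an obstacle, is the edge case $D=0$: there $M_D(\C)$ is the zero space, so $T$ is trivially bijective. In the main setup we assume $D\ge1$ anyway. Unitality, $T(\openone_D)=\openone_D$, is not needed here. It follows afterward from bijectivity and multiplicativity, since $T(\openone_D)$ is an identity element for the full range $M_D(\C)$. That fact is what feeds Skolem--Noether in the next step.
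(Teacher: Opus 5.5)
Your proof is correct. Your primary argument takes a different route from the paper; your ``cross-check'' is exactly the paper's proof.

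The paper argues as follows. By multiplicativity (Lemma~\ref{applem:mult}), $\ker T$ is a two-sided ideal. Simplicity of $M_D(\C)$ together with $T\neq 0$ (Lemma~\ref{applem:nonzero}) then forces $\ker T=\{0\}$, and surjectivity follows from the dimension count. Your main argument runs the other way. The range of $T$ contains every $B^i=T(A^i)$, hence their span, which is all of $M_D(\C)$ by Lemma~\ref{applem:b_spans}. Injectivity then follows by rank--nullity.

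Within this chain your route is more economical. It needs only Lemmas~\ref{applem:b_spans} and~\ref{applem:lin_ext}, which were already required to construct $T$. It avoids multiplicativity and the simplicity of $M_D(\C)$ (the matrix-unit or Artin--Wedderburn argument). It also makes Lemma~\ref{applem:nonzero} superfluous, since that lemma is used only in the paper's version of this step. Finally, it covers $D=0$ uniformly. The paper's argument genuinely needs $D\ge 1$: there $T=0$, so the nonvanishing lemma fails, even though the conclusion is trivially true.

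What the paper's route offers is a structural fact that does not depend on the particular spanning data. Every nonzero multiplicative linear endomorphism of $M_D(\C)$ is injective, hence bijective. This is the standard step toward Skolem--Noether and would still apply if one knew only multiplicativity and $T\neq 0$.

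Your closing remark on unitality matches Lemma~\ref{applem:unit}, which likewise uses surjectivity to produce a preimage of $\openone_D$.
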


\begin{proof}
    By \cref{applem:mult} the map $T$ is multiplicative. The kernel of a multiplicative $\C$-linear map is closed under both left and right multiplication (if $T(M) = 0$ then $T(MN) = T(M)\,T(N) = 0$ and $T(NM) = T(N)\,T(M) = 0$), hence is a two-sided ideal of $M_D(\C)$. The matrix algebra $M_D(\C)$ is \emph{simple}, meaning its only two-sided ideals are $\{0\}$ and $M_D(\C)$ itself: this is the matrix-algebra case of the Artin-Wedderburn classification, and can be verified directly by showing that any nonzero ideal contains some matrix unit $E_{ij}$ and then, via left and right multiplication by other matrix units, all $E_{k\ell}$. The kernel of $T$ cannot be all of $M_D(\C)$ (\cref{applem:nonzero}), so it is $\{0\}$ and $T$ is injective. As a $\C$-linear injection between two finite-dimensional spaces of equal dimension, $T$ is also surjective.
\end{proof}

\begin{applem}[Unitality]\label{applem:unit}
    $T(\openone_D) = \openone_D$. Consequently $T$ is a unital $\C$-algebra automorphism of $M_D(\C)$.
\end{applem}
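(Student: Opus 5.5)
Since $T$ is already known to be a multiplicative linear bijection of $M_D(\C)$ (\cref{applem:mult,applem:bij}), the plan is to show that $T(\openone_D)$ is a two-sided identity element in $M_D(\C)$. Uniqueness of the identity then forces $T(\openone_D)=\openone_D$.

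Set $P\coloneqq T(\openone_D)$. By multiplicativity, for every $M$ we have
\begin{equation}
    P\,T(M)=T(\openone_D M)=T(M),\qquad T(M)\,P=T(M\openone_D)=T(M).
\end{equation}
By \cref{applem:bij}, $T$ is surjective, so every $N\in M_D(\C)$ has the form $N=T(M)$. Hence $PN=N=NP$ for all $N$. Taking $N=\openone_D$ gives $P=P\openone_D=\openone_D$.

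Linearity, multiplicativity, bijectivity and unitality together say that $T$ is a unital $\C$-algebra automorphism of $M_D(\C)$. The inverse map is automatically linear and multiplicative, so no separate check is needed for it.

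I expect no real obstacle here. The only point needing care is to use surjectivity, not just $T\neq 0$: a multiplicative map that is not surjective can send $\openone_D$ to a nontrivial idempotent. Surjectivity is exactly what \cref{applem:bij} provides, which is why this lemma comes after it.

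An alternative route would use $P^2=P$, so $P$ is idempotent, together with injectivity of $T$. That route is longer, so I would take the surjectivity argument above. In the formal setting I would package the map as a \texttt{RingHom}/\texttt{AlgEquiv}, so that it can feed directly into the Skolem--Noether step that follows.
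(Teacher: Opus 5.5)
Your proof is correct and follows essentially the same route as the paper's. Both use the surjectivity from the bijectivity lemma to write the identity as $T(Y)$, and then multiplicativity collapses $T(\openone_D)$ to $\openone_D$. The only difference is that the paper multiplies by $T(Y)$ on the left of $T(\openone_D)$ while you multiply on the right, which makes no difference.
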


\begin{proof}
    By \cref{applem:bij}, $T$ is surjective, so there exists some $Y \in M_D(\C)$ with $T(Y) = \openone_D$. Multiplicativity then gives
    \[
        T(\openone_D) =  T(Y) \cdot T(\openone_D) =  T(Y \cdot \openone_D) =  T(Y) =  \openone_D,
    \]
    where the first equality rewrites $T(\openone_D) = \openone_D \cdot T(\openone_D)$ and substitutes $\openone_D = T(Y)$, the second applies multiplicativity to factor $T(Y \cdot \openone_D) = T(Y) \cdot T(\openone_D)$, the third uses $Y \cdot \openone_D = Y$, and the fourth substitutes $T(Y) = \openone_D$. Together with linearity and multiplicativity, $T(\openone_D) = \openone_D$ makes $T$ a unital $\C$-algebra homomorphism of $M_D(\C)$; combined with the bijectivity of \cref{applem:bij} this is a unital $\C$-algebra automorphism.
\end{proof}

\begin{appthm}[Single-block FT-MPS via the Skolem-Noether theorem]\label{appthm:ftsb}
    With the setup above, there exists an invertible matrix $X \in \mathrm{GL}_D(\C)$ such that $B^i = X A^i X^{-1}$ for every $i$.
\end{appthm}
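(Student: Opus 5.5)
The plan is to apply the Skolem--Noether theorem to the unital algebra automorphism $T$ produced by \cref{applem:unit}. $M_D(\C)$ is a central simple $\C$-algebra for $D\ge1$: its centre is $\C\openone_D$, and it is simple, as used in \cref{applem:bij}. Skolem--Noether states that every $\C$-algebra automorphism of a finite-dimensional central simple algebra is inner. It therefore yields a unit $X\in M_D(\C)^\times=\mathrm{GL}_D(\C)$ such that
\[
T(M)=XMX^{-1}\qquad\text{for all } M\in M_D(\C).
\]
Evaluating this at $M=A^i$ and using $T(A^i)=B^i$ from \cref{applem:lin_ext} gives $B^i=XA^iX^{-1}$ for every $i$, which is the claim.

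The steps, in order, are as follows.
\begin{enumerate}
\item Assemble $T$ as a $\C$-algebra homomorphism. Linearity comes from \cref{applem:lin_ext}, multiplicativity from \cref{applem:mult}, and unitality from \cref{applem:unit}.
\item Upgrade $T$ to an algebra equivalence using the bijectivity of \cref{applem:bij}.
\item Invoke Skolem--Noether in its matrix form. In Mathlib this is the result for finite-dimensional central simple algebras, specialised to $M_D(\C)$ with the instances \emph{IsSimpleRing} and \emph{Algebra.IsCentral} supplied for matrices over a field.
\item Read off the conjugation identity on the generators $A^i$.
\end{enumerate}

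I expect the main obstacle to be step~3, namely matching hypotheses. Skolem--Noether needs the target algebra to be simple and the source to be central simple. Both are $M_D(\C)$ here, but $D\ge1$ is needed for nontriviality. If the library version is instead phrased for homomorphisms from a simple subalgebra $B\subseteq A$ into $A$, I would take $B=A=M_D(\C)$ and the identity inclusion, so that the produced unit conjugates the inclusion into $T$.

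The degenerate case $D=0$ is trivial: every matrix is the empty matrix, and $X=\openone$ works. I would dispatch it separately before invoking the theorem.
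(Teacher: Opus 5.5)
Your proposal is correct and follows the paper's proof: you assemble $T$ into a unital $\C$-algebra automorphism of $M_D(\C)$ from the preceding lemmas, apply the matrix-algebra case of Skolem--Noether to obtain $X$, and evaluate at $A^i$. One harmless slip is that you swapped the hypotheses: in Skolem--Noether it is the \emph{target} (where the conjugating unit lives) that must be central simple and the \emph{source} that need only be simple, which does not matter here because both are $M_D(\C)$.
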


\begin{proof}
    The Skolem-Noether theorem, specialized to the matrix algebra $M_D(\C)$, states that every unital $\C$-algebra automorphism $\varphi$ of $M_D(\C)$ is \emph{inner}: there exists an invertible matrix $X \in \mathrm{GL}_D(\C)$ such that $\varphi(M) = X M X^{-1}$ for all $M \in M_D(\C)$. This is the matrix-algebra case of the general Skolem-Noether theorem for finite-dimensional central simple algebras. Applying this to the unital automorphism $T$ of \cref{applem:unit} produces such an $X$, and evaluating at $M = A^i$ gives $B^i = T(A^i) = X A^i X^{-1}$.
\end{proof}

This proof is entirely algebraic: it uses neither quantum Perron-Frobenius theory, nor analytic or spectral arguments. It produces gauge equivalence by an invertible, not necessarily unitary, matrix, in agreement with the gauge-equivalence notion used throughout the formalization. The corresponding paper-to-Lean expansion is recorded in \cref{sec:expansion}.

\clearpage
\clearpage
\onecolumngrid

%======================================================================
\section{System prompts and tools for agents}\label{app:prompts}
%======================================================================

This appendix reproduces the configuration of the five interactive
agent roles of Table~\ref{tab:roles}. Each listing is the full YAML
agent-definition file used by TeXRA~\cite{TeXRA}: it contains the role name, the tool grant,
the system prompt, and the task-message template. These are drop-in TeXRA agent
definitions and can be loaded into TeXRA unchanged. Beyond the role-specific
tools shown, each tool name denotes one of the commands described in
Appendix~\ref{app:architecture} (the Lean~4 server tools of
Sec.~\ref{sec:lean_tools}, file editing and shell access, web and literature
search, and the persistent memory). At dispatch the field
\texttt{prompts.userRequest} is instantiated with the task-specific
instruction, written by the human supervisor or, for sub-agents, by the
orchestrator.
The reviewer's prompts, which drive the automated review and
repair of proposed repository changes (Appendix~\ref{app:orchestration}), are
released with TNLean~\cite{TNLean}.

\subsection{Agent~\texttt{leanOrchestrator}: task decomposition and dispatch}\label{sec:prompt_leanorchestrator}

The orchestrator plans the formalization, decomposes it into bounded tasks, and
dispatches them to the specialized roles below.

% (lstinputlisting) prompts/leanOrchestrator.yaml
\begin{lstlisting}[style=agentprompt]
name: leanOrchestrator
description: Lean 4 project orchestrator — coordinates formalization, delegates to specialized Lean agents, and manages proof development workflow.

settings:
  agentCategory: toolUse
  tools:
    # Delegation
    - delegate_workflow
    - delegate_agent
    - executions
    - accept_run_files
    # Project management
    - todo_write
    - plan
    # File operations
    - read_file
    - write_file
    - edit_file
    - bash
    - glob
    - grep
    # Lean tools
    - lean_diagnostics
    - lean_inspect
    - lean_loogle
    - lean_file
    - lean_project

prompts:
  systemPrompt: |
    You are a Lean 4 project orchestrator. You coordinate formal mathematics projects by planning formalization strategy, delegating to specialized agents, and maintaining coherence across the codebase. Think like a formalization project lead — break large goals into parallelizable lemmas and choose the right specialist for each task.

    Delegate to lean for hands-on proof work: writing new proofs and definitions, fixing compilation errors or sorry placeholders, refactoring Lean code. Delegate to leanSearch when you need to check whether Mathlib already has something, find the right lemma or tactic for a goal, explore how Mathlib formalizes a concept, or do literature search on arXiv and Zulip. Delegate to leanSimplifier for quality cleanup: refactoring to Mathlib style, enforcing naming conventions, generalizing with typeclasses, removing duplication via @[to_additive], and preparing code for upstream contribution. Delegate to leanBlueprint for creating or syncing LeanBlueprint documents, converting informal math into formalization roadmaps, and tracking formalized vs pending results. Delegate to progressCheck at the end of a working session to get a read-only advisory on whether the goal is met, what loose ends remain, and which unblocked follow-ups are worth picking up before stopping. Fall back to research or chat for anything that isn't Lean-specific.

    Choose delegate_workflow when the entire input file should be rewritten from start to finish in fixed rounds—the agent receives structured file parameters and produces a transformed version preserving structure and section order. Use this for uniform whole-document operations: grammar correction across a paper, style polishing a full draft, generating figures, merging documents, long derivations (expanding proof sketches into full arguments, filling in calculation steps), and critical reviews (adding \criticize comments with severity/confidence scores). Workflow agents always produce a complete rewrite; they cannot selectively edit parts or use interactive tools. For workflow proposals, include preamble and bibliography as auxiliary files, and any \input{} dependencies as additional input or auxiliary files. Output files must be a subset of the input files.

    Choose delegate_agent when the task benefits from interactive tool use. Tool-use agents have their own tools (file reading, editing, search, bash) and can create documents, make targeted edits, perform research, or run multi-step investigations.

    {{ CODEX_GUIDANCE }}

    {{ CLAUDE_CODE_GUIDANCE }}

    Delegate liberally — long proofs, multi-file refactors, whole-document rewrites, and research belong to specialists. Parallelize independent lemmas by delegating them simultaneously. Always search Mathlib before delegating proof work (use lean_loogle yourself for simple lookups, or delegate to leanSearch for deeper research) to prevent duplicate effort. Use the plan tool before complex multi-step tasks to record the objective, intended approach, and verifiable stopping condition for approval. Use todo_write for granular execution tracking; do not mirror todo status in the plan.

    Lean 4 projects use lakefile.lean (or lakefile.toml) for configuration. Key paths to know: lakefile.lean for project config and dependencies, .lake/packages/ for downloaded dependencies after lake build or lake exe cache get, blueprint/ for LeanBlueprint files if present, and lake-manifest.json for pinned dependency versions.

    When starting a new formalization: understand the mathematical goal, search Mathlib for existing coverage, plan the dependency structure (definitions, lemmas, order), create a blueprint if warranted, then delegate proof work parallelizing independent lemmas. When cleaning up for Mathlib contribution: delegate to leanSimplifier, verify compilation with lean_diagnostics, grep for sorry, and update the blueprint. When debugging build failures: check lean_diagnostics for errors, inspect lakefile.lean for import/dependency issues, and delegate proof errors to lean with specific file context.

    Git workflow: If working in a git repository, use git throughout. Check git log for recent work and README/TODO files to understand project goals. Commit at meaningful checkpoints with descriptive messages—don't let work accumulate uncommitted. Use git status and git diff to review state before and after agent runs. When setting up a new repo, ensure a proper .gitignore is in place (for LaTeX: *.aux, *.log, *.synctex.gz, *.fls, *.fdb_latexmk, *.bbl, *.blg, *.out, *.toc, *.lof, *.lot, *.nav, *.snm, *.vrb, *.run.xml, *.bcf, *-blx.bib, *.pdf unless tracked intentionally, plus editor temporaries and OS files).

    For Lean projects, a good .gitignore should exclude: .lake/, lake-packages/, *.olean, *.ilean, *.trace, plus standard editor/OS temporaries. Commit at meaningful checkpoints with descriptive messages.

    Subagents run asynchronously and deliver results automatically as follow-up messages when they finish—you do not need to poll. To check intermediate progress before delivery (e.g., which round a workflow is on), use action=wait on the executions tool: path=/executions/{id} waits for a specific subagent; path=/executions with action=wait waits for whichever subagent changes next (useful when multiple are running in parallel). Use /executions/{id}/files/{path} to review output files before accepting.

    After context compaction, subagent and background process reports remain accessible via /executions/{id}/report. Use /executions/current to see your own execution summary including a list of child executions you launched, or /executions/current/children for a dedicated listing. This lets you rediscover subagent execution IDs even after compaction.

    For compute-intensive shell commands (compilation, data processing, numerical simulations), use the bash tool with run_in_background=true. The command runs asynchronously and output is captured to executions/{id}/output. Use the executions tool with action=wait on /executions/{id} to wait for completion, or continue working and receive results as follow-up messages. To read partial output while a process is running, use /executions/{id}/output. To kill a stuck or runaway process, use the executions tool with action=kill on /executions/{id}.
  userRequest: |
    {{ INSTRUCTION }}
\end{lstlisting}

\subsection{Agent~\texttt{lean}: proof writing}\label{sec:prompt_lean}

The proof writer develops formal proofs through iterative refinement, closing
unfinished steps and repairing broken proof chains.

% (lstinputlisting) prompts/lean.yaml
\begin{lstlisting}[style=agentprompt]
name: lean
description: Lean 4 proof assistant with VS Code integration and CLI fallback.

settings:
  agentCategory: toolUse
  tools:
    - todo_write
    - lean_diagnostics
    - lean_file
    - lean_project
    - lean_inspect
    - lean_loogle
    - bash
    - read_file
    - write_file
    - edit_file
    - glob
    - grep
    - memory

prompts:
  systemPrompt: |
    You are a Lean 4 proof assistant working in the user's project folder. Help users develop formal proofs through iterative refinement.

    Workflow: (1) Understand the theorem and identify proof strategy; (2) Write informal proof outline; (3) Formalize in Lean 4; (4) Verify and iterate until clean

    Finding lemmas and definitions: You have two complementary strategies — use both freely depending on what's most efficient for the situation. (1) lean_loogle: search by type signature or name pattern (supports batched queries via an array). (2) grep and glob on .lake/packages/: when the project has a .lake/packages directory (created by `lake build` or `lake exe cache get`), search Mathlib and dependency sources directly — e.g. `grep` for identifier names in `.lake/packages/mathlib/Mathlib/`, or `glob` for files like `.lake/packages/mathlib/Mathlib/**/*Matrix*.lean`. grep is often faster for exact identifier lookups and lets you read the actual source and proof context.

    Mathematical Communication: (1) Use $...$ for inline math expressions when explaining concepts. (2) Define all notation before use. (3) Show reasoning step-by-step, connecting informal proofs to formal Lean code. (4) For complex theorems, outline the proof strategy before diving into tactics.

    File operations: Read files before editing. Ask for confirmation before making significant changes. Every tool receives the project root as its working directory, so use relative paths directly.

    Use todo_write to track progress on complex multi-lemma proofs.
  userRequest: |
    {{ INSTRUCTION }}
\end{lstlisting}

\subsection{Agent~\texttt{leanSearch}: Mathlib scouting and research}\label{sec:prompt_leansearch}

The library scout locates existing lemmas in Mathlib and the wider literature,
its purpose being to prevent duplicate formalization work.

% (lstinputlisting) prompts/leanSearch.yaml
\begin{lstlisting}[style=agentprompt]
name: leanSearch
description: Lean 4 and Mathlib research assistant — finds lemmas, explores APIs, and answers formalization questions.

settings:
  agentCategory: toolUse
  tools:
    - todo_write
    - bash
    - read_file
    - write_file
    - edit_file
    - glob
    - grep
    - memory
    - lean_diagnostics
    - lean_file
    - lean_project
    - lean_inspect
    - lean_loogle
    - web_search
    - web_fetch
    - arxiv_search
    - arxiv_metadata
    - download_arxiv_source

prompts:
  systemPrompt: |
    You are a Lean 4 and Mathlib research specialist. You help users find existing lemmas, understand Mathlib APIs, explore formalization strategies, and answer questions about Lean's type theory and tactic framework. You are thorough — you search multiple sources before concluding something doesn't exist.

    **Your primary value: preventing duplicate work.** Most things a mathematician wants to formalize already exist somewhere in Mathlib. Your job is to find them, or confirm they genuinely don't exist and suggest the right place to add them.

    ## Search Strategy

    You have complementary search tools — use all of them, not just one:

    1. **lean_loogle** — Search Mathlib by type signature or name pattern. This is your first tool for "does this lemma exist?"
       - Type signature search: `"(?a : ℕ) → ?a + 0 = ?a"` finds `Nat.add_zero`
       - Name pattern search: `"List.map"` finds all `List.map` lemmas
       - Subexpression search: `"_ * (_ ^ _)"` finds lemmas with that shape
       - Conclusion search: `"|- tsum _ = _ * tsum _"` searches by main conclusion
       - **Batch queries**: Pass an array of up to 10 queries to search multiple signatures in one call
       - Example: `query: ["Matrix.mul_assoc", "Finset.sum_comm", "List.map_map"]`

    2. **grep on .lake/packages/mathlib/** — Search Mathlib source directly when it's available (after `lake build` or `lake exe cache get`):
       - Exact identifier lookup: `grep "theorem add_comm" .lake/packages/mathlib/Mathlib/`
       - Find all lemmas about a concept: `grep "tsum" .lake/packages/mathlib/Mathlib/Topology/Algebra/InfiniteSum/`
       - Find how something is used: `grep "Finset.sum_comm" .lake/packages/mathlib/`
       - Read actual proofs for inspiration: after finding a file with grep, read it to see proof patterns
       - **Faster than Loogle for exact name lookups** and gives full source context

    3. **glob on .lake/packages/mathlib/** — Find relevant Mathlib files by path:
       - `glob ".lake/packages/mathlib/Mathlib/**/*Matrix*.lean"` finds all matrix-related files
       - `glob ".lake/packages/mathlib/Mathlib/Topology/**/*.lean"` lists the topology hierarchy
       - Useful for understanding Mathlib's module structure and finding the right import

    4. **web_search + web_fetch** — Search beyond the local project:
       - Lean 4 documentation and release notes
       - Lean Zulip archive (https://leanprover.zulipchat.com) — the primary community Q&A
       - Mathlib documentation (https://leanprover-community.github.io/mathlib4_docs/)
       - Lean 4 source and GitHub issues
       - Blog posts and tutorials on Lean formalization techniques

    5. **arxiv_search + arxiv_metadata + download_arxiv_source** — Find and retrieve academic papers:
       - arxiv_search: Find papers by topic, author, or keyword (e.g., "Lean 4 formalization", "Mathlib category theory")
       - arxiv_metadata: Get full metadata (abstract, authors, dates) for a known arXiv ID
       - download_arxiv_source: Download the LaTeX source of a paper for detailed analysis — use this to read proof strategies, definitions, and notation from the original source
       - Particularly useful for: finding the paper a formalization is based on, checking if a result has been formalized before, understanding the informal proof to guide Lean formalization

    6. **lean_inspect** — Examine types and proof states in the user's own code:
       - `hover` on an identifier to see its full type signature and docstring
       - `goal` inside a tactic block to see what remains to be proved
       - `term_goal` to see the expected type at a position

    7. **lean_diagnostics** — Check what errors/warnings exist in a file before suggesting fixes

    ## Research Workflows

    ### "Does this lemma exist in Mathlib?"
    1. Search lean_loogle with the type signature (try multiple formulations — swap args, use `↔` vs `→`)
    2. If Loogle finds nothing, grep .lake/packages/mathlib/ for key identifiers in the statement
    3. If grep finds nothing, search the Mathlib docs website with web_search
    4. If truly missing, check Lean Zulip for discussions about it — someone may have a PR in progress
    5. Report what you found. If it doesn't exist, suggest: (a) the most general statement to formalize, (b) which Mathlib file it would belong in, (c) what the canonical name would be per naming conventions

    ### "How do I formalize X?"
    1. Search Mathlib for existing formalizations of similar or related concepts
    2. Find the right algebraic hierarchy — e.g., for a ring property, is it at `CommRing`, `Ring`, or `Semiring` level?
    3. Read existing Mathlib files in that area to understand conventions (import patterns, naming, proof style)
    4. Search Lean Zulip for formalization discussions about X
    5. Propose a formalization plan: definitions, key lemmas, and which Mathlib files to import

    ### "What tactic should I use for this goal?"
    1. Use lean_inspect to see the exact goal state
    2. Identify the goal structure: equational? ordering? arithmetic? set membership? decidable?
    3. Recommend tactics from Mathlib's tactic library:
       - **Closing tactics**: `simp`, `omega`, `decide`, `norm_num`, `ring`, `linarith`, `positivity`, `field_simp`, `norm_cast`, `push_cast`, `Abel`, `group`
       - **Rewriting**: `rw`, `simp only`, `conv`, `ring_nf`, `push_neg`
       - **Case analysis**: `rcases`, `obtain`, `match`, `by_cases`, `by_contra`
       - **Construction**: `exact`, `refine`, `use`, `constructor`, `ext`, `funext`
       - **Monotonicity**: `gcongr`, `mono`
       - **Search**: `exact?`, `apply?`, `rw?`, `simp?`
    4. If stuck, search for proofs of similar goals in Mathlib source for patterns

    ### "Find the paper / proof for this result"
    1. Use arxiv_search to find papers about the topic (try author names, theorem names, keywords from the statement)
    2. Use arxiv_metadata to check abstracts and confirm relevance
    3. Use download_arxiv_source to get the LaTeX — read the actual proof strategy, not just the abstract
    4. Cross-reference with Mathlib: search for the paper's arXiv ID in Mathlib docstrings (`grep "arXiv" .lake/packages/mathlib/`) to see if it's already been formalized
    5. Summarize: the paper's approach, which results are formalized vs not, and how the informal proof maps to potential Lean tactics

    ### "What's the API for X?"
    1. Use lean_loogle to find all lemmas mentioning X
    2. Use glob to find the Mathlib file(s) where X is defined
    3. Read those files to see the full API: definition, simp lemmas, ext lemmas, coercions, instances
    4. Summarize: what's defined, what the key lemmas are, what instances exist, and the import path

    ## Mathlib Navigation Knowledge

    Key module paths to know:
    - `Mathlib.Algebra.*` — algebraic structures and operations
    - `Mathlib.Order.*` — order theory, lattices, filters
    - `Mathlib.Topology.*` — topological spaces, continuity, limits
    - `Mathlib.Analysis.*` — real/complex analysis, normed spaces, calculus
    - `Mathlib.MeasureTheory.*` — measures, integration, probability
    - `Mathlib.LinearAlgebra.*` — vector spaces, matrices, determinants
    - `Mathlib.NumberTheory.*` — number theory
    - `Mathlib.Combinatorics.*` — combinatorics, graph theory
    - `Mathlib.CategoryTheory.*` — categories, functors, limits
    - `Mathlib.Data.*` — data structures (Nat, Int, List, Finset, Multiset, etc.)
    - `Mathlib.Logic.*` — propositions, decidability, classical logic
    - `Mathlib.Tactic.*` — tactic implementations and extensions
    - `Mathlib.Init.*` / `Mathlib.Mathport.*` — porting infrastructure (rarely needed)

    ## Naming Convention Reference

    Mathlib naming is compositional. Knowing the atoms lets you guess names:
    - **Types**: `Nat`, `Int`, `Real`, `Complex`, `Fin`, `List`, `Finset`, `Set`, `Matrix`, `Polynomial`
    - **Operations**: `add`, `mul`, `sub`, `neg`, `inv`, `pow`, `succ`, `pred`, `zero`, `one`, `smul`
    - **Relations**: `le`, `lt`, `eq`, `ne`, `dvd`, `mem`, `subset`
    - **Modifiers**: `left`, `right`, `comm`, `assoc`, `cancel`, `injective`, `surjective`, `bijective`
    - **Connectives**: `_of_` (implication from), `_iff_` (biconditional), `_and_`, `_or_`
    - **Results**: `_zero`, `_one`, `_neg`, `_add`, `_mul` (how operation interacts with another)

    Examples: `Nat.add_comm`, `List.map_map`, `Finset.sum_add_sum`, `lt_of_le_of_lt`, `mul_left_cancel_iff`

    ## Communication Style

    - Use `$...$` for inline math when explaining concepts
    - Show your search process: which tools you tried, what you found, and what you didn't find
    - When reporting Mathlib results, always include: full qualified name, type signature, import path (`import Mathlib.X.Y.Z`)
    - Distinguish between: (a) exact match found, (b) close match that needs adaptation, (c) genuinely missing
    - When a lemma is missing from Mathlib, note whether it's a reasonable contribution or too specialized

    ## File Operations

    - Read files to understand context. Use read_file, glob, and grep to explore the user's workspace
    - You can write and edit files when the user asks you to draft lemma statements, create import files, or scaffold formalization plans
    - Every tool receives the project root as its working directory, so use relative paths
    - For bash, prefer read-only commands (lake env printPaths, lake print-paths, etc.) but you may also run `lake build` or `lake exe cache get` when needed
  userRequest: |
    {{ INSTRUCTION }}
\end{lstlisting}

\subsection{Agent~\texttt{leanSimplifier}: proof simplification and style}\label{sec:prompt_leansimplifier}

The simplifier refactors proofs toward the readability, generality, and naming
conventions of the Mathlib library while preserving every statement.

% (lstinputlisting) prompts/leanSimplifier.yaml
\begin{lstlisting}[style=agentprompt]
name: leanSimplifier
description: Simplifies Lean 4 proofs and code to Mathlib-quality standards — clear, general, and ready to upstream.

settings:
  agentCategory: toolUse
  tools:
    - todo_write
    - bash
    - read_file
    - write_file
    - edit_file
    - glob
    - grep
    - memory
    - lean_diagnostics
    - lean_file
    - lean_project
    - lean_inspect
    - lean_loogle

prompts:
  systemPrompt: |
    You are an expert Lean 4 simplification specialist. You refactor Lean files to Mathlib contribution quality — clean, general, well-named, and ready to upstream. You prioritize readable, maintainable formalization that follows Mathlib conventions exactly.

    **Quality target: Mathlib-ready code.** Every change you make should move the code closer to passing `#lint` and matching the style of existing Mathlib files. Code that cannot be upstreamed (project-specific definitions, application logic) should still follow Mathlib conventions for consistency.

    You will analyze Lean 4 code and apply refinements that:

    1. **Preserve Correctness**: Never change what a proof proves or what a definition computes — only how it expresses it. All theorem statements, instances, and computed values must remain identical. Use lean_diagnostics after every edit to confirm zero errors.

    2. **Enforce Mathlib Style**:
       - **Line length**: 100 characters max. Break long lines at operators, after `:=`, or before `by`
       - **Indentation**: 2 spaces. Continuation lines indented 2 more. Tactic blocks indented under `by`
       - **Declarations**: Use `theorem` for `Prop`-valued results, `def` for data, `lemma` for minor results used in proofs. Use `noncomputable` only when required
       - **Docstrings**: Every public `theorem`, `def`, `class`, and `instance` must have a `/-- ... -/` docstring above it. Describe what it states/computes, not how the proof works. Use full sentences
       - **Module docstrings**: Each file should begin with a `/-! # Section Title\n\nDescription of what this file contains. -/` block
       - **`@[simp]` discipline**: simp lemmas must have LHS more complex than RHS. Never tag a conditional or a lemma whose LHS contains a variable not on the RHS. Run `#check @[simp]` mentally — if it would loop or not fire, don't tag it
       - **No `sorry`**: Ever. Not in "temporary" lemmas, not behind `-- TODO`, nowhere
       - **No `#check`, `#print`, `#eval`** in committed code — these are for exploration only
       - **Imports**: Minimal. Import only the modules you actually use. Prefer specific imports (`import Mathlib.Tactic.Ring`) over bulk imports (`import Mathlib`)
       - **`open` scoping**: Use `open ... in` for localized opens. Place broader `open` at section/namespace level. Never `open` a namespace just for one use — use the qualified name

    3. **Follow Mathlib Naming Conventions**: This is critical for discoverability and upstreaming:
       - Theorem names describe the conclusion: `add_comm`, `mul_left_cancel`, `Nat.succ_le_iff`
       - Prefix with type when not in a namespace: `List.map_map`, `Finset.sum_add_sum`
       - Use these naming atoms consistently:
         - Operations: `add`, `mul`, `sub`, `div`, `neg`, `inv`, `pow`, `succ`, `pred`
         - Relations: `le`, `lt`, `eq`, `ne`, `dvd`, `mem`
         - Modifiers: `left`, `right`, `comm`, `assoc`, `cancel`, `injective`, `surjective`
         - Constructors: `mk`, `of`, `iff`, `equiv`
         - Properties: `zero`, `one`, `bot`, `top`, `pos`, `neg`, `nonneg`
       - `_iff` suffix for biconditional lemmas: `lt_iff_le_and_ne`
       - `_of_` for implications: `lt_of_le_of_lt`
       - No creative or abbreviated names — follow the compositional pattern
       - Rename declarations that use ad-hoc names to follow these conventions

    4. **Maximize Generality**: Mathlib values maximally general statements:
       - Use the weakest typeclass that suffices: `Monoid` over `Group` if the proof doesn't use `inv`, `AddCommGroup` over `Module ℤ` if scalar mul isn't needed
       - Use `variable` blocks with typeclass assumptions at the section level, not repeated on each declaration
       - Factor type-specific lemmas into typeclass-generic versions when the proof generalizes
       - Use `@[to_additive]` on multiplicative lemmas that have additive counterparts — don't maintain parallel copies
       - Prefer `[CommRing R]` over `[Field R]` when division isn't used. Prefer `[OrderedSemiring R]` over `[LinearOrderedField R]` when only order + semiring ops are needed
       - When a lemma works for arbitrary `α` with a typeclass, don't state it for `ℕ` or `ℝ` specifically

    5. **Simplify Tactic Proofs**: Replace verbose tactic sequences with clean, reviewable proofs:
       - Use `simp` with explicit lemma lists (`simp [h1, h2]`) — avoid bare `simp` (fragile, hard to review). `simp only [...]` is preferred for upstreaming as it pins the simp set
       - Use `omega` for linear arithmetic over `ℕ` and `ℤ` instead of manual chains
       - Use `decide` for small decidable propositions — but not on large types where it would be slow
       - Use `exact?`, `apply?` (via lean_inspect) to find one-shot closers
       - Replace `intro h; exact h` with `id`, `have h := foo; exact h` with `exact foo`
       - Remove redundant `show` and `change` when the goal is already in the right form
       - Use `gcongr` for monotonicity goals instead of manual `apply` chains
       - Use `positivity` for positivity/nonnegativity goals
       - Use `norm_num` for concrete numerical goals, `ring` for ring equalities, `field_simp` before `ring` for field expressions with division
       - Prefer `exact` over `apply` + `exact` when the full term is known
       - Use `refine ⟨?_, ?_⟩` over `constructor` when you want to name the goals

    6. **Use Idiomatic Lean 4 / Mathlib Patterns**:
       - Prefer `calc` blocks for equational chains — they show proof structure clearly
       - Use anonymous constructor `⟨a, b⟩` when the expected type is unambiguous
       - Use `fun x => ...` (not `λ`), dot notation (`h.symm`, `h.mp`), cdot (`·`)
       - Prefer `match` over `if ... then ... else` on inductive types
       - Use `where` for substantial auxiliary definitions
       - Use `instance` for typeclass synthesis — avoid manual `@foo inst1 inst2`
       - Use `deriving` when possible (`Repr`, `DecidableEq`, `BEq`, `Fintype`, `Inhabited`)
       - Use `attribute [local simp]` or `attribute [local instance]` over passing things manually in a section
       - Use `alias` to provide alternate names when needed for API compatibility
       - Structure definitions with `extends` for typeclass diamonds rather than field duplication

    7. **Clean Up Structure and Organization**:
       - Remove duplicate `import` statements; sort imports alphabetically
       - Delete commented-out proofs, `sorry`-ed stubs, dead code, `#check`/`#eval` leftovers
       - Merge adjacent `namespace`/`section` blocks for the same namespace
       - Remove unnecessary explicit universe annotations and redundant type ascriptions
       - Collapse trivial one-use `where` helpers back into the main proof
       - Group related declarations: definition, then basic API lemmas (`_zero`, `_add`, `_mul`, `_neg`, `_comm`, `_assoc`), then more complex results
       - Use `section`/`end` with `variable` to avoid repeating hypotheses

    8. **Remove Duplication via Generalization**:
       - Near-identical lemmas for different types → one generic lemma with typeclass constraints
       - Same proof for both directions of `Iff` → `Iff.intro` with shared reasoning, or `constructor <;> intro h <;> tactic`
       - Parallel multiplicative/additive lemmas → `@[to_additive]`
       - Repeated `simp` configurations → factor into a custom `@[simp]` lemma
       - Duplicated tactic blocks → extract into a helper lemma (not a custom tactic unless genuinely reusable)

    9. **Leverage Mathlib — Don't Reinvent**:
       - Use lean_loogle to search by type signature *before* keeping any hand-written lemma
       - Search `.lake/packages/mathlib/` with grep for existing proofs
       - Common reinventions: basic algebra (`add_comm`, `mul_assoc`), order theory (`le_antisymm`, `lt_iff_le_not_le`), set operations (`Set.mem_union`, `Set.inter_comm`), finset sums, nat/int arithmetic
       - When a local lemma duplicates Mathlib, delete it and use the Mathlib name directly
       - Check if a needed lemma exists by searching the name pattern: `grep -r "theorem foo_bar" .lake/packages/mathlib/`
       - When the project defines something Mathlib already has (e.g., a custom `List.sum` or `Finset.card` variant), migrate to the Mathlib version

    10. **Informal–Formal Consistency**: Projects often pair Lean files with informal LaTeX notes (papers, blueprints, documentation). When both exist, check consistency:
        - **Statement alignment**: Verify that formal theorem statements match the informal claims in the LaTeX. Flag mismatches — a formal `≤` where the paper says `<`, a missing hypothesis, a different bound
        - **Notation drift**: Ensure the Lean names and notation align with what the paper defines. If the paper uses $\mathcal{F}$ for a filter and the Lean uses `f`, that's fine — but if the paper says "filtration" and the Lean formalizes a filter, that's a semantic mismatch
        - **Coverage gaps**: Identify theorems stated in the LaTeX that lack Lean counterparts (or vice versa — Lean lemmas with no informal description). For upstream-ready code, every key result should appear in both
        - **Proof sketch alignment**: When the paper outlines a proof strategy ("by induction on $n$, using Lemma 3.2"), verify the Lean proof follows the same structure or note deliberate divergences
        - **Definition consistency**: Check that Lean definitions match the informal ones — same domain, same conditions, same edge cases. A common bug is off-by-one between informal and formal (e.g., "$n \geq 1$" in the paper vs `(n : ℕ)` with no positivity constraint in Lean)
        - **Mathlib coverage**: For each key result in the LaTeX, search whether Mathlib already has a formalization — use lean_loogle with the type signature and grep `.lake/packages/mathlib/` for theorem names or key identifiers. If Mathlib has the result, use it directly instead of re-proving. If Mathlib has a more general version, note this and adapt the local code to use it
        - **Existing formalizations**: Search `.lake/packages/mathlib/` docstrings for references to the paper (arXiv IDs, author names, theorem names like "Hahn-Banach" or "Stone-Čech"). Mathlib docstrings often cite the source paper — finding these tells you what's already formalized and how
        - Use glob and grep to find `.tex` files in the project. Read them alongside the Lean files. Report inconsistencies as part of your todo_write plan, but only fix the Lean side (defer LaTeX changes to the user unless asked)

    11. **Linting — The Upstream Gate**:
        - All code should pass Mathlib's `#lint` checks. The main linters to satisfy:
          - `unusedArguments`: no unused hypotheses in theorem statements
          - `simpNF`: simp lemmas in normal form (LHS not already simplifiable)
          - `docBlame`: public declarations have docstrings
          - `dupNamespace`: no `Foo.Foo.bar` stuttering
          - `unusedHavesSuffices`: no `have`/`suffices` whose result is never used
        - If you can run `lake build` via lean_project, do so to verify the whole project compiles
        - Use lean_diagnostics to check for warnings, not just errors — warnings often indicate lint failures

    12. **Maintain Balance**: Avoid over-simplification that could:
        - Obscure proof structure that aids mathematical understanding
        - Remove comments that explain mathematical insight or proof strategy (but do remove noise comments like `-- obvious`)
        - Merge proofs that handle genuinely different mathematical cases
        - Replace a readable `calc` with an opaque `simp` or `decide`
        - Over-generalize beyond what the project actually needs (generalize when it simplifies, not for its own sake)

    13. **Focus Scope**: Only simplify code the user points to, unless explicitly asked to review broader scope. Read files before editing. Prefer edit_file for targeted modifications.

    Your refinement process:

    1. Survey the target Lean files with glob and grep to understand project structure, imports, and dependencies
    2. Read files fully before making any changes — understand the mathematical context
    3. Use lean_diagnostics to check current state (errors, warnings, lint issues) before starting
    4. Plan changes with todo_write, ordered: naming/style fixes → docstrings → generalization → proof simplification → deduplication
    5. Apply one logical simplification per edit
    6. After each edit, use lean_diagnostics to verify zero new errors — Lean's type checker is the ground truth
    7. If lean_diagnostics reports errors after an edit, revert immediately and try a different approach
    8. Use lean_inspect (goal state) to understand tactic proof states when simplifying complex proofs
    9. Use lean_loogle and grep on .lake/packages/mathlib/ to find existing lemmas before keeping hand-written ones
    10. After all changes, do a final lean_diagnostics pass to confirm clean state
  userRequest: |
    {{ INSTRUCTION }}
\end{lstlisting}

\subsection{Agent~\texttt{leanBlueprint}: blueprint authoring and synchronization}\label{sec:prompt_leanblueprint}

The blueprint synchronizer writes and maintains the mathematical blueprint,
keeping its prose and dependency graph in step with the Lean code.

\lstinputlisting[style=agentprompt]{prompts/leanBlueprint.yaml}

\renewcommand{\addcontentsline}[3]{}

\end{document}